\newcommand{\subparagraph}{}
\titlespacing*{\section}{15pt}{1.2\baselineskip}{0.9\baselineskip}
\newcommand{\myhash}{%
  {\settoheight{\dimen0}{C}\kern-.05em\, \resizebox{!}{\dimen0}{\raisebox{\depth}{\#}}}}
\def\betam{{\boldsymbol{\beta}}}
\def\gammam{{\boldsymbol{\gamma}}}
\newcommand{\vc}[1]{\mathbf{#1}}
\newcommand{\mt}[1]{\mathbf{#1}}
\newcommand{\SigmaEmp}{\widehat{\Sigmam}_\bfy} %
\newcommand{\Amatrix}{\bA}  %
\newcommand{\Acol}{\mt{A}} %
\newcommand{\avec}{\bfa} %
\newcommand{\bxtrue}{\gammam^\circ} %
\newcommand{\xtrue}{\gamma^\circ} %
\newcommand{\xsome}{\gammam} %
\newcommand{\bxnnls}{\gammam^*} %
\newcommand{\dimPilots}{L} %
\newcommand{\dimParam}{{K_\text{tot}}} %
\newcommand{\ellp}[1]{{#1}} %
\def\mindex#1{\index{#1}}
\def\sq{\hbox{\rlap{$\sqcap$}$\sqcup$}}
\def\qed{\ifmmode\sq\else{\unskip\nobreak\hfil
\penalty50\hskip1em\null\nobreak\hfil\sq
\parfillskip=0pt\finalhyphendemerits=0\endgraf}\fi\medskip}
\long\def\defbox#1{\framebox[.9\hsize][c]{\parbox{.85\hsize}{%
\parindent=0pt
\baselineskip=12pt plus .1pt      %
\parskip=6pt plus 1.5pt minus 1pt %
 #1}}}
\long\def\beginbox#1\endbox{\subsection*{}%
\hbox{\hspace{.05\hsize}\defbox{\medskip#1\bigskip}}%
\subsection*{}}
\def\endbox{}
\def\diag{{\text{diag}}}
\def\tr{\mathsf{tr}}
\def\supp{{\rm supp\,}}
\newsavebox{\junk}
\savebox{\junk}[1.6mm]{\hbox{$|\!|\!|$}}
\def\argmin{\mathop{\rm arg\, min}}
\def\A{{\sf A}}
\newcommand{\field}[1]{\mathbb{#1}}
\def\Re{\field{R}}
\def\A{\field{A}}
\def\bA{{\mathbb A}}
\def\bC{{\mathbb C}}
\def\bE{{\mathbb E}}
\def\bR{{\mathbb R}}
\def\bfA{{\bf A}}
\def\bfB{{\bf B}}
\def\bfG{{\bf G}}
\def\bfH{{\bf H}}
\def\bfI{{\bf I}}
\def\bfU{{\bf U}}
\def\bfX{{\bf X}}
\def\bfY{{\bf Y}}
\def\bfZ{{\bf Z}}
\def\bfa{{\bf a}}
\def\bfd{{\bf d}}
\def\bfe{{\bf e}}
\def\bfh{{\bf h}}
\def\bfx{{\bf x}}
\def\bfy{{\bf y}}
\def\bfz{{\bf z}}
\def\sfF{{\sf F}}
\def\sfH{{\sf H}}
\def\bfmath#1{{\mathchoice{\mbox{\boldmath$#1$}}%
{\mbox{\boldmath$#1$}}%
{\mbox{\boldmath$\scriptstyle#1$}}%
{\mbox{\boldmath$\scriptscriptstyle#1$}}}}
\def\bfmY{\bfmath{Y}}
\def\bfmhhaY{\bfmath{\hhaY}} %
\def\bfmhhaY{\hbox to 0pt{$\widehat{\bfmY}$\hss}\widehat{\phantom{\raise 1.25pt\hbox{$\bfmY$}}}}
\def\til={{\widetilde =}}
\def\clA{{\cal A}}
\def\clC{{\cal C}}
\def\clK{{\cal K}}
\def\clN{{\cal N}}
\def\clQ{{\cal Q}}
\def\clS{{\cal S}}
 \def\FRAC#1#2#3{\genfrac{}{}{}{#1}{#2}{#3}}
\def\ddtp{{\mathchoice{\FRAC{1}{d^{\hbox to 2pt{\rm\tiny +\hss}}}{dt}}%
{\FRAC{1}{d^{\hbox to 2pt{\rm\tiny +\hss}}}{dt}}%
{\FRAC{3}{d^{\hbox to 2pt{\rm\tiny +\hss}}}{dt}}%
{\FRAC{3}{d^{\hbox to 2pt{\rm\tiny +\hss}}}{dt}}}}
\def\average#1,#2,{{1\over #2} \sum_{#1}^{#2}}
\def\eye(#1){{\bf(#1)}\quad}
\newtheorem{theorem}{{\bf Theorem}}
\newtheorem{corollary}{Corollary}
\newtheorem{definition}{{\bf Definition}}
\newtheorem{remark}{{\bf Remark}}
\def\eq#1/{(\ref{e:#1})}
\newcommand\figref{Figure~\ref}
\newcommand{\beqn}[1]{\notes{#1}%
\begin{eqnarray} \elabel{#1}}
\newcommand{\eeqn}{\end{eqnarray} }
\newcommand{\beq}{\begin{equation}}
\newcommand{\eeq}{\end{equation}}
\def\bdes{\begin{description}}
\def\edes{\end{description}}
\newcounter{rmnum}
\newcounter{anum}
\def\ass(#1:#2){(#1\ref{#1:#2})}
\def\ritem#1{
\item[{\sf \ass(\current_model:#1)}]
}
\newenvironment{recall-ass}[1]{%
\begin{description}
\def\current_model{#1}}{
\end{description}
}
\def\herm{{\sfH}}
\def\snr{{\mathsf{snr}}}
\def\cg{{\clC\clN}} 
\newcommand{\normd}[1]{{\left\vert\kern-0.25ex\left\vert\kern-0.25ex\left\vert #1 
    \right\vert\kern-0.25ex\right\vert\kern-0.25ex\right\vert}}
\long\def\comment#1{}
\newcommand{\CC}{\mathbb{C}}
\newcommand{\PP}{\mathbb{P}}
\newcommand{\RR}{\mathbb{R}}
\newcommand{\EE}{\mathbb{E}}
\newcommand{\av}{{\bf a}}
\newcommand{\bv}{{\bf b}}
\newcommand{\cv}{{\bf c}}
\newcommand{\dv}{{\bf d}}
\newcommand{\gv}{{\bf g}}
\newcommand{\hv}{{\bf h}}
\newcommand{\rv}{{\bf r}}
\newcommand{\tv}{{\bf t}}
\newcommand{\uv}{{\bf u}}
\newcommand{\wv}{{\bf w}}
\newcommand{\vv}{{\bf v}}
\newcommand{\xv}{{\bf x}}
\newcommand{\yv}{{\bf y}}
\newcommand{\zv}{{\bf z}}
\newcommand{\Am}{{\bf A}}
\newcommand{\Bm}{{\bf B}}
\newcommand{\Gm}{{\bf G}}
\newcommand{\Hm}{{\bf H}}
\newcommand{\Id}{{\bf I}}
\newcommand{\Qm}{{\bf Q}}
\newcommand{\Rm}{{\bf R}}
\newcommand{\Um}{{\bf U}}
\newcommand{\Xm}{{\bf X}}
\newcommand{\Ym}{{\bf Y}}
\newcommand{\Zm}{{\bf Z}}
\newcommand{\Cc}{{\cal C}}
\newcommand{\Hc}{{\cal H}}
\newcommand{\Kc}{{\cal K}}
\newcommand{\Lc}{{\cal L}}
\newcommand{\Nc}{{\cal N}}
\newcommand{\Sc}{{\cal S}}
\newcommand{\Xc}{{\cal X}}
\newcommand{\gammav}{\hbox{\boldmath$\gamma$}}
\newcommand{\rhov}{\hbox{\boldmath$\rho$}}
\newcommand{\Gammam}{\boldsymbol{\Gamma}}
\newcommand{\Deltam}{\boldsymbol{\Delta}}
\newcommand{\Sigmam}{\boldsymbol{\Sigma}}
\newcommand{\Phim}{\boldsymbol{\Phi}}
\newcommand{\Xim}{\boldsymbol{\Xi}}
\newcommand{\epsilonf}{\boldsymbol{\mathfrak{e}}}
\newcommand{\trace}{{\hbox{tr}}}
\renewcommand{\Re}{{\rm Re}}
\renewcommand{\Im}{{\rm Im}}
\newcommand{\trasp}{{\sf T}}
\newcommand{\transp}{{\sf T}}
\renewcommand{\vec}{{\rm vec}}
\title{Non-Bayesian Activity Detection, Large-Scale Fading Coefficient Estimation, and Unsourced Random Access
with a Massive MIMO Receiver}
\author{Alexander Fengler, Saeid Haghighatshoar,  Peter Jung, Giuseppe Caire
\thanks{The authors are with the Communications and Information Theory Group, Technische Universit\"{a}t Berlin (\{fengler, saeid.haghighatshoar, peter.jung, caire\}@tu-berlin.de).}
\thanks{Parts of this paper were presented in the 2018 IEEE International Symposium on Information Theory (ISIT) \cite{AD:isit2018} and at the IEEE Asilomar Conference on Signals, Systems, and Computers 2019 \cite{Fen2019e}.}
}
\begin{document}

\maketitle

\vspace{-1cm}

\begin{abstract}
In this paper, we study the problem of user \textit{activity detection} and
large-scale fading coefficient estimation in a random access wireless uplink with a massive MIMO base station
with a large number $M$ of antennas and a large number of wireless single-antenna devices (users).
We consider a block fading channel model where the $M$-dimensional channel vector of each user remains 
constant over a \textit{coherence block} containing $\dimPilots$ signal dimensions in time-frequency. 
In the considered setting, the number of potential users $K_\text{tot}$ 
is much larger than $\dimPilots$ but at each time slot only $K_a \ll K_\text{tot}$ 
of them are active. 
Previous results, based on compressed sensing, require that  $K_a\le \dimPilots$,
which is a bottleneck in massive deployment scenarios.
In this work, we show that such limitation can be overcome when the number of base station antennas $M$ 
is sufficiently large. More specifically, we prove that with a coherence block of dimension $\dimPilots$
and a number of antennas  $M$ such that $K_a/M = o(1)$,
one can identify $K_a = O(\dimPilots^2/\log^2(\frac{K_\text{tot}}{K_a}))$ active users, 
which is much larger than the previously known bounds. 
We also provide two algorithms. One is based on Non-Negative Least-Squares, for which
the above scaling result can be rigorously proved.
The other consists of a low-complexity iterative componentwise minimization of
the likelihood function of the underlying problem. While for this algorithm a rigorous proof cannot be given, we analyze 
a constrained version of the Maximum Likelihood (ML) problem (a combinatorial optimization with exponential complexity)
and find the same fundamental scaling law for the number of identifiable users.
Therefore, we conjecture that the low-complexity (approximated) ML algorithm also achieves
the same scaling law and we demonstrate its performance by simulation.
We also compare the discussed methods with the (Bayesian) MMV-AMP algorithm, 
recently proposed for the same setting, and show superior performance and better numerical stability.
Finally, we use the discussed approximated ML algorithm as the inner decoder
in a concatenated coding scheme for {\em unsourced random access},
a grant-free uncoordinated multiple access scheme where all users make use of the same codebook,
and the receiver must produce the list of transmitted messages, irrespectively of the identity of the transmitters.
We show that reliable communication is possible at any $E_b/N_0$ provided that a sufficiently large number of base station antennas is used,
and that a sum spectral efficiency in the order of $\mathcal{O}(L\log(L))$ is achievable.
\end{abstract}

\begin{keywords}
Activity Detection, Internet of Things (IoT), Massive MIMO, Unsourced Random Access.
\end{keywords}

\section{Introduction}  \label{intro}

One of the paradigms of modern machine-type communications \cite{Tal2012a}
consists of a very large number of devices (here referred to as ``users'') with
sporadic data.  Typical examples thereof are Internet-of-Things (IoT)
applications, wireless sensors deployed to monitor smart infrastructure, and
wearable biomedical devices \cite{Has2013}.  In such scenarios, a Base Station
(BS) should be able to collect data from a large number of devices.  However,
due to the sporadic nature of the data generation and communication, allocating
some dedicated transmission resource to all users in the system may be extremely wasteful.  
In most wireless systems, a dedicated random access slot (or
logical channel) is used to allow the users with some data to transmit to
ask the Base Station (BS) to be granted access to some transmission
resource, which is successively released. For example, most systems
operating today, such as (3G, 4G-LTE, and 5G New Radio, follow this
paradigm \cite{Ses2009,Agi2016a}.  As an alternative, the random access
channel itself can be used to directly transmit data in a {\em grant-free}
mode.  As yet another twist in the system classification of random access
schemes, a recently proposed information theoretic model referred to as
{\em unsourced random access} assumes grant-free operations and, in
addition, that all users make use of exactly the same codebook
\cite{Pol2017}. Unsourced random access is motivated by an IoT scenario
where millions of cheap devices have their codebook hardwired at the moment
of production, and are then disseminated into the environment.  In this case,
the BS receiver must determine the list of transmitted messages irrespectively
of the identity of the active users.
\footnote{If a
    user wishes to communicate its ID, it can send it as part of the payload.
    Therefore, in the paradigm of unsourced random access, if the users make
    use of individually different codebooks, it would be impossible for the BS
    to know in advance which codebook to decode since the identity of the active
    users is not known a priori. Hence, in this context it is in fact essential,
    and not just a matter of implementation costs, that all users utilize the same
codebook.}

In this paper, we are mainly interested in the problem of {\em Activity
Detection} (AD) from a dedicated pilot slot.  The AD function can be included either in a more traditional granted resource random access protocol, or in a grant-free protocol. In the second part of the paper,
we shall use the proposed AD scheme as the inner code/decoder 
of a concatenated coding scheme specifically addressing the problem of unsourced random access.   

AD is a fundamental challenge in massive sensor deployments and random access
scenarios to be expected for IoT (see, e.g.,
\cite{Bockelmann:ETT2013,boljanovic2017user,Sen2017,Liu2018c,liu2017massive,Liu2018b}
for some recent works) We consider a classical block-fading wireless
communication channel between the users and the BS \cite{tse2005fundamentals},
where the channel coefficients remain constant over \textit{coherence blocks}
consisting of $\dimPilots$ signal dimensions in the time-frequency domain, and
change randomly from block to block according to a stationary ergodic process
\cite{tse2005fundamentals}.  A fundamental limitation when considering a
single-antenna BS is that the required signal dimension $\dimPilots$ to
identify reliably a subset of $K_a$ active users among a set consisting of
$K_\text{tot}$ \textit{potentially active}  users scales as $\dimPilots=O(K_a
\log(\frac{K_\text{tot}}{K_a}))$, thus, almost linearly with $K_a$.  To keep up
with the scaling requirements in  practical applications where $K_a$ may be of
the order of $10^2$ and $K_\text{tot}$ may be of the order of $10^4$ -- $10^5$,
it is crucial to overcome this limitation in an efficient way that does not
require devoting too many pilot dimension to AD.  

In a series of recent works \cite{liu2017massive,Liu2018b,Che2018}, AD  with a
massive MIMO BS with a large number $M$ of antennas was considered and
formulated as a Multiple Measurement Vector (MMV)
\cite{Che2006,Cot2005a,Kim2012} problem.  In these works, the activity
detection problem is formulated in a Bayesian way and a method based on an MMV
suited version of {\em Approximated Message Passing} (MMV-AMP) followed by a
componentwise Neyman-Pearson activity estimation by suitable thresholding is
proposed. There are several issues with this problem formulation and with the
proposed MMV-AMP algorithm.  First, the algorithm needs to treat the
Large-Scale Fading Coefficients (LSFCs)\footnote{We refer to LSFC as the
averaged received power from each user when active, up to a suitable common
scaling factor. Users have different LSFCs because of different distances from
the BS and large-scale effects such as log-normal shadowing.} as either as
deterministic known quantities, or as random quantities whose prior
distribution is known. In practice, it is not easy to individually measure the
LSFC from all $K_{\rm tot}$ users, especially when they stay silent for a long
time and move or the propagation conditions change.  Also, the typical distance
dependent pathloss and log-normal shadowing laws used in standard models are
not quite representative of specific environments and the prior ensemble
distribution would assume some spatial distribution (e.g., uniform in a cell as
in \cite{liu2017massive,Che2018}) which is not always the case.  Furthermore,
the MMV-AMP algorithm can be analyzed via the state evolution method
\cite{liu2017massive,Che2018} in the large-dimensional regime where $L, K_a$,
and $K_{\rm tot}$ grow to infinity at fixed ratios $\frac{K_a}{\dimPilots} \to
\alpha$ and $\frac{K_\text{tot}}{\dimPilots}\to \beta$ with $\alpha, \beta \in
(0, \infty)$ while $M$ is finite.  Therefore, the regime of $L$ linear in $K_a$
(which we wish to beat) is somehow unavoidable in this type of analysis.
Finally, it turns out that in practical scenarios where $M$ is fairly larger
than $L$ and comparable to $K_a$ (which are scenarios of interest in our work
and in practical scenarios, where $L$ is between 50 and 200 and $M$ can be up
to 256 antennas \cite{shepard2012argos,Mal2017,Cho2016}), MMV-AMP is quite
numerically unstable and gives pathological and unpredictable behaviors that
one would like definitely to avoid in a real-world implementation. 

In this work, we consider a non-Bayesian approach, treating the LSFCs as
deterministic unknown. We use tools from Compressed Sensing (CS) to
provide a stability analysis of the LSFC estimation and AD problem for finite
SNR and finite number of antennas $M$. As a consequence of this analysis, we
are able to show that with a coherence block of dimension $\dimPilots$, and
with a sufficient number of BS antennas $M$ with $K_a/M=o(1)$, one can
estimate the LSFC, and thus identify the activity, of up to
$K_a=O(\dimPilots^2/\log^2(\frac{K_\text{tot}}{K_a}))$ active users among
$K_\text{tot}$ users.
These results are obtained by analyzing a Non-Negative Least-Squares (NNLS) algorithm
applied to the sample covariance information, which was recently considered for LSFC estimation
in \cite{docomo}. The analysis in \cite{docomo} showed, that with a random choice of pilot sequences
the LSFCs of up to 
$K_a = \mathcal{O}(L^2)$ users could be estimated, but the proof was limited by the assumptions of
$K_a = K_\text{tot}$,$K_\text{tot} \leq L^2$ and $M\to\infty$.
Our result lifts all these restrictions and shows that $K_\text{tot}$ may be potentially much
larger than $\dimPilots^2$ and $K_a$, where one needs to pay only
a logarithmic penalty $O(\log^2(\frac{K_\text{tot}}{K_a}))$ for increasing the
total number of users $K_\text{tot}$.
This makes the proposed scheme very
attractive for IoT setups, in which the number of active users $K_a$ as well
as the total number of users $K_\text{tot}$ may be extremely large. 

Furthermore, we propose to use an improved algorithm for AD based on the
Maximum-Likelihood (ML) estimation of the LSFCs of the active users. The
resulting likelihood maximization is a non-convex problem, that can be
solved (approximately) by iterative componentwise minimization.  This yields an
iterative scheme based on rank-1 updates whose complexity is comparable to that
of NNLS and MMV-AMP. Extensive numerical simulations show that the ML
algorithm is superior to NNLS and to MMV-AMP in any regime, and does not suffer
from the ill-conditioned behavior of MMV-AMP for the case of large $M$.  
The componentwise optimization of the
log-likelihood function was developed in \cite{Tip2001,Fau2002,Tip2003,Pal2004},
where the sparse Bayesian learning (SBL) framework was introduced to find the
optimal vector of weights in a linear regression problem. In the SBL framework
it is assumed that the weight vector follows a Gaussian prior distribution with
zero mean and a diagonal covariance matrix. The entries of the covariance matrix
are estimated by maximizing the likelihood of the data. It was observed that
the likelihood function maximization yields a very sparse result, which is a desirable property
in statistical learning. The maximum of the
likelihood function can be computed iteratively by following the general
expectation-maximization (EM) framework \cite{Dem1977,Moo1996}, but it was
found that a componentwise optimization leads to faster convergence while still
being guaranteed to converge to at least a local maximum of the likelihood
function \cite{Fau2002}, similar to EM.  The SBL framework was extended and
applied to basis selection \cite{Wip2004}, compressed sensing \cite{Ji2008} and
also the MMV problem \cite{Wip2007}, where the latter was termed M-SBL. Here,
the task is to recover $\Xm\in\CC^{K_\text{tot}\times M}$
(here we stick to the notation introduced in this paper)
from multiple measurements $\Ym \in \CC^{L\times M}$ of the form
\beq
\Ym = \Am\Xm + \Zm
\eeq
Following the SBL framework, it is assumed that the rows
of $\Xm$ are distributed according to
\beq
\Xm_{:,i} \sim \mathcal{CN}(0,\gamma_i\Id_M).
\eeq
Note, that in the literature the term M-SBL
has often been used ambiguously to refer to the ML estimate of the parameters
$\gamma_i$ as well as to the algorithm used to find this solution, which may be
either coordinate-wise optimization or EM. Both of these algorithms lead to
similar solutions \cite{Wip2007}, but here we adopt the componentwise optimization algorithm,
since it can be efficiently implemented using rank-1
updates leading to a significant complexity reduction compared to the EM
version. 

Let $\Xm$ have $K_a$ non-zero rows.
The identifiability limits of the ML solution of M-SBL were analysed in
\cite{Tan2010a,Pal2014,Pal2015,Koo2018,Bal2014}.  While the early work
\cite{Tan2010a} was restricted to the case $K_a \leq L$, \cite{Pal2014} made
the distinction between recovering $\Xm$, which necessarily requires $K_a \leq
L$, and recovering the vector $\gammam = (\gamma_1,...,\gamma_{K_\text{tot}})$.
It was noticed in \cite{Pal2014} that the recovery of $\gammam$ is governed by
the properties of the Khatri-Rao product $\Am\odot\Am$ and it was proven that,
with a random choice of $\Am$, up to $K_a = \mathcal{O}(L^2)$ non-zero entries
of $\gammam$ can be recovered uniquely if the covariance matrix of $\Ym$ is
known exactly.  The proof of \cite{Pal2014} (similar proofs were given independently in
\cite{docomo,Bal2014}) is based on the fact that any $2K_a$ columns of
$\Am\odot\Am$ are linearly independent almost surely for $K_a$ up to
$\mathcal{O}(L^2)$.
This proof deals only with the identifiability, i.e., it does not apply to a
specific recovery algorithm and does not take into account the uncertainty
in estimating the covariance matrix of $\Ym$, and therefore gives no clue on the robustness (error bound)
of the recovery.
It is well known
in the compressed sensing literature that stronger conditions are needed to
guarantee algorithmic robust recovery \cite{Foucart2013}.  Upper and lower
bounds on the performance of the ML solution of M-SBL for the noisy case have
been derived in \cite{Tan2010a} for $K_a\leq L$ and in \cite{Koo2018} for $K_a = \mathcal{O}(L^2)$,
but these bounds contained parameters which were
exponentially hard to compute for a given matrix $\Am$ and so no concrete
scaling of $K_a$ could be given.  A coherence based argument was given in
\cite{Pal2014} to analyse the performance of a covariance based LASSO algorithm,
but it was only possible to guarantee recovery for up to $K_a = \mathcal{O}(L)$
coefficients.  This is a well known limitation of coherence based arguments,
known as the "square-root bottleneck" \cite{Foucart2013}.  In this work we are
able to circumvent this bottleneck by proving the restricted isometry property
(RIP) of a properly centered and rescaled version of $\Am\odot\Am$ for random
$\Am$. This allows us to prove recovery guarantees for both the NNLS algorithm
of \cite{docomo} and a constrained variant of the ML solution, showing that $K_a =
\mathcal{O}(L^2)$ coefficients can be recovered.
Although the
constrained ML yields a combinatorial minimization with exponential complexity
and therefore is not useful in practice, we can show that the scaling law for
successful detection of the activity pattern of the constrained ML scheme is
the same (up to logarithmic factors in the scaling of $M$)
as what was found for NNLS. Therefore, we conjecture that the
(low-complexity) ML algorithm achieves the same scaling law. We provide an intuitive
argument which, at least heuristically, explains why we expect the componentwise optimization
to converge to the global optimum in the considered scaling regime.
We would like to
mention that an analysis of the constrained ML estimator was recently presented
in \cite{Kha2017}. However, the results in \cite{Kha2017} are based on a RIP
result that was first claimed and then withdrawn by the same authors
\cite{Kha2019}.  Hence, our result based on a new RIP and a few consequent
modifications which we duly prove, essentially rigorizes the analysis presented
in \cite{Kha2017}.

The full characterisation of the global (unconstrained) ML solution and the conditions
under which the iterative estimate coincides with it remains open.
Some progress has recently been made in \cite{Che2020a}, where it was shown the global optimality
of the algorithmic solution can be checked, given $\Am$ and the true $\gammam^\circ$,
by a linear feasibility program.
In contrast, our recovery guarantees for the NNLS
algorithm hold for all $K_a$-sparse $\gammam$ and are given in closed form (up to unspecified constants).
Based on the asymptotic Gaussianity of ML estimators in general, it was shown in \cite{Che2020a} that
for large $M$ the distribution of the ML estimation error, for a fixed $\gammam^\circ$,
can be characterized numerically by the solution of a quadratic program. 

The coordinate-wise optimization algorithm for M-SBL was also independently
re-discovered and investigated in the context of source localisation
\cite{Abe2013,Gle2014,Yan2015}. It was noted in \cite{Abe2013} that the update
equation can be equivalently derived by an iterative weighted least-squares
(WLS) approach, which asymptotically minimizes the variance of the estimation.
The resulting algorithm has therefore been called \emph{iterative asymptotic
sparse minimum variance stochastic ML} (SAMV-SML).  Recently, in \cite{Ram2018}
a similar WLS estimator was derived and an iterative algorithm was given to
find an approximation of the WLS minimizer. The performance reported was very
similar to M-SBL, but at a much higher complexity-per-iteration of
$\mathcal{O}(L^4K_\text{tot}^2)$, compared to $\mathcal{O}(L^2K_\text{tot})$
for the coordinate-wise optimization with rank-1 updates.

Finally, we focus on unsourced random access with a massive MIMO BS. It is
evident that the AD problem and the random access problem are related.  In
fact, one can immediately obtain a random access scheme from an AD scheme as
follows: assign to each user a unique set of pilot signature sequences
(codewords), such that a user, when active, will transmit the signature
corresponding to its information message.  Since the number of pilot signatures
is $K_{\rm tot} \gg K_a$, this scheme involves only an expansion of the number
of total users from $K_{\rm tot}$ to $K'_{\rm tot} = K_{\rm tot} 2^B$ where $B$
is the number of per-message information bits.  This idea was recently
presented in \cite{Sen2017}, where the MMV-AMP detector of
\cite{Kim2012,liu2017massive, Che2018} was used at the receiver side.  While
conceptually simple, this approach has two major drawbacks: 1) even for
relatively small information packets (e.g., $B = 100$ bits), the dimension of
the pilot matrix is too large for practical computational algorithms; 2) each
user has a different set of pilot sequences, and therefore the scheme is not
compliant with the basic assumption of unsourced random access, that users have
all the same codebook. 

In contrast, we present a novel scheme, build upon the concatenated coding
approach of \cite{Ama2020a}, that does not incur in the large dimension problem
and is independent of the number of ``inactive'' users.  In our scheme, the
message of $B$ bits of each user is split into a sequence of submessages of
potentially different lengths.  These submessages are encoded via a tree code
(the same for each user), such that the encoded blocks have the same length of
$J$ bits.  Then, each user transmits its sequence of $J$-bits blocks in
consecutive blocks of $L$ dimensions, using the same $L \times 2^J$ pilot
matrix (where blocks are encoded in the matrix columns).  The inner detector
perform our ML activity detection scheme and for each slot recovers the set of
active columns of the pilot matrix.  These are passed to the outer tree code,
which recovers each user message by  ``stitching together'' the sequence of
submessages.  We show that an arbitrary small probability of error is
achievable at any $E_b/N_0$ provided that a sufficiently large number of base
station antennas is used, and that the sum spectral efficiency can grow as
$\mathcal{O}(L\log(L))$.  This can be achieved in a completely non-coherent
way, i.e. it is at no point necessary to estimate the channel matrix
(small-scale fading coefficients). These are important properties to enable
easily deployable, low-latency, energy efficient communication in an IoT
setting.
\subsection{Notation}
We represent scalar constants by non-boldface letters {(e.g., $x$ or $X$)}, sets by calligraphic letters (e.g., $\Xc$),  vectors by boldface small letters (e.g., $\xv$), and matrices by boldface capital letters (e.g., $\Xm$). 
We denote the $i$-th row and the $j$-th column of a matrix $\bfX$ with the row-vector $\Xm_{i,:}$ and the column-vector $\Xm_{:,j}$ respectively.  We denote a diagonal matrix with elements $(s_1, s_2, \dots, s_k)$ by $\diag(s_1, \dots, s_k)$. We denote the vectorization operator by $\vec(.)$.
We denote the $\ell_p$-norm of a vector $\bfx$ and the Frobenius norm of
a matrix $\bfX$ by $\|\bfx\|_{\ellp{p}}$ and $\|\bfX\|_{\ellp{p}}$ resp. $\|\xv\|_0:=|\{i:x_i\neq 0\}|$ denotes the number of non-zero entries of a vector $\xv$.
The operator norm of a matrix  $\bfX$ is denoted by $\|\bfX\|_{op}$.
The  $k \times k$ identity matrix is represented by $\bfI_k$.
For an integer $k>0$, we use the shorthand notation $[k]$ for $\{1,2,\dots, k\}$.
We use superscripts $(\cdot)^\transp$ and $(\cdot)^\herm$ for transpose and Hermitian transpose.
$\odot$  denotes the elementwise product of vectors or matrices of the 
same size.
$\langle\xv,\yv \rangle:=\xv^\herm\yv$ denotes the Euclidean scalar product between two vectors.
We define universal constants to be numbers, which are independent of all system parameters.
Such constants are typically denoted by $c,C,c',c_0,c_1$ etc., and different universal constants may be denoted by the
same letter.
$\log(x)$ denotes the natural logarithm of $x$.
\section{Problem Formulation}  \label{problem-formulation}

\subsection{Signal Model}  \label{sec:signal-model}

We consider a classical block-fading wireless channel  between each user and the BS where the channel coefficients remain constant 
over coherence blocks consisting of $\dimPilots$ signal dimensions in time-frequency \cite{tse2005fundamentals}, and change from block to block
according to some stationary and ergodic fading process.  
In general, the BS devotes some time-frequency slots to AD, i.e., to the purpose of identifying the active users who want to request some transmission resource. 
Such slots are generally non-adjacent in the time-frequency domain, since they are multiplexed with other slots, dedicated to
data transmission of the already connected users. Since typically the number of signal dimensions per AD slot is  not larger than one coherence block, 
without loss of generality we assume that each AD slot consists of $\dimPilots$ signal dimensions and coincides with a coherence block. 
We denote the set of all potential users (which may or may not be active) as $\clK_\text{tot}$, of size $K_\text{tot}:=|\clK_\text{tot}|$. 
Each user $k \in \clK_\text{tot}$ is given a user-specific and a priori known pilot sequence. The pilot sequence of user $k$ is denoted as 
$\bfa_k =(a_{k,1}, \dots, a_{k,\dimPilots})^\transp \in \bC^{\dimPilots}$. If user $k$ is active, it transmits the components of $\bfa_k$ in the AD slot
of $\dimPilots$ signal dimensions.  
Denoting by $\bfh_k$ the $M$-dimensional channel vector (small-scale fading coefficients) of the
user $k \in \clK_\text{tot}$ to $M$  antennas at the BS, we can write the received signal at the BS over the AD slot as
\begin{align}\label{sig_eq1}
\bfy[i]=\sum_{k \in \clK_\text{tot}} b_k\sqrt{g_k} a_{k,i} \bfh_k+ \bfz[i], \;\;\;\; i \in [L], 
\end{align}
where $[\dimPilots]:=\{1,\dots, \dimPilots\}$, $g_k \in \bR_+$ denotes the LSFC (channel  strength)
of the user $k \in \clK_\text{tot}$, 
$b_k \in \{0,1\}$ is a binary variable with $b_k=1$ for active and $b_k=0$ for inactive users and
$\bfz[i]\sim\cg(0, \sigma^2 \bfI_M)$ denotes the additive white Gaussian noise (AWGN)
at the $i$-th signal dimension. 

Denoting by $\bfY=[\bfy[1], \dots, \bfy[\dimPilots]]^\transp$ the $\dimPilots \times M$ received signal over $\dimPilots$ signal dimensions and $M$ BS antennas, we can write \eqref{sig_eq1} more compactly as
 \begin{align}\label{pilot_sig}
\bfY=\bfA \Gammam^{\frac{1}{2}} \bfH + \bfZ,
\end{align}
where $\bfA=[\bfa_1, \dots, \bfa_{K_\text{tot}}]$ denotes the $\dimPilots\times K_\text{tot}$ matrix of pilot sequences
of the users in $\clK_\text{tot}$,
where $\Gammam=\bfB \bfG$ where $\bfG$ is a $K_\text{tot} \times K_\text{tot}$ diagonal matrix consisting of the LSFCs  $(g_1, \dots, g_{K_\text{tot}})^\transp$ and where $\bfB$ is a $K_\text{tot}\times K_\text{tot}$ diagonal matrix consisting of the binary activity patterns $(b_1, \dots, b_{K_\text{tot}})^\transp$ of the users, and where $\bfH=[\bfh_1, \dots, \bfh_{K_\text{tot}}]^\transp$ denotes $K_\text{tot} \times M$  matrix containing the $M$-dimensional normalized channel vectors of the users. 

In line with the classical massive MIMO setting \cite{Mar2016},
we assume for simplicity an independent Rayleigh fading model, such that the channel vectors
$\{\bfh_k: k \in \clK_\text{tot}\}$ are independent from each other and are  spatially white (i.e.,
uncorrelated along the antennas), that is,  $\bfh_k \sim \cg(0, \bfI_M)$.
We would like to mention here that massive MIMO has been now investigated under many more realistic
propagation conditions involving antenna correlation and partial Line-of-Sight Rician fading
\cite{Zha2013,Gao2015}. 
Nevertheless, for consistency with respect to \cite{liu2017massive, Che2018}, where this assumption is made, 
and for the sake of isolating the fundamental aspects of the problem without additional model complication,
we stick to the simple i.i.d. Rayleigh fading model. 
A thorough study of the effect of different small-scale fading statistics
(e.g., introducing correlation across the antennas for each user channel) is left for 
future work.

The user pilots  are normalized to unit energy per symbol, i.e.,  
$\|\bfa_k\|_2^2=\dimPilots$. Then, the average SNR of a generic active user $k \in \clK_\text{tot}$ over $\dimPilots$ pilot dimensions is given by
\begin{align}
\snr_k=\frac{\|\bfa_k\|_2^2 \gamma_k \bE[\|\bfh_k\|_2^2]}{\bE[\|\bfZ\|_\sfF^2]}=\frac{\dimPilots \gamma_k M}{\dimPilots M \sigma^2}=\frac{\gamma_k}{\sigma^2}=\frac{g_k}{\sigma^2},
\label{snr_def}
\end{align}
where $\gamma_k=b_k g_k=g_k$ ($b_k=1$ for active users) is the $k$-th diagonal element of $\Gammam$.
We call the vector $\gammam=(\gamma_1, \dots, \gamma_{K_\text{tot}})^\transp$ or equivalently the diagonal matrix $\Gammam=\diag(\gammam)$ 
the ``active LSFC pattern'' of the users in $\clK_\text{tot}$. 
We denote by $\clK_a \subseteq \clK_\text{tot}$ the subset of active users in the current AD slot, with size $K_a:=|\clK_a|$. Thus, $\gammam$ is a non-negative 
sparse vector with only $K_a$ nonzero elements.
The goal of AD is to identify the subset of active users $\clK_a$ or a subset thereof consisting of users with sufficiently strong channels 
$\clK_a(\nu):=\{k \in \clK_\text{tot}: \gamma_k>\nu \sigma^2\}$, for a pre-specified  threshold $\nu>0$, from the noisy observations as in \eqref{pilot_sig}. 
As a side goal, we wish also to estimate the LSFCs $\gamma_k$ of the active users (at least those above threshold).
This information may be useful in practice to accomplish tasks such as user-BS association, user scheduling, and possibly other high-level network optimization
tasks where the knowledge of the user channel strength is relevant.

Since we assume that the channel vectors are spatially white and Gaussian, the columns of $\bfY$ in \eqref{pilot_sig} are i.i.d. Gaussian 
vectors with $\bfY_{:,i} \sim \cg(0, \Sigmam_\bfy)$ where 
\begin{align}\label{eq:true_cov}
\Sigmam_\bfy=\bfA \Gammam \bfA^\herm + \sigma^2 \bfI_\dimPilots=\sum_{k=1}^{K_\text{tot}} \gamma_k \bfa_k \bfa_k^\herm + \sigma^2 \bfI_\dimPilots
\end{align}
denotes the covariance matrix, which is common among all the columns $\bfY_{:,i}$, $i \in [M]$.
We also define the empirical/sample covariance  of the columns of the observation  $\bfY$ in \eqref{pilot_sig} as
\begin{align}\label{eq:samp_cov}
\widehat{\Sigmam}_\bfy=\frac{1}{M} \bfY \bfY^\herm =\frac{1}{M} \sum_{i=1}^M \bfY_{:,i} \bfY_{:,i}^\herm.
\end{align}

\section{Proposed Algorithms for Activity Detection}

In this section, we discuss two algorithms for AD and LSFC estimation.

\subsection{Maximum Likelihood Estimation}   \label{ML-sect}

We first consider the Maximum Likelihood (ML) estimator of $\gammam$ by making explicit use of Gaussianity of the users channel vectors. 
We introduce the negative log-likelihood cost function
\begin{align}
    \label{eq:likelihood_function}
f(\gammam)&:=-\frac{1}{M}\log p(\bfY| \gammam)\stackrel{(a)}{=}-\frac{1}{M}\sum_{i=1}^M \log p(\bfY_{:,i}| \gammam)\\
&\propto  \log | \bfA \Gammam \bfA^\herm+ \sigma^2 \bfI_{\dimPilots}| + \tr\left ( \Big( \bfA \Gammam \bfA^\herm+ \sigma^2 \bfI_{\dimPilots}\Big ) ^{-1} \widehat{\Sigmam}_\bfy \right),\label{eq:ML_cost}
\end{align} 
where $(a)$ follows from the fact that the columns of $\bfY$ are i.i.d. (due to the spatially white user channel vectors), and where $\widehat{\Sigmam}_\bfy$ denotes the sample covariance matrix of the columns of $\bfY$ as in \eqref{eq:samp_cov}.
Note that for spatially white channel vectors considered here, $\widehat{\Sigmam}_\bfy \to \Sigmam_\bfy$ as the number of antennas $M \to \infty$. 
It is apparent that the likelihood function $p(\Ym|\gammam)$ depends on $\Ym$ only through
the covariance matrix $\widehat{\Sigmam}_\yv$. Therefore, $\widehat{\Sigmam}_\yv$ is a {\em sufficient statistic}
for the estimation of $\gammam$ or any function thereof.  Especially in a Massive MIMO scenario, where $M > \dimPilots$,
the use of the covariance matrix $\widehat{\Sigmam}_\yv \in \CC^{\dimPilots\times \dimPilots}$ instead of
the raw measurements $\Ym \in \CC^{M\times \dimPilots}$ results in a significant dimensionality reduction.
\def\pscone{\clS_{\dimPilots}^+}
\def\pqcone{\clQ_{\dimPilots}^+}
Now let us focus on the ML cost function in \eqref{eq:ML_cost}. Assuming the number of active users $K_a$ is known, the 
{\em constrained ML estimator} of $\gammam$ is given by 
\begin{align}
    \gammam^*_\text{c-ML} = \argmin_{\gammam \in \Theta^+_{K_a}} f(\gammam).\label{true_ML}
\end{align}
where the constraint set $\Theta^+_{K_a}=\{\gammam\in \bR_+^{K_\text{tot}}: \|\gammam\|_0 \leq K_a\}$ is the (non-convex) set
of non-negative $K_a$-sparse vectors. 
There are two problems with this estimator: 1) $K_a$ is generally not known a priori, and 2) 
the minimization in (\ref{true_ML}) is combinatorial and has exponential complexity in $K_{\rm tot}$, which can be very large. Therefore, 
this ML estimator has no practical value. Nevertheless, its performance yields a useful bound to the 
performance of other ``relaxed'' versions of ML estimation. 
In particular, we are interested in the \textit{relaxed ML estimator} of $\gammam$ given by 
\begin{align}
    \gammam^*_\text{r-ML} = \argmin_{\gammam \in \bR_+^{K_\text{tot}}} f(\gammam).\label{a_ML}
\end{align}
It is not difficult to check that $f(\gammam)$ in \eqref{eq:ML_cost} is the sum of a concave
function and a convex function, so also the problem in (\ref{a_ML}) is not convex in general.
Notice also that the estimator in (\ref{a_ML}) does not require any prior knowledge of $K_a$. 

In the following, for the sake of analysis, we shall denote the true 
vector of LSFCs as $\gv^\circ$ and the true activity pattern as $\bv^\circ$. 
Next, we consider the  performance of the constrained ML estimator  \eqref{true_ML}.
The idea of the proof is based on \cite{Kha2017},
which was relying on a RIP result \cite{Kha2019} which was then withdrawn since the proof had a flaw. 
In Appendix \ref{appendix:ML} we give a complete and streamlined proof for the case, where the true 
vector of LSFCs $\gv^\circ$ is known at the receiver and all entries satisfy $g_k^\circ \in [g_\text{min},g_\text{max}]$.
Therefore, the goal consist of estimating the activity pattern $\bv^\circ$ and the active LSFC pattern is eventually given by 
$\gammam^*_{\rm c-ML}  = \bv^* \odot\gv^\circ$, where $\bv^*$ is the estimate of $\bv^\circ$. We hasten to say that our proof technique
extends easily also to the case where $\gv^\circ$ is unknown, provided that the per-component upper and lower bounds 
$g_\text{min}$ and $g_\text{max}$ are known, using the arguments of \cite{Kha2017}. We have omitted this general case for the sake of 
brevity, since it requires a few more technicalities which can be found in \cite{Kha2017}.

For the case at hand, we define the constrained ML estimator of the activity pattern
$\bv^\circ \in \{0,1\}^{K_\text{tot}}$ as
\beq
\bv^* :=
\argmin_{\bv \in \Theta_{K_a}}  f(\bv\odot \gv^\circ), 
\label{eq:ml_binary}
\eeq
with $f(\cdot)$ as defined in \eqref{eq:likelihood_function} and
$\Theta_{K_a} = \{\bv\in\{0,1\}^{K_\text{tot}}:\sum b_k = K_a\}$, the set of binary $K_a$-sparse vectors.
We have the following result: 

\begin{theorem}
    \label{thm:ml:support}
    Let the LSFCs be such that for all $k$ it holds that 
    $g_\text{min}\leq g_k \leq g_\text{max}$.
    Let $\Am \in \CC^{\dimPilots\times K_\text{tot}}$, be the pilot matrix with columns drawn uniformly
    i.i.d. from the
    sphere of radius $\sqrt{\dimPilots}$ and let $K_\text{tot} > \dimPilots^2$.
    For any $\bv^\circ\in\Theta_{K_a}$ the estimate $\bv^*$, defined in \eqref{eq:ml_binary},
    satisfies
    $\bv^* = \bv^\circ$ with probability exceeding $1 - 2\epsilon-\exp(-C\dimPilots)$
    (jointly, on a draw of $\Am$ and a random channel realization), provided that
    \beq
        K_a \leq c\frac{\dimPilots^2}{\log^2(eK_\text{tot}/\dimPilots^2)},
        \label{eq:rip_condition}
    \eeq
    and 
    \beq
    \begin{split}
        &M \geq \frac{4}{1-\delta}
        \left(\frac{C'g_\text{max}\left(2\log (\frac{eK_\text{tot}}{2K_a}) + \frac{\log(2/\epsilon)}{\max\{K_a,\dimPilots\}}\right) \max\left\{1,\frac{K_a}{\dimPilots}\right\}
        + \frac{\sigma^2}{\dimPilots}}{g_\text{min}}\right)^2
               \log\left(3eK_\text{tot}K_a\frac{1+\epsilon}{\epsilon}\right)
    \label{eq:ml_M_condition}
    \end{split}
    \eeq
    where $0<\delta<1$ and
    $0<c,C,C'$ are universal constants
    that may depend on each other but not on the system parameters.
    The precise relation is given in the proof.\hfill $\square$
\end{theorem}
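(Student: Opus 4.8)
The plan is to show that the excess ML cost is strictly positive at every wrong binary pattern, simultaneously. Write $\Sigmam_\bv := \bfA\,\diag(\bv\odot\gv^\circ)\,\bfA^\herm + \sigma^2\bfI_\dimPilots$ and $\Sigmam^\circ := \Sigmam_{\bv^\circ}$, so that $f(\bv\odot\gv^\circ)=\log\det\Sigmam_\bv+\tr(\Sigmam_\bv^{-1}\widehat{\Sigmam}_\bfy)$. Adding and subtracting $\Sigmam^\circ$ inside the trace and using $\tr((\Sigmam_\bv^{-1}-(\Sigmam^\circ)^{-1})\Sigmam^\circ)=\tr(\Sigmam_\bv^{-1}\Sigmam^\circ)-\dimPilots$ gives the decomposition
\begin{align}
f(\bv\odot\gv^\circ)-f(\bv^\circ\odot\gv^\circ)
= \underbrace{D_\bv}_{\text{population gap}}
+ \underbrace{\tr\!\big((\Sigmam_\bv^{-1}-(\Sigmam^\circ)^{-1})(\widehat{\Sigmam}_\bfy-\Sigmam^\circ)\big)}_{\text{stochastic term}},
\end{align}
where $D_\bv := \log\det\Sigmam_\bv-\log\det\Sigmam^\circ+\tr(\Sigmam_\bv^{-1}\Sigmam^\circ)-\dimPilots$ is exactly the Kullback--Leibler divergence between $\cg(0,\Sigmam^\circ)$ and $\cg(0,\Sigmam_\bv)$, hence nonnegative. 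It remains to lower-bound $D_\bv$ and to show the stochastic term cannot overcome it.

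First I would lower-bound the population gap. Writing $D_\bv=\sum_i(\lambda_i-1-\log\lambda_i)$ for the eigenvalues $\lambda_i$ of $\Sigmam_\bv^{-1}\Sigmam^\circ$, and noting that (since $\bv\odot\gv^\circ\le\gv^\circ$ entrywise) every eigenvalue of $\Sigmam_\bv$ and $\Sigmam^\circ$ lies in $[\sigma^2,\sigma^2+B]$ with $B:=\|\bfA\,\diag(\gv^\circ)\,\bfA^\herm\|_{op}$, the $\lambda_i$ are confined to a bounded interval. The upper edge $B$ is controlled with probability at least $1-\exp(-C\dimPilots)$ by the sparse operator-norm bound for the random pilot matrix. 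On the elementary inequality $x-1-\log x\ge c_0(x-1)^2$ over a bounded interval, this yields $D_\bv\ge c_0\|\Sigmam_\bv^{-1}\Sigmam^\circ-\bfI_\dimPilots\|_\sfF^2\ge c_1\|\Sigmam^\circ-\Sigmam_\bv\|_\sfF^2$ with $c_1=c_0/(\sigma^2+B)^2$.

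Next I reduce $\|\Sigmam^\circ-\Sigmam_\bv\|_\sfF$ to the Khatri--Rao RIP. Setting $\bfd:=(\bv^\circ-\bv)\odot\gv^\circ$, which is $2K_a$-sparse with $\|\bfd\|_2\ge g_\text{min}$ whenever $\bv\ne\bv^\circ$, we have $\Sigmam^\circ-\Sigmam_\bv=\sum_k d_k\bfa_k\bfa_k^\herm$. Because each centered rank-one matrix $\bfa_k\bfa_k^\herm-\bfI_\dimPilots$ is trace-free (here $\|\bfa_k\|_2^2=\dimPilots$) and therefore Frobenius-orthogonal to $\bfI_\dimPilots$, the ``DC'' component splits off exactly,
\begin{align}
\Big\|\textstyle\sum_k d_k\bfa_k\bfa_k^\herm\Big\|_\sfF^2
= \Big\|\textstyle\sum_k d_k(\bfa_k\bfa_k^\herm-\bfI_\dimPilots)\Big\|_\sfF^2
+ \dimPilots\Big(\textstyle\sum_k d_k\Big)^2
\ge \Big\|\textstyle\sum_k d_k(\bfa_k\bfa_k^\herm-\bfI_\dimPilots)\Big\|_\sfF^2 ,
\end{align}
and the restricted isometry property of the centered and rescaled Khatri--Rao product proved above, valid for $2K_a$-sparse vectors precisely under \eqref{eq:rip_condition}, gives $\|\Sigmam^\circ-\Sigmam_\bv\|_\sfF^2\ge (1-\delta)\,c_2\,\dimPilots^2\|\bfd\|_2^2\ge (1-\delta)\,c_2\,\dimPilots^2 g_\text{min}^2$.

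The crux is the uniform control of the stochastic term. For fixed $\bv$ I would write $\Sigmam_\bv^{-1}-(\Sigmam^\circ)^{-1}=\Sigmam_\bv^{-1}(\Sigmam^\circ-\Sigmam_\bv)(\Sigmam^\circ)^{-1}=:\bfW$, so that $\tr(\bfW(\widehat{\Sigmam}_\bfy-\Sigmam^\circ))=\tfrac1M\sum_{i=1}^M(\bfY_{:,i}^\herm\bfW\bfY_{:,i}-\tr(\bfW\Sigmam^\circ))$ is a centered sum of i.i.d. complex Gaussian quadratic forms. A Hanson--Wright/Bernstein bound then controls its deviation through the Frobenius and operator norms of $(\Sigmam^\circ)^{1/2}\bfW(\Sigmam^\circ)^{1/2}$, both bounded via $\|\Sigmam_\bv^{-1}\|_{op},\|(\Sigmam^\circ)^{-1}\|_{op}\le\sigma^{-2}$ and $\|\Sigmam^\circ-\Sigmam_\bv\|_\sfF$; the key point is that the fluctuation scale is proportional to $\|\Sigmam^\circ-\Sigmam_\bv\|_\sfF/\sqrt{M}$ times the spectral scale of $\Sigmam^\circ$, so it can be absorbed into the quadratic gap $c_1\|\Sigmam^\circ-\Sigmam_\bv\|_\sfF^2$. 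A union bound over the $\binom{K_\text{tot}}{K_a}$ admissible supports then produces the entropy factors $\log(eK_\text{tot}/K_a)$ and $\log(3eK_\text{tot}K_a(1+\epsilon)/\epsilon)$, while the spectral scale $\sigma^2/\dimPilots+g_\text{max}\max\{1,K_a/\dimPilots\}$ of $\Sigmam^\circ$ enters squared; requiring the fluctuation to stay below $c_1(1-\delta)c_2\dimPilots^2 g_\text{min}^2$ is precisely condition \eqref{eq:ml_M_condition} and contributes the $2\epsilon$ to the failure probability. Combining the three estimates yields $f(\bv\odot\gv^\circ)>f(\bv^\circ\odot\gv^\circ)$ simultaneously for all $\bv\ne\bv^\circ$, so the minimizer in \eqref{eq:ml_binary} equals $\bv^\circ$ with probability at least $1-2\epsilon-\exp(-C\dimPilots)$. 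The main obstacle is exactly this last step: matching the per-$\bv$ quadratic-form concentration to the RIP geometry so that the union-bound logarithms together with the $g_\text{max}/g_\text{min}$ and $\max\{1,K_a/\dimPilots\}$ dependencies appear in the absorbable form of \eqref{eq:ml_M_condition}, rather than in a looser bound that the quadratic gap cannot dominate.
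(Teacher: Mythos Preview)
Your high-level plan shares the two structural ingredients with the paper --- the RIP lower bound on $\|\Sigmam^\circ-\Sigmam_\bv\|_\sfF$ and a uniform operator-norm bound on $\Sigmam_\bv$ over sparse supports --- but the way you combine them is genuinely different, and in its present form it does \emph{not} reproduce \eqref{eq:ml_M_condition}.

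The paper never splits $f(\bv)-f(\bv^\circ)$ into a population gap plus a stochastic term. Instead it bounds the pairwise error probability $\PP(f(\bv)\le f(\bv^\circ))$ directly by a Chernoff argument: since $\log p(\bfY|\bv)-\log p(\bfY|\bv^\circ)$ is an i.i.d.\ sum of log-likelihood ratios of Gaussians, its moment generating function is explicit, and optimizing the Chernoff parameter yields the R\'enyi-$\tfrac12$ divergence, giving $\exp(-\tfrac{M}{4}\mathcal{D}_{1/2}(p_\bv,p_{\bv^\circ}))$. Then $\mathcal{D}_{1/2}$ is lower-bounded via \emph{strong convexity of $-\log\det\Sigmam(\cdot)$ along the segment} between $\bv$ and $\bv^\circ$; the Hessian calculation $\tr(\Sigmam^{-1}\Delta\Sigmam\,\Sigmam^{-1}\Delta\Sigmam)\ge\|\Delta\Sigmam\|_\sfF^2/\sigma_{\max}^2(\Sigmam)$ produces only $\sigma_{\max}$, never $\sigma_{\min}$, and this is exactly why the final bound in \eqref{eq:ml_M_condition} contains $(\sigma^2/\dimPilots+g_{\max}\max\{1,K_a/\dimPilots\})$ but no bare $\sigma^{-2}$.

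Your route incurs the condition number twice. First, the inequality $x-1-\log x\ge c_0(x-1)^2$ on $[\sigma^2/(\sigma^2+B),(\sigma^2+B)/\sigma^2]$ forces $c_0$ to scale like $\sigma^2/(\sigma^2+B)$, so your $c_1=c_0/(\sigma^2+B)^2$ is already smaller by one power of the condition number than the paper's strong-convexity constant. Second, in the Hanson--Wright step you bound $\|(\Sigmam^\circ)^{1/2}\bfW(\Sigmam^\circ)^{1/2}\|_\sfF$ via $\|\Sigmam_\bv^{-1}\|_{op}\le\sigma^{-2}$, which injects further powers of $\sigma^{-2}$. The resulting requirement on $M$ picks up extra factors of $((\sigma^2+B)/\sigma^2)^2$, which are system-dependent and cannot be absorbed into the universal constants $c,C,C'$ of the theorem. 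The Chernoff/R\'enyi packaging is precisely what avoids this: it handles the ``gap'' and the ``fluctuation'' in one shot, tuned to the exact Gaussian MGF, whereas Hanson--Wright treats the quadratic form generically and overpays.

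There is also a slip in your operator-norm step: as written, $B=\|\bfA\,\diag(\gv^\circ)\,\bfA^\herm\|_{op}$ is the \emph{full} sum over all $K_{\rm tot}$ columns and scales like $g_{\max}K_{\rm tot}$, not $g_{\max}\max\{K_a,\dimPilots\}$. What is needed (and what the paper proves via \cite[Thm.~4.6.1]{vershynin_2018} plus a union bound over $2K_a$-sparse supports, which is where the $\log(eK_{\rm tot}/2K_a)$ factor in \eqref{eq:ml_M_condition} arises) is a bound on $\sigma_{\max}(\Sigmam(\bv_r))$ uniformly over $2K_a$-sparse interpolants $\bv_r$, not on the dense matrix.
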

\begin{proof}
    See Appendix \ref{appendix:ML}
\end{proof}
Theorem \ref{thm:ml:support} gives sufficient conditions under which the error probability of 
the estimator \eqref{eq:ml_binary} vanishes and it shows that $K_a$ can be larger than $L$,
although then $M$ has to grow at least as fast as $(K_a/L)^2$.
Simple algebra (omitted for the sake of brevity) shows the following:
\begin{corollary}
    \label{cor:ml}
    Let $\Am$ be as above and let $M,K_a,\dimPilots\to\infty$, then it is possible to choose
    \beq
    K_a = \mathcal{O}(\dimPilots^2/\log^2(K_\text{tot}/\dimPilots^2))
    \label{eq:scaling_K_a}
    \eeq
    and
    \beq
    M = \mathcal{O}\left(K_a(g_\text{max}/g_\text{min})^2\log^2(K_\text{tot}/K_a)\log(K_\text{tot}K_a)\right)
    \eeq
    such that the estimation error of the ML estimator \eqref{eq:ml_binary} vanishes.   \hfill $\square$
\end{corollary}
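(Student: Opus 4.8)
The plan is to obtain Corollary~\ref{cor:ml} as a direct consequence of Theorem~\ref{thm:ml:support}, by simplifying its two sufficient conditions \eqref{eq:rip_condition} and \eqref{eq:ml_M_condition} in the regime $M,K_a,\dimPilots\to\infty$ with $K_\text{tot}/\dimPilots^2$ bounded away from $1$, and with $\epsilon\to 0$ chosen to decay slowly (e.g.\ $\epsilon = 1/\dimPilots$), so that the failure probability $2\epsilon+\exp(-C\dimPilots)\to 0$ while $\log(1/\epsilon)=\mathcal{O}(\log\dimPilots)$ stays negligible. The sparsity scaling \eqref{eq:scaling_K_a} is then essentially a restatement of \eqref{eq:rip_condition}: since $K_\text{tot}>\dimPilots^2$ and $K_\text{tot}/\dimPilots^2$ is bounded away from $1$, one has $\log(eK_\text{tot}/\dimPilots^2)=\Theta(\log(K_\text{tot}/\dimPilots^2))$, so choosing $K_a$ of order $\dimPilots^2/\log^2(K_\text{tot}/\dimPilots^2)$ with a small enough constant satisfies \eqref{eq:rip_condition}.

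First I would simplify the large parenthesized factor in \eqref{eq:ml_M_condition}. In the stated regime the choice \eqref{eq:scaling_K_a} forces $K_a/\dimPilots=\Theta(\dimPilots/\log^2(K_\text{tot}/\dimPilots^2))\to\infty$, so that $\max\{1,K_a/\dimPilots\}=K_a/\dimPilots$ and $\max\{K_a,\dimPilots\}=K_a$. Consequently the subdominant pieces drop out: the term $\log(2/\epsilon)/\max\{K_a,\dimPilots\}\to 0$, and the additive $\sigma^2/\dimPilots$ is dominated by $g_\text{max}\log(K_\text{tot}/K_a)(K_a/\dimPilots)$ because $K_a\to\infty$. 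The whole factor is therefore of order $(g_\text{max}/g_\text{min})\log(K_\text{tot}/K_a)(K_a/\dimPilots)$; squaring it and multiplying by the outer factor $\log(3eK_\text{tot}K_a\tfrac{1+\epsilon}{\epsilon})=\Theta(\log(K_\text{tot}K_a))$ shows that \eqref{eq:ml_M_condition} is implied (for any fixed $\delta\in(0,1)$) by
\[
M=\Omega\!\left(\Big(\tfrac{g_\text{max}}{g_\text{min}}\Big)^2\log^2\!\Big(\tfrac{K_\text{tot}}{K_a}\Big)\Big(\tfrac{K_a}{\dimPilots}\Big)^2\log(K_\text{tot}K_a)\right).
\]

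The one non-cosmetic step, which I expect to be the crux, is converting the factor $(K_a/\dimPilots)^2$ above into the linear factor $K_a$ claimed by the corollary; this is exactly where the sparsity regime is used. Condition \eqref{eq:rip_condition} gives $K_a/\dimPilots^2\le c/\log^2(eK_\text{tot}/\dimPilots^2)\le c$, hence $(K_a/\dimPilots)^2=K_a\cdot(K_a/\dimPilots^2)\le c\,K_a$. Substituting this bound shows that choosing $M$ of order $K_a(g_\text{max}/g_\text{min})^2\log^2(K_\text{tot}/K_a)\log(K_\text{tot}K_a)$ with a sufficiently large constant makes the right-hand side of \eqref{eq:ml_M_condition} hold, which is precisely the scaling in the statement. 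It then remains only to note that, with $\epsilon$ decaying polynomially in $\dimPilots$ and $\dimPilots\to\infty$, the bound $2\epsilon+\exp(-C\dimPilots)\to 0$, so the estimation error of \eqref{eq:ml_binary} vanishes. The only point requiring care is the regularity assumption that $K_\text{tot}/\dimPilots^2$ stay bounded away from $1$, which guarantees that $\log(eK_\text{tot}/\dimPilots^2)$ and $\log(K_\text{tot}/\dimPilots^2)$ are comparable so that the big-$\mathcal{O}$ constants in both scalings can be matched.
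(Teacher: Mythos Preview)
Your proposal is correct and matches the paper's approach: the paper simply states that the corollary follows from Theorem~\ref{thm:ml:support} by ``simple algebra (omitted for the sake of brevity),'' and your derivation is precisely that algebra, including the key step of using \eqref{eq:rip_condition} to bound $(K_a/\dimPilots)^2\le cK_a$ and thereby convert the squared factor into the linear one appearing in the $M$ scaling.
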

Note, that the scaling condition \eqref{eq:scaling_K_a} can be replaced with the stricter condition
\beq
K_a = \mathcal{O}(\dimPilots^2/\log^2(K_\text{tot}/K_a)).
\label{eq:scaling_K_a_2}
\eeq
This is because $K_a \leq L^2$ and therefore
$\dimPilots^2/\log^2(K_\text{tot}/K_a) \leq \dimPilots^2/\log^2(K_\text{tot}/\dimPilots^2)$,
which implies
\beq
\dimPilots^2/\log^2(K_\text{tot}/K_a) = \mathcal{O}(\dimPilots^2/\log^2(K_\text{tot}/\dimPilots^2)).
\eeq

As said, the minimization in (\ref{true_ML}) or (\ref{eq:ml_binary})  is in
general computationally unfeasible (beyond the problem of not knowing $K_a$).
Next, we consider the relaxed ML estimator \eqref{a_ML}, where the domain of search 
is relaxed to the whole non-negative orthant.
This estimator is formally equivalent to the ML estimator of the model parameters
in the sparse Bayesian learning framework,
posed in \cite{Tip2001}. In \cite{Tip2001} a low-complexity iterative algorithm was given
and it was shown in \cite{Fau2002}
that the iterative algorithm is guaranteed
to converge to at least a local minimum of \eqref{eq:likelihood_function}.
We derive the iterative update equations here for completeness and show that they
can be efficiently implemented by rank-1 updates.
While this algorithm is not know to converge to the exact minimum of \eqref{eq:likelihood_function},
empirical evidence suggests it converges very well. 
The algorithm proceeds as follows:\\
For each coordinate $k \in [K_\text{tot}]$, define the scalar function
$f_k(d)=f(\gammam+ d \bfe_k)$ where $f(\gammam)$ is the likelihood function \eqref{eq:ML_cost} and $\bfe_k$
denotes the $k$-th canonical basis vector with a single $1$ at its $k$-th coordinate and zero elsewhere.
Setting $\Sigmam=\Sigmam(\gammam)=\bfA \Gammam \bfA^\herm + \sigma^2 \bfI_{\dimPilots}$
where $\Gammam=\diag(\gammam)$ and applying the well-known Sherman-Morrison rank-1 update identity \cite{sherman1950adjustment} we obtain that
\begin{align}
\big (\Sigmam + d \bfa_k \bfa_k^\herm \big )^{-1}=\Sigmam^{-1} - \frac{d\, \Sigmam^{-1} \bfa_k \bfa_k^\herm \Sigmam^{-1}}{1+ d \, \bfa_k^\herm \Sigmam^{-1} \bfa_k}. \label{dumm_ml_app2}
\end{align} 
Using \eqref{dumm_ml_app2} and applying the well-known determinant identity 
\begin{align}
\big | \Sigmam + d \bfa_k \bfa_k^\herm \big | =(1+ d\, \bfa_k^\herm \Sigmam ^{-1} \bfa_k) \big | \Sigmam\big |,
\end{align}
we can simplify $f_k(d)$ as follows
\begin{align}
f_k(d)= c + \log (1+ d\, \bfa_k^\herm \Sigmam ^{-1} \bfa_k) 
 -  \frac{  \bfa_k^\herm \Sigmam^{-1} \widehat{\Sigmam}_\bfy \Sigmam^{-1} \bfa_k }{1+ d \, \bfa_k^\herm \Sigmam^{-1} \bfa_k} d \label{dumm_ml_app3}
\end{align}
where $c=\log \big | \Sigmam \big | +\tr( \Sigmam^{-1} \widehat{\Sigmam}_\bfy)$ is a constant term independent of $d$. Note that from \eqref{dumm_ml_app3}, $f_k(d)$ is well-defined only when $d>d_0:=-\frac{1}{\bfa_k^\herm \Sigmam ^{-1} \bfa_k}$.
Taking the derivative of $f_k(d)$ yields
\begin{align}
f_k'(d)= \frac{  \bfa_k^\herm \Sigmam^{-1}\bfa_k }{1+ d \, \bfa_k^\herm \Sigmam^{-1} \bfa_k} - \frac{  \bfa_k^\herm \Sigmam^{-1} \widehat{\Sigmam}_\bfy \Sigmam^{-1} \bfa_k }{(1+ d \, \bfa_k^\herm \Sigmam^{-1} \bfa_k)^2}.
\end{align}
The only solution of $f_k'(d)=0$ is given by 
\begin{align}
d^*= \frac{ \bfa_k^\herm \Sigmam^{-1} \widehat{\Sigmam}_\bfy \Sigmam^{-1} \bfa_k -  \bfa_k^\herm \Sigmam^{-1}\bfa_k }{(\bfa_k^\herm \Sigmam^{-1}\bfa_k )^2}.
\label{eq:ml_update}
\end{align}
Note that $d^* \geq d_0=-\frac{1}{\bfa_k^\herm \Sigmam^{-1}\bfa_k }$, thus, one can check from \eqref{dumm_ml_app3} that $f_k$ is
indeed well-defined at $d=d^*$.
Moreover, we can check from \eqref{dumm_ml_app3} that
$\lim_{\epsilon \to 0^+} f_k(d_0+\epsilon)=\lim_{d \to \infty} f_k(d)=\infty$, thus, $d=d^*$ must be the global minimum of $f_k(d)$ in $(d_0, \infty)$. Note that since after the update we have $\gamma_k \leftarrow \gamma_k + d$, to preserve the positivity of $\gamma_k$, the optimal update step $d$ is in fact given by $\max \big \{d^*, -\gamma_k\big \}$ as illustrated in Algorithm \ref{tab:ML_coord}.

The exact characterization of the performance of this algorithm
remains at the moment an open problem. Specifically, it is not known under which
conditions the iterative algorithm actually reaches the global minimum of \eqref{eq:likelihood_function}.
A heuristic intuition for why the local
minima become rare in the large scale limit may be obtained as follows.
Let us first note some property of the negative log-likelihood cost function \eqref{eq:likelihood_function}.
Define
\beq
\Sigmam(\gammam) := \bfA \Gammam\bfA^\herm + \sigma^2\Id_{\dimPilots}
\label{eq:Sigma_def}
\eeq
and let 
\beq
\phi(\Sigmam):=-\log| \Sigmam| + \tr(\Sigmam\widehat{\Sigmam}_\yv).
\label{eq:phi}
\eeq
Since $\Sigmam(\gammam)$ is positive definite for every non-negative vector
$\gammam$ and $\sigma^2>0$, it is also invertible and the negative log-likelihood cost function can be expressed as
$f(\gammam) = \phi((\Sigmam(\gammam))^{-1})$. Now $\phi:\CC^{\dimPilots\times \dimPilots}\to \RR$
is {\em strictly} convex. Hence, it has a unique minimal value over a convex set.
Let $\Sigmam_*^{-1}$ denote the unique positive definite matrix with
$0\prec\Sigmam_*^{-1}\preceq 1/\sigma^2$ that minimizes
\eqref{eq:phi}, and let
$\Sigmam_*$ be its inverse. Now if the set of pilot sequences $\{\av_k: k\in
    \mathcal{K}_\text{tot}\}$ is such that the set $\{\sum_{k=1}^{K_\text{tot}}
\gamma_k\av_k\av_k^\herm:\gamma_k \geq 0\}$ spans the whole cone of
positive semidefinite matrices, then $\Sigmam_*$ can be represented as
$\Sigmam_* = \Sigmam(\gammam^*)$ and therefore $\gammam^*$ is a global minimizer of $f(\gammam)$ over
$\{\gammam:\gamma_i\geq 0\}$, i.e $\gammam^*_\text{r-ML} = \gammam^*$. Since there
are no local minimizers, the componentwise optimization algorithm will necessarily converge
to a global minimizer.
We cannot apply this argument though, because $\{\sum_{k=1}^{K_\text{tot}}
\gamma_k\av_k\av_k^\herm:\gamma_k \geq 0\}$ will never span the whole cone of
positive semidefinite matrices for any finite $K_\text{tot}$.
Nonetheless, if $K_\text{tot}$ is
large enough we expect the approximation of the cone of positive semidefinite matrices to be good enough
such that the log-likelihood function has few and small local
minima. That explains, at least heuristically, the good convergence behavior of the
componentwise optimization algorithm. 

Another open problem are the conditions, under which it is guaranteed that the solutions of
\eqref{true_ML} and \eqref{a_ML} coincide.
It is only possible to confirm the validity 
of $\gammam^*_\text{r-ML}$ a-posteriori,
i.e. , if $\gammam^*_\text{r-ML}$ happens to be $K_a$-sparse, then it follows that 
$\gammam^*_\text{c-ML} = \gammam^*_\text{r-ML}$. 
Hence, if $\gammam^*_\text{r-ML}$ is $K_a$-sparse {\rm and} the conditions on
$\Am, K_a,M,L$ and $K_\text{tot}$ of Theorem \ref{thm:ml:support} are fulfilled, 
then $\gammam^*_\text{r-ML}$ coincides with the correct solution $\gammam^\circ$ with high probability.
Intuitive explanations for the sparsity inducing nature of \eqref{a_ML} have been provided
in \cite{Tip2001} for the SMV case and in \cite{Wip2007} for the MMV case. 

\subsection{Non-Negative Least Squares}  \label{NNLS-sect}

\label{sec:nnls}
In this section we investigate a different approach to estimate $\gammam$ which can
be directly analyzed and for which we can provide a rigorous non-asymptotic
bound on the $\ell_1$ recovery error. Interestingly, analyzing this bound we find
that the estimation error vanishes for $M\to\infty$ under the same scaling condition
\eqref{eq:rip_condition}
for $K_a,L$ and $K_\text{tot}$
as in Theorem \ref{thm:ml:support}.
The strict convexity of $\phi(\cdot)$ defined in \eqref{eq:phi}
suggests the following approach:
first, find the matrix $\argmin_{\Xi\succeq 0} \phi(\Xi)$,
where $\{\Xi:\Xi\succeq 0\}$ denotes the set of positive
semidefinite matrices. A simple calculation shows that
the minimizer is simply the inverse of the empirical covariance matrix
$\widehat{\Sigmam}_\yv$.
Then, find  the estimate of $\gammam$ as 
\begin{align}\label{eq_nnls}
    \gammam^*=\argmin_{\gammam \in \bR_+^K} \|\Sigmam(\gammam) - \widehat{\Sigmam}_\bfy\|^2_{\sfF}.
\end{align}
Let us introduce the matrix $\bA\in\CC^{\dimPilots^2\times K_\text{tot}}$, whose $k$-th column is defined by:
\beq
\bA_{:,k} := \text{vec}(\av_k\av_k^\herm).
    \label{eq:bA_definition}
\eeq
and let $\wv = \vec(\widehat{\Sigmam}_\bfy - \sigma^2\Id_{\dimPilots})$ 
denote the
$\dimPilots^2 \times 1$ vector obtained by stacking the columns of
$\widehat{\Sigmam}_\bfy- \sigma^2\Id_{\dimPilots}$.
Then, we can write \eqref{eq_nnls} in the convenient form
\begin{align}\label{eq_nnls_vec}
  \gammam^*=\argmin_{\gammam \in \bR_+^K} \|\bA \gammam -\wv\|^2_2,
\end{align}
as a linear \textit{least squares} problem with non-negativity constraint,
known as \textit{non-negative least squares} (NNLS).
Such an algorithm was proposed for the activity detection problem in \cite{docomo}.

A key property of the matrix $\bA$ is that a properly
centered and rescaled version of it has the RIP.
Let us define the centered version of $\bA$, denoted by 
$\mathring{\bA}$ as the $\dimPilots(\dimPilots-1)\times K_\text{tot}$ dimensional matrix,
with the $k$-th column given by
\beq
\label{eq:a_centered}
\mathring{\bA}_{:,k}:=\text{vec}_\text{non-diag}(c_L\av_k\av_k^\herm-\diag(\av_k\av_k^\herm)).
\eeq
Where $\text{vec}_\text{non-diag}(\cdot)$ denotes the vectorization of only the non-diagonal elements,
which in the case of $\av_k\av_k^\herm-\diag(\av_k\av_k^\herm)$, are zero anyway.
The term $c_L = (L-1)/(L-\kappa_a)$ with $\kappa_a = \EE[\|a_{k,i}\|^4]$ ensures a proper normalisation.
Let $m=\dimPilots(\dimPilots-1)$, then the restricted isometry
constant $\delta_{2s}=\delta_{2s}(\mathring{\bA}/\sqrt{m})$ of
$\mathring{\bA}/\sqrt{m}$
of order $2s$ is defined as:
\begin{equation}
    \delta_{2s}:=\sup_{0<\|\vc{v}\|_{\ellp{0}}\leq 2s}\left|\frac{\lVert \mathring{\bA}
    \vc{v} \rVert^2_{\ellp{2}}}{m\|\vc{v}\|^2_{\ellp{2}}}-1\right|
\end{equation}
and if $\delta_{2s}\in[0,1)$ the matrix $\mathring{\bA}/\sqrt{m}$ is said to have RIP of order $2s$.
The normalization by $m$ is necessary to ensure that the expected
norm of the columns
of $\mathring{\bA}$ is of order $\mathcal{O}(1)$ for all $\dimPilots$,
which is a necessary condition for the RIP to hold with high probability.
It is well known that matrices with iid sub-Gaussian entries satisfy
the RIP of order $2s$ with high probability for 
$s = \mathcal{O}(m/\log(eK_\text{tot}/s))$ \cite{Foucart2013}.
The entries of $\mathring{\bA}$ though are neither sub-Gaussian nor
independent which makes the analysis more complicated. 
Nonetheless, recent results in \cite{Adamczak2011,Guedon2014:heavy:columns}
show that matrices which have independent columns (with possibly
correlated entries) satisfy the RIP with high probability if the columns have a bounded sub-exponential
norm. Using the results of \cite{Adamczak2011}
we can establish the following Theorem, which is central for
both Theorem \ref{thm:ml:support} and Theorem \ref{NNLS-theorem}.
\begin{theorem}
    \label{thm:rip_main}
    Let $\Am \in \CC^{\dimPilots\times K_\text{tot}}$, be the pilot matrix with columns drawn uniformly i.i.d. from the
    sphere of radius $\sqrt{\dimPilots}$. 
    Then, with probability exceeding $1- \exp(-c_\delta\sqrt{m})$ on a draw of $\Am$,
    it holds that $\mathring{\bA}/\sqrt{m}$ has the RIP
    of order $2s$ with RIP-constant $\delta_{2s}(\mathring{\bA}/\sqrt{m})< \delta$
    as long as
    \begin{equation}
        2s \leq C_\delta\frac{m}{\log^2(eK_\text{tot}/m)}
    \end{equation}
    for some constants $c,c_\delta,C_\delta>0$ depending only on $\delta$.
    \hfill $\square$
\end{theorem}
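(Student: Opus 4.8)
The plan is to recognize Theorem~\ref{thm:rip_main} as an instance of the restricted isometry property for a matrix whose \emph{columns} (rather than rows) are independent, and to obtain it as a corollary of the column-RIP theorem of \cite{Adamczak2011}. That theorem delivers $\delta_{2s}(\mathring{\bA}/\sqrt{m})<\delta$ with the stated failure probability and scaling once one verifies that the columns $\mathring{\bA}_{:,k}$ are (i) independent, (ii) centered and isotropic up to a scalar, (iii) uniformly sub-exponential, in the sense that $\sup_{y\in S^{m-1}}\|\langle \mathring{\bA}_{:,k},y\rangle\|_{\psi_1}$ is bounded by an absolute constant, and (iv) deterministically bounded in Euclidean norm by $O(\sqrt{m})$. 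I would organize the argument around checking (i)--(iv) and then invoking \cite{Adamczak2011} as a black box.

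Properties (i), (ii) and (iv) are the routine part. Independence is immediate, since $\mathring{\bA}_{:,k}$ depends only on $\av_k$ and the pilots are i.i.d. For centering, the entries of $\mathring{\bA}_{:,k}$ are $c_L a_{k,i}\bar a_{k,j}$ with $i\ne j$, and $\EE[a_{k,i}\bar a_{k,j}]=0$ by the phase invariance of the uniform distribution on the sphere; this is exactly why the diagonal is discarded—the diagonal entries $|a_{k,i}|^2$ have nonzero mean and, worse, obey the deterministic constraint $\sum_i|a_{k,i}|^2=\dimPilots$, a degenerate direction that can never be isotropic. For the covariance, the same phase-invariance argument shows $\EE[a_{k,i}\bar a_{k,j}\,\bar a_{k,p}a_{k,q}]$ vanishes unless $(i,j)=(p,q)$, so $\EE[\mathring{\bA}_{:,k}\mathring{\bA}_{:,k}^\herm]$ is diagonal; using $\sum_{i\ne j}|a_{k,i}|^2|a_{k,j}|^2=\|\av_k\|_2^4-\sum_i|a_{k,i}|^4=\dimPilots^2-\sum_i|a_{k,i}|^4$ gives $\EE[|a_{k,i}|^2|a_{k,j}|^2]=(\dimPilots-\kappa_a)/(\dimPilots-1)$, and the choice $c_L=(\dimPilots-1)/(\dimPilots-\kappa_a)$ makes $\EE[\mathring{\bA}_{:,k}\mathring{\bA}_{:,k}^\herm]=c_L\,\ID_m=(1+O(1/\dimPilots))\ID_m$, i.e.\ isotropic up to a vanishing factor absorbed into $\delta$. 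Finally $\|\mathring{\bA}_{:,k}\|_2^2=c_L^2(\dimPilots^2-\sum_i|a_{k,i}|^4)\le c_L^2\,m$ deterministically, which settles (iv) for free.

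The crux is property (iii), and this is where I expect the real work to lie. Writing $y$ as the matrix $Y$ with zero diagonal, unit Frobenius norm, and entries $y_{ij}$, the marginal becomes the quadratic form $\langle \mathring{\bA}_{:,k},y\rangle=c_L\,\av_k^\herm Y\av_k$. I would transfer from the sphere to a Gaussian model via $\av_k\overset{d}{=}\sqrt{\dimPilots}\,\gv/\|\gv\|_2$ with $\gv\sim\mathcal{CN}(0,\ID_\dimPilots)$, so that $\av_k^\herm Y\av_k=(\dimPilots/\|\gv\|_2^2)\,\gv^\herm Y\gv$. The Hanson--Wright inequality bounds the real and imaginary parts of $\gv^\herm Y\gv$ as sub-exponential with $\psi_1$-norm $\lesssim\|Y\|_\sfF=1$ (the zero diagonal makes the form mean-zero, so no recentering term appears), while the ratio $\dimPilots/\|\gv\|_2^2$ concentrates at $1$ with sub-exponential fluctuations and stays bounded by an absolute constant outside an event of probability $e^{-c\dimPilots}$. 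Hence the product is sub-exponential with $\psi_1$-norm bounded by an absolute constant, \emph{uniformly} in $y\in S^{m-1}$, in $\dimPilots$, and in $k$. Making this transfer rigorous—controlling the product of the near-constant normalizing ratio and the heavier-tailed quadratic form, uniformly over the continuum of admissible $Y$—is the main obstacle; everything else is bookkeeping.

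With (i)--(iv) in hand, \cite{Adamczak2011} yields that $\mathring{\bA}/\sqrt{m}$ has RIP of order $2s$ with constant below $\delta$ except on an event of probability $\exp(-c_\delta\sqrt{m})$ (the $\sqrt{m}$ in the exponent being the signature of $\psi_1$, rather than sub-Gaussian, columns), provided $2s\lesssim m/\log^2(eK_\text{tot}/(2s))$. To match the stated condition $2s\le C_\delta\,m/\log^2(eK_\text{tot}/m)$ it remains only to replace $\log(eK_\text{tot}/(2s))$ by $\log(eK_\text{tot}/m)$: since $2s\le m$, we have $\log(eK_\text{tot}/2s)=\log(eK_\text{tot}/m)+\log(m/2s)$, and at the critical scaling $2s\asymp m/\log^2(eK_\text{tot}/m)$ the correction is $\log(m/2s)\asymp 2\log\log(eK_\text{tot}/m)$, which is lower order, so $\log(eK_\text{tot}/2s)\asymp\log(eK_\text{tot}/m)$ and the two denominators agree up to a constant factor absorbed into $C_\delta$. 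This yields exactly the claimed condition.
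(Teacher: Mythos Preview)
Your high-level plan is the same as the paper's: verify the hypotheses of the column-independent RIP theorem of \cite{Adamczak2011} for $\mathring{\bA}$ and read off the scaling. The substantive difference is in how you establish the uniform sub-exponential bound on the marginals $\langle\mathring{\bA}_{:,k},y\rangle=c_L\,\av_k^\herm Y\av_k$. The paper does not pass through the Gaussian representation; instead it observes that the spherical vector $\av_k$ (in its real $2\dimPilots$-dimensional form) satisfies the \emph{convex concentration property} with dimension-free constant, and then invokes the Hanson--Wright inequality of \cite{Ada2015} for vectors with this property, which directly gives a mixed-tail bound on the quadratic form and hence a $\psi_1$-norm bound. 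This avoids your product-of-random-factors issue entirely. Your route via $\av_k\overset{d}{=}\sqrt{\dimPilots}\,\gv/\|\gv\|_2$ also works---use $\PP(\|\gv\|_2^2<\dimPilots/2)\le e^{-c\dimPilots}$ together with the deterministic bound $|\av_k^\herm Y\av_k|\le\|\av_k\|_2^2\|Y\|_{op}\le\dimPilots$ to patch the tails---but it is less clean and, as you note, is where the real care lies.

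Two smaller points. First, the paper passes to the real matrix $\bA^R=\sqrt{2}[\Re(\mathring{\bA});\Im(\mathring{\bA})]$ before applying \cite{Adamczak2011}; you work complex throughout, which is fine but worth flagging. Second, your item (iv) gives only a deterministic upper bound $\|\mathring{\bA}_{:,k}\|_2^2\le c_L^2 m$ (indeed $\sum_i|a_{k,i}|^4\ge\dimPilots$ by Cauchy--Schwarz, so this is correct), but the \cite{Adamczak2011} result as used in the paper also needs two-sided concentration $\PP(|\,\|\mathring{\bA}_{:,k}\|_2^2/m-1|>\theta')$ small. This is easy---$\sum_i|a_{k,i}|^4$ is $O(\sqrt{\dimPilots})$-Lipschitz on the sphere and concentrates by the same Lipschitz concentration you would use anyway---but it is not automatic from your (iv) alone and should be stated.
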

\begin{proof}
    See Appendix \ref{appendix:ripproof}.
\end{proof}
NNLS has a special property, as discussed for
example in \cite{slawski2013non} and referred to as the $\mathcal{M}^+$-criterion in \cite{kueng2016robust}, 
which makes it particularly suitable for recovering sparse vectors:
If  the row span of $\bA$ intersects the positive orthant, NNLS
implicitly also performs $\ell_1$-regularization.  
Because of these features, NNLS has recently gained interest in many applications in signal
processing \cite{song2017scalable},  compressed  sensing \cite{kueng2016robust}, and machine learning.  
In our case the $\mathcal{M}^+$--criterion is fulfilled in an optimally--conditioned
manner. Combined with the RIP of $\A..$ it  allows us to establish the following result:
\begin{theorem}  \label{NNLS-theorem}
Let $\Am \in \CC^{\dimPilots\times K_\text{tot}}$, be the pilot matrix with columns drawn uniformly i.i.d. from the
sphere of radius $\sqrt{\dimPilots}$. 
There exist universal 
constants $c_i > 0$, $i=1,...,5$, depending only on some common parameter, but not on the system parameters,
(see the proof in Appendix \ref{appendix:nnlsproof} for details) such that, if
  \begin{equation}    
    s \leq c_1\frac{\dimPilots^2}{\log^2(eK_\text{tot}/\dimPilots^2)},
    \label{eq:phase_transition}
  \end{equation}
  then with probability exceeding $1-\exp(-c_5L)$ (on a draw of $\Am$) the following holds: 
  For all $s$-sparse activity pattern vectors $\gammam^\circ$
  {and all realizations of $\widehat{\Sigmam}_\yv$}, the solution $\bxnnls$ of \eqref{eq_nnls_vec} fulfills for $1\leq p\leq 2$ the bound:
  \begin{equation}
    \begin{split}
      \lVert \bxtrue &- \bxnnls\rVert_{\ellp{p}}
      \leq \frac{c_2}{s^{1-\frac{1}{p}}} \sigma_s (\bxtrue)_{\ellp{1}} +
      \frac{c_3}{s^{\frac{1}{2}-\frac{1}{p}}} \left( \frac{\sqrt{\dimPilots}}{\sqrt{s}} +
        c_4 \right) \frac{\lVert \vc{d} \rVert_{\ellp{2}}}{\dimPilots},
    \end{split}
    \label{eq:short:thm:mse}
  \end{equation}
  where  $\sigma_s (\bxtrue)_1$ denotes the $\ell_1$--norm of
  $\gammam^\circ$ after removing its $s$ largest components and where
   \beq
   \bfd=\vec\left(\widehat{\Sigmam}_\bfy - \sum_{k=1}^{K_\text{tot}} \gamma^\circ_k
   \bfa_k \bfa_k^\herm - \sigma^2\Id_{\dimPilots}\right).
   \eeq
  \hfill $\square$
\end{theorem}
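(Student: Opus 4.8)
The plan is to read the NNLS program \eqref{eq_nnls_vec} as a noisy linear inverse problem and to run the restricted-isometry / robust-null-space machinery, with two features specific to the covariance model: the self-regularization supplied by the $\mathcal{M}^+$-criterion, and the passage from the uncentered matrix $\bA$ that defines the program to the centered matrix $\mathring{\bA}$ that carries the RIP of Theorem~\ref{thm:rip_main}. First I would write the observation as $\wv = \bA\bxtrue + \bfd$, so that $\bfd$ is exactly the noise of a linear model. Setting $\hv := \bxnnls - \bxtrue$, feasibility of the nonnegative vector $\bxtrue$ together with optimality of $\bxnnls$ gives $\|\bA\bxnnls - \wv\|_{\ellp{2}} \le \|\bA\bxtrue - \wv\|_{\ellp{2}} = \|\bfd\|_{\ellp{2}}$, whence the tube bound $\|\bA\hv\|_{\ellp{2}} \le 2\|\bfd\|_{\ellp{2}}$ follows by the triangle inequality.

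Second, I would exploit the $\mathcal{M}^+$-criterion in its optimally-conditioned form. Since each column of $\bA$ equals $\vec(\av_k\av_k^\herm)$ and the pilots lie on the sphere of radius $\sqrt{\dimPilots}$, the fixed vector $\tv := \vec(\Id_{\dimPilots})/\dimPilots$ obeys $\langle \tv, \bA_{:,k}\rangle = \tr(\av_k\av_k^\herm)/\dimPilots = 1$ for every $k$, with $\|\tv\|_{\ellp{2}} = 1/\sqrt{\dimPilots}$. As both $\bxnnls$ and $\bxtrue$ are nonnegative, their $\ell_1$-norms are linear functionals of $\bA(\cdot)$, so $\|\bxnnls\|_{\ellp{1}} - \|\bxtrue\|_{\ellp{1}} = \langle \tv, \bA\hv\rangle \le \|\tv\|_{\ellp{2}}\|\bA\hv\|_{\ellp{2}} \le 2\|\bfd\|_{\ellp{2}}/\sqrt{\dimPilots}$; that is, NNLS minimizes the $\ell_1$-norm up to an additive slack of order $\|\bfd\|_{\ellp{2}}/\sqrt{\dimPilots}$. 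Letting $S$ index the $s$ largest entries of $\bxtrue$ and splitting by the triangle inequality yields the cone condition $\|\hv_{S^c}\|_{\ellp{1}} \le \|\hv_S\|_{\ellp{1}} + 2\sigma_s(\bxtrue)_{\ellp{1}} + 2\|\bfd\|_{\ellp{2}}/\sqrt{\dimPilots}$.

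Third, I would connect the two matrices. Since $\bA\hv = \vec(\sum_k h_k\av_k\av_k^\herm)$ and $\mathring{\bA}\hv$ is, up to the bounded factor $c_L$, the vectorization of the off-diagonal part of the same Hermitian matrix, discarding the diagonal only shrinks the Frobenius norm; hence $\|\mathring{\bA}\hv\|_{\ellp{2}} \le c_L\|\bA\hv\|_{\ellp{2}} \le 2c_L\|\bfd\|_{\ellp{2}}$, and dividing by $\sqrt{m}=\sqrt{\dimPilots(\dimPilots-1)}$ gives $\|\mathring{\bA}\hv/\sqrt{m}\|_{\ellp{2}} \lesssim \|\bfd\|_{\ellp{2}}/\dimPilots$ -- precisely the factor $\|\bfd\|_{\ellp{2}}/\dimPilots$ appearing in \eqref{eq:short:thm:mse}. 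Under \eqref{eq:phase_transition}, Theorem~\ref{thm:rip_main} guarantees, on an event depending only on the draw of $\Am$ and of probability at least $1-\exp(-c_5\dimPilots)$ (using $\sqrt{m}\asymp\dimPilots$), that $\mathring{\bA}/\sqrt{m}$ has RIP of order $2s$ with constant small enough to imply the robust $\ell_2$-null-space property $\|\hv_S\|_{\ellp{2}} \le (\rho/\sqrt{s})\|\hv_{S^c}\|_{\ellp{1}} + \tau\|\mathring{\bA}\hv/\sqrt{m}\|_{\ellp{2}}$ \cite{Foucart2013}.

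Finally, I would feed the cone condition and the transferred tube bound into the standard passage from the robust null-space property to the $\ell_p$-error estimate \cite{Foucart2013}, which turns the two inputs into $\|\hv\|_{\ellp{p}} \lesssim s^{1/p-1}\,\eta_1 + s^{1/p-1/2}\,\eta_2$ with $\eta_1 = \sigma_s(\bxtrue)_{\ellp{1}} + \|\bfd\|_{\ellp{2}}/\sqrt{\dimPilots}$ and $\eta_2 \asymp \|\bfd\|_{\ellp{2}}/\dimPilots$. The $\sigma_s$ contribution is the $c_2$-term; the self-regularization slack $s^{1/p-1}\|\bfd\|_{\ellp{2}}/\sqrt{\dimPilots}$ reorganizes into the $\sqrt{\dimPilots}/\sqrt{s}$ part of the bracket, and the genuine measurement noise $s^{1/p-1/2}\|\bfd\|_{\ellp{2}}/\dimPilots$ into the $c_4$ part, reproducing \eqref{eq:short:thm:mse}. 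Uniformity over all $s$-sparse $\bxtrue$ and all realizations of $\SigmaEmp$ is automatic, since once the RIP event for $\Am$ holds every inequality above is deterministic. I expect the main obstacle to lie in step three's bookkeeping: one must check that the $\mathcal{M}^+$-vector $\vec(\Id_{\dimPilots})$ lives in exactly the diagonal subspace that the centering annihilates, so that the self-regularization (which sees $\bA$) and the RIP (which sees $\mathring{\bA}$) act on complementary components of the residual and compose without loss, and that the factor $c_L$ together with $\sqrt{m}\asymp\dimPilots$ yields the clean $1/\dimPilots$ normalization rather than an $s$-dependent one.
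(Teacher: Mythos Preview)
Your proposal is correct and follows essentially the same strategy as the paper's proof: the $\mathcal{M}^+$-criterion with $\tv=\vec(\Id_\dimPilots)/\dimPilots$, the RIP of $\mathring{\bA}/\sqrt{m}$ from Theorem~\ref{thm:rip_main}, the implication RIP $\Rightarrow$ robust $\ell_2$-NSP, and the Foucart--Rauhut error estimate. The only organizational difference is that the paper transfers the NSP from $\mathring{\bA}$ to $\bA$ via the inequality $\|\bA\vv\|_2\ge\|\mathring{\bA}\vv\|_2$ and then invokes \cite[Theorem~4.25]{Foucart2013} directly for $\bA$, whereas you keep the NSP on $\mathring{\bA}$ and instead transfer the tube bound $\|\bA\hv\|_2\le 2\|\dv\|_2$ to $\mathring{\bA}$ via the same inequality; both routes are equivalent and yield the same constants.
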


The proof is based on
a combination of the NNLS results of \cite{kueng2016robust} and an
extension of RIP-results for the heavy-tailed column-independent model
\cite{Adamczak2011,Guedon2014:heavy:columns}.
The common parameter on which the constants $c_i$ depend is the RIP constant of a
properly centered version of $\bA$, defined in \eqref{eq:bA_definition}. We state this dependence explicitly
to emphasize that
Theorem \ref{NNLS-theorem} holds also for more general random models for $\Am$, for which $\bA$ has the RIP.
Then the constants
$c_2,c_3,c_4$ can be computed explicitly (see Appendix \ref{appendix:nnlsproof}) depending on the RIP
constant of the other matrix model.
The probability
term $1-\exp(-c_5L)$ is precisely the probability that the centered version of the random matrix $\bA$ has the RIP.
The result is uniform meaning that with high probability (on a draw of
$\Am$) it holds {\em for all} $\gammam^\circ$ 
and {\em for all}
realizations of the random variable $\widehat{\Sigmam}_\yv$.
For
$s=K_a=\lVert\gammam^\circ\rVert_0$ it implies (up to
the $\|\vc{d}\|_{2}$-term) exact recovery since in this case
$\sigma_s (\bxtrue)_{\ellp{1}}=0$.  
A relevant extension of this result to the case $p\rightarrow\infty$ would be important but, in
this generality, it is not known whether one can hope for a linear scaling in $s$ (see, for example \cite[Theorem
3.2]{dirksen:ripgap}). Nonetheless, since $\|\cdot\|_{\ellp{\infty}}\leq\|\cdot\|_{\ellp{p}}$ our result
\eqref{eq:short:thm:mse} also implies an estimate for the
communication relevant $\ell_\infty$-case but with sub-optimal scaling
(we will discuss this below).  Furthermore improvements for this
particular case may be possible in the non-uniform or averaged case, as it has been investigated for the sub-Gaussian case in
\cite{slawski2013non}.

The analysis of the random variable $\| \dv\|_2$ given in Appendix \ref{cov_err_app} shows that,
for every realization of $\Am$ it holds that
\begin{align}
    \EE_{\Ym|\Am}[\|\bfd\|_2]
  &=\frac{\dimPilots}{\sqrt{M}}(\|\bxtrue\|_1+\sigma^2)
  \label{eq:d_expectation}
\end{align}
with a deviation tail distribution satisfying 
\beq
\PP_{\Ym|\Am}\left(\|\dv\|_2 > \sqrt{\alpha_\epsilon} \EE_{\Ym|\Am}[\|\dv\|_2]\right) \leq \epsilon
\label{eq:d_bound}
\eeq
for 
\beq
\alpha_\epsilon = c\log((e\dimPilots)^2/\epsilon)
\eeq
with some universal constant $c>0$.
The bounds \eqref{eq:d_expectation} and \eqref{eq:d_bound} are independent of the
realization of $\Am$, so the conditional expectation/probability
can be replaced by the total expectation/probability.
Assuming that $\Am$ is chosen independent of the channel realization, it holds that
with probability $(1-\epsilon)(1-\exp(-c_5L))\geq 1 - \epsilon - \exp(-c_5L)$ the pilot matrix $\Am$
satisfies the condition in Theorem \ref{NNLS-theorem} {\em and} the channel realization
$\dv$ satisfies \eqref{eq:d_bound}.
Setting $s = K_a$ in Theorem \ref{NNLS-theorem} (yielding $\sigma_s(\gammam^\circ)=0$), for $p=1$ we get the following:
\begin{corollary}
    \label{cor:nnls}
  With the assumptions as in Theorem \ref{NNLS-theorem}, the following holds:
  For any $K_a$--sparse
  $\bxtrue$ with
  \begin{equation}    
      K_a \leq c_1\frac{\dimPilots^2}{\log^2(eK_\text{tot}/\dimPilots^2)},
      \label{eq:phase_transition_K_a}
  \end{equation}
  the NNLS estimate $\bxnnls$ fulfills:
  \begin{equation}
      \begin{split}
      &\frac{\lVert \bxtrue -
      \bxnnls\rVert_{\ellp{1}}}{\|\bxtrue\|_{\ellp{1}}}
      \leq c_3\left(
          \sqrt{\dimPilots} +
          c_4\sqrt{K_a}
      \right) \frac{1+\frac{\sigma^2}{\|\bxtrue\|_{\ellp{1}}}}{\sqrt{M/\alpha_\epsilon}}
      \end{split}
      \label{eq:short:nsp4:p1}
  \end{equation}
  with
  probability at least $1-\epsilon - \exp(-c_5\dimPilots)$, where $c_1,c_3,c_4,c_5$ are the same constants
  as in Theorem \ref{NNLS-theorem}.
  \label{cor:nnls:p1}
  \hfill $\square$
\end{corollary}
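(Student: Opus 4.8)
The plan is to obtain this corollary as a direct specialization of Theorem~\ref{NNLS-theorem} to $p=1$ and $s=K_a$, combined with the tail control of the covariance-estimation error $\dv$ recorded in \eqref{eq:d_expectation}--\eqref{eq:d_bound}. All the conceptual work (the RIP of the centered matrix $\bA$ via Theorem~\ref{thm:rip_main}, the $\mathcal{M}^+$-criterion, and the moment bound on $\|\dv\|_{\ellp{2}}$) is already done; what remains is bookkeeping of exponents together with a union bound over the two sources of randomness.

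First I would set $s=K_a$ in \eqref{eq:short:thm:mse}. Since $\bxtrue$ is $K_a$-sparse, deleting its $K_a$ largest entries leaves the zero vector, so $\sigma_{K_a}(\bxtrue)_{\ellp{1}}=0$ and the entire approximation-error term (the one carrying $c_2$) vanishes; this is why $c_2$ does not appear in the statement. With $p=1$ the two exponents in \eqref{eq:short:thm:mse} evaluate to $1-\tfrac1p=0$ and $\tfrac12-\tfrac1p=-\tfrac12$, so the surviving term is $c_3\sqrt{K_a}\bigl(\sqrt{\dimPilots}/\sqrt{K_a}+c_4\bigr)\|\dv\|_{\ellp{2}}/\dimPilots$. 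Distributing the factor $\sqrt{K_a}$ collapses this to $c_3(\sqrt{\dimPilots}+c_4\sqrt{K_a})\|\dv\|_{\ellp{2}}/\dimPilots$, and the sparsity hypothesis \eqref{eq:phase_transition} becomes exactly \eqref{eq:phase_transition_K_a}.

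Next I would insert the bound on $\|\dv\|_{\ellp{2}}$. On the event in \eqref{eq:d_bound}, which holds with probability at least $1-\epsilon$, one has $\|\dv\|_{\ellp{2}}\le\sqrt{\alpha_\epsilon}\,\EE[\|\dv\|_{\ellp{2}}]$, and by \eqref{eq:d_expectation} the expectation equals $(\dimPilots/\sqrt{M})(\|\bxtrue\|_{\ellp{1}}+\sigma^2)$. Hence $\|\dv\|_{\ellp{2}}/\dimPilots\le(\sqrt{\alpha_\epsilon}/\sqrt{M})(\|\bxtrue\|_{\ellp{1}}+\sigma^2)$. Substituting and dividing through by $\|\bxtrue\|_{\ellp{1}}$ turns the ratio $(\|\bxtrue\|_{\ellp{1}}+\sigma^2)/\|\bxtrue\|_{\ellp{1}}$ into $1+\sigma^2/\|\bxtrue\|_{\ellp{1}}$ and rewrites $\sqrt{\alpha_\epsilon}/\sqrt{M}$ as $1/\sqrt{M/\alpha_\epsilon}$, giving precisely \eqref{eq:short:nsp4:p1}.

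The only point requiring care---hardly a genuine obstacle---is the probability accounting. The deterministic guarantee \eqref{eq:short:thm:mse} holds on the RIP event for the centered matrix $\bA$, which depends only on the draw of $\Am$ and has probability at least $1-\exp(-c_5\dimPilots)$, whereas the tail event \eqref{eq:d_bound} depends only on the channel realization given $\Am$. Because $\Am$ is drawn independently of the channel, these two events are independent, so their intersection has probability at least $(1-\epsilon)(1-\exp(-c_5\dimPilots))\ge 1-\epsilon-\exp(-c_5\dimPilots)$. On that intersection both inputs to the computation above are simultaneously valid, which yields the claim with the constants $c_1,c_3,c_4,c_5$ inherited unchanged from Theorem~\ref{NNLS-theorem}.
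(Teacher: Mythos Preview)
Your proposal is correct and mirrors the paper's own derivation: set $s=K_a$ and $p=1$ in Theorem~\ref{NNLS-theorem} so that $\sigma_{K_a}(\bxtrue)_{\ellp{1}}=0$, plug in the tail bound \eqref{eq:d_bound} together with the expectation \eqref{eq:d_expectation} for $\|\dv\|_{\ellp{2}}$, and combine the RIP event for $\Am$ with the concentration event for $\dv$ via $(1-\epsilon)(1-\exp(-c_5\dimPilots))\geq 1-\epsilon-\exp(-c_5\dimPilots)$. The only nuance is that the paper justifies the probability product slightly differently---noting that the bounds \eqref{eq:d_expectation}--\eqref{eq:d_bound} hold uniformly over realizations of $\Am$ (rather than invoking independence per se)---but the conclusion and the arithmetic are identical.
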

Using the well-known inequality
$\|\gammam^\circ\|_1 \leq \sqrt{K_a} \|\gammam^{\circ}\|_2$,
Theorem \ref{NNLS-theorem} for the case $p=2$ gives:
\begin{corollary}
    Under the same conditions as in Corollary \ref{cor:nnls:p1}
  \begin{align}
      \frac{\lVert \bxtrue - \bxnnls\rVert_{\ellp{2}}}{\|\gammam^\circ\|_2} \leq
      c_3 \left (\sqrt{\dimPilots}+c_4\sqrt{K_a}\right)
      \frac{\left(1+\frac{\sigma^2}{\sqrt{K_a}\|\gammam^\circ\|_2}\right)}{\sqrt{M/\alpha_\epsilon}}
    \label{gamma_perf}
  \end{align}
  holds with probability at least $1- \epsilon - \exp(-c_5\dimPilots)$
  where $c_3,c_4,c_5$ are the same constants
  as in Theorem \ref{NNLS-theorem}
  provided that \eqref{eq:phase_transition_K_a} holds.  \hfill $\square$
  \label{cor:nnls:mse}
\end{corollary}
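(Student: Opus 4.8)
The plan is to derive Corollary~\ref{cor:nnls:mse} as a direct algebraic specialization of Theorem~\ref{NNLS-theorem} to $p=2$, followed by insertion of the already-established tail bound on $\|\dv\|_2$. First I would set $s=K_a=\|\gammam^\circ\|_0$ in the master estimate \eqref{eq:short:thm:mse}. Since $\gammam^\circ$ is exactly $K_a$-sparse, the best $s$-term approximation error satisfies $\sigma_s(\gammam^\circ)_{\ellp{1}}=0$, so the entire first term drops out. For $p=2$ the remaining prefactor exponent is $s^{\frac12-\frac1p}=s^0=1$, leaving the clean bound
\beq
\|\gammam^\circ - \gammam^*\|_{\ellp{2}} \leq c_3\left(\frac{\sqrt{\dimPilots}}{\sqrt{K_a}}+c_4\right)\frac{\|\dv\|_{\ellp{2}}}{\dimPilots}.
\eeq

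Next I would control $\|\dv\|_{\ellp{2}}$ using the moment and tail estimates proved in Appendix~\ref{cov_err_app}. On the event of \eqref{eq:d_bound}, which has probability at least $1-\epsilon$, the quantity $\|\dv\|_{\ellp{2}}$ is at most $\sqrt{\alpha_\epsilon}$ times its mean, and by \eqref{eq:d_expectation} that mean equals $\frac{\dimPilots}{\sqrt M}\big(\|\gammam^\circ\|_{\ellp{1}}+\sigma^2\big)$. Substituting this in and cancelling the common factor $\dimPilots$ gives $\|\gammam^\circ - \gammam^*\|_{\ellp{2}} \leq c_3(\sqrt{\dimPilots}/\sqrt{K_a}+c_4)\,(\|\gammam^\circ\|_{\ellp{1}}+\sigma^2)/\sqrt{M/\alpha_\epsilon}$. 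I would then divide through by $\|\gammam^\circ\|_{\ellp{2}}$ and apply the stated inequality $\|\gammam^\circ\|_{\ellp{1}}\leq\sqrt{K_a}\,\|\gammam^\circ\|_{\ellp{2}}$ to the sparse term only, turning the ratio-numerator into $\sqrt{K_a}+\sigma^2/\|\gammam^\circ\|_{\ellp{2}}$. Factoring $\sqrt{K_a}$ out of this numerator and distributing it against $(\sqrt{\dimPilots}/\sqrt{K_a}+c_4)$ yields exactly $(\sqrt{\dimPilots}+c_4\sqrt{K_a})$ multiplied by $(1+\sigma^2/(\sqrt{K_a}\|\gammam^\circ\|_{\ellp{2}}))$, which is precisely \eqref{gamma_perf}.

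I do not expect a genuine obstacle here, since the statement is a deterministic consequence of Theorem~\ref{NNLS-theorem} combined with the moment/tail bounds for $\|\dv\|_{\ellp{2}}$; the only care needed is in the probability bookkeeping rather than the algebra. The bound \eqref{eq:short:thm:mse} holds on the RIP event for the centered version of $\bA$ (probability at least $1-\exp(-c_5\dimPilots)$), while \eqref{eq:d_bound} concerns the independent channel/noise draw (probability at least $1-\epsilon$). Because $\Am$ is chosen independently of the channel realization, a union bound over the two failure events delivers the claimed joint probability $1-\epsilon-\exp(-c_5\dimPilots)$, and the constants $c_3,c_4,c_5$ are inherited verbatim from Theorem~\ref{NNLS-theorem}. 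The sparsity hypothesis \eqref{eq:phase_transition_K_a} is exactly what guarantees the RIP event, so no new constraint on $K_a$ is introduced. The one point to watch is that the $\ell_1/\ell_2$ inequality is applied only to $\gammam^\circ$ and never to the additive $\sigma^2$ residual, which must remain outside the sparsity-based estimate so that it surfaces as the separate $\sigma^2/(\sqrt{K_a}\|\gammam^\circ\|_{\ellp{2}})$ term in the final bound.
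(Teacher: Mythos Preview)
Your proposal is correct and follows exactly the same approach as the paper: the paper's derivation consists of the single remark that applying $\|\gammam^\circ\|_1 \leq \sqrt{K_a}\,\|\gammam^\circ\|_2$ to the $p=2$ case of Theorem~\ref{NNLS-theorem} (with $s=K_a$ and the same tail bound on $\|\dv\|_2$ as in Corollary~\ref{cor:nnls:p1}) yields the result. Your write-up simply makes the intermediate algebra and the probability union bound explicit, which the paper leaves to the reader.
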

In conclusion,  the following scaling law is sufficient to achieve a vanishing estimation error.
\begin{corollary}
Let $M,K_a,\dimPilots\to\infty$ with
$K_a$ as in \eqref{eq:scaling_K_a}
and  $M=K_a^\kappa$ for $\kappa>1$ then for $p=1,2$
it holds with probability 1 that
\beq
\lim_{M\to\infty}\frac{\lVert \bxtrue -
        \bxnnls\rVert_{\ellp{p}}}{\|\bxtrue\|_{\ellp{p}}} = 0.
\eeq
\label{cor:nnls_scaling}
\hfill $\square$
\end{corollary}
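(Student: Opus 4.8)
The plan is to derive the almost-sure statement from the non-asymptotic high-probability bounds of Corollary~\ref{cor:nnls:p1} (for $p=1$) and Corollary~\ref{cor:nnls:mse} (for $p=2$) in two stages: first show that the right-hand sides of \eqref{eq:short:nsp4:p1} and \eqref{gamma_perf} vanish deterministically along the prescribed scaling, and then promote the resulting convergence in probability to convergence with probability one via the first Borel--Cantelli lemma. I would index the problem instances by $n$, drawing an independent pilot matrix $\Am$ and channel realization for each $n$, with $L_n,K_{a,n},M_n=K_{a,n}^{\kappa}\to\infty$ and $K_{a,n}$ obeying \eqref{eq:scaling_K_a}. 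Each corollary bound fails with probability at most $\epsilon_n+\exp(-c_5 L_n)$, so the key is to choose $\epsilon_n$ summable yet large enough that $\alpha_{\epsilon_n}$ stays polylogarithmic.

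For the deterministic decay, write $\ell_n:=\log(eK_\text{tot}/L_n^2)$ and read \eqref{eq:scaling_K_a} as a tight scaling, $K_{a,n}=\Theta(L_n^2/\ell_n^2)$, so that $L_n=\Theta(\ell_n\sqrt{K_{a,n}})$. Because the nonzero LSFCs lie in $[g_\text{min},g_\text{max}]$ we have $\|\gammam^\circ\|_{1}\ge K_{a,n}\,g_\text{min}$ and $\sqrt{K_{a,n}}\,\|\gammam^\circ\|_{2}\ge K_{a,n}\,g_\text{min}$, so the correction factors $1+\sigma^2/\|\gammam^\circ\|_1$ and $1+\sigma^2/(\sqrt{K_{a,n}}\|\gammam^\circ\|_2)$ appearing in \eqref{eq:short:nsp4:p1} and \eqref{gamma_perf} are bounded and in fact tend to $1$. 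Using $K_{a,n}\le c_1 L_n^2$ I would also bound $\sqrt{L_n}+c_4\sqrt{K_{a,n}}\le C L_n$. Hence, for both $p=1,2$, the relative error is at most $C'L_n\sqrt{\alpha_{\epsilon_n}/M_n}$, and substituting $M_n=K_{a,n}^{\kappa}$ together with $L_n=\Theta(\ell_n\sqrt{K_{a,n}})$ collapses this to a bound of the form $C''\,\ell_n\sqrt{\alpha_{\epsilon_n}}\,K_{a,n}^{-(\kappa-1)/2}$.

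Since $\kappa>1$, the polynomial factor $K_{a,n}^{-(\kappa-1)/2}\to 0$, and as long as $K_\text{tot}$ does not grow too fast (e.g.\ at most polynomially in the remaining parameters, so that $\ell_n=O(\log K_{a,n})$) and $\epsilon_n$ is chosen so that $\alpha_{\epsilon_n}=c\log((eL_n)^2/\epsilon_n)$ is polylogarithmic, the prefactor $\ell_n\sqrt{\alpha_{\epsilon_n}}$ grows only polylogarithmically and is dominated by the polynomial decay; thus the deterministic bound tends to $0$. Concretely I would take $\epsilon_n=n^{-2}$, which makes $\sum_n\epsilon_n<\infty$ while keeping $\log(1/\epsilon_n)=O(\log n)$, and since $L_n\to\infty$ faster than $\log n$ one also has $\sum_n\exp(-c_5 L_n)<\infty$. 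Borel--Cantelli then yields that, almost surely, only finitely many instances violate the corollary bound; on the complementary event the relative error is at most the deterministic bound, which vanishes, giving $\lim_{M\to\infty}\|\gammam^\circ-\gammam^*\|_{p}/\|\gammam^\circ\|_{p}=0$ a.s.\ for $p=1,2$.

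The step I expect to be the main obstacle is the bookkeeping of the competing polylogarithmic factors --- the $\ell_n$ from the admissible sparsity level in \eqref{eq:scaling_K_a}, the $\sqrt{\alpha_{\epsilon_n}}$ from the covariance-estimation tail bound \eqref{eq:d_bound}, and the $\log(1/\epsilon_n)$ budget required for summability --- and confirming that all of them are beaten by the single polynomial gain $K_{a,n}^{(\kappa-1)/2}$ furnished by the strict inequality $\kappa>1$. This is exactly where $\kappa>1$ (as opposed to $\kappa=1$) is indispensable, and where one must make explicit the mild growth restriction on $K_\text{tot}$ that is implicit in treating \eqref{eq:scaling_K_a} as a tight scaling law.
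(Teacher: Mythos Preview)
Your proposal is correct and follows exactly the route the paper implicitly intends: the corollary is stated in the paper without proof, as an immediate scaling consequence of Corollaries~\ref{cor:nnls:p1} and~\ref{cor:nnls:mse}, and your argument simply makes that deduction explicit. In fact you go beyond the paper by (i) spelling out that ``$K_a$ as in \eqref{eq:scaling_K_a}'' must be read as a tight $\Theta$-scaling (otherwise, e.g.\ $K_a\ll L$, the $\sqrt{L}$ term in \eqref{eq:short:nsp4:p1} is not controlled by $M=K_a^\kappa$ for $\kappa$ just above $1$), and (ii) supplying the Borel--Cantelli step to pass from the high-probability bounds to the almost-sure statement, which the paper asserts but does not justify.
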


This shows that the NNLS estimator \eqref{eq_nnls} can identify up to $O(\dimPilots^2)$ active users
by paying only a poly-logarithmic penalty  $O(\log^2 (\frac{K_\text{tot}}{K_a}))$
for increasing the number of potential users $K_\text{tot}$.
This is a very appealing property in practical IoT setups where, as already mentioned in the introduction, 
$K_\text{tot}$ may be very large. 
Note, that
the scaling of the identifiable users is the same as that of the (uncomputable)
restricted ML estimator, see Corollary \ref{cor:ml}, while the scaling of the minimum required $M$
agrees up to poly-logarithmic factors.

\subsection{Iterative Algorithms}  \label{Algo-sect}

Finding the ML estimate $\gammam^*$ in \eqref{a_ML} or the NNLS estimate 
\eqref{eq_nnls} requires the optimization of a function over
the positive orthant $\bR_+^{K_\text{tot}}$.
In Section \ref{ML-sect} we have derived the componentwise minimization condition \eqref{eq:ml_update}
of the log-likelihood cost function. 
Starting from an initial point $\gammam$, 
at each step of the algorithm we minimize  $f(\gammam)$ with respect to only one of its arguments $\gamma_k$ according to 
\eqref{eq:ml_update}.
We refer to the resulting scheme as an {\em iterative componentwise  minimization algorithm}. 
As discussed before, hopefully this will converge to the solution of \eqref{a_ML}.
Variants of the algorithm may differ in the way the initial point is chosen and in the way
the components are chosen for update.  The noise variance $\sigma^2$ can also be included as an additional 
optimization parameter and estimated along $\gammam$ \cite{Tip2001}.

The same iterative componentwise minimization approach can be used to solve (iteratively) the NNLS problem (\ref{eq_nnls}).
Of course, the component update step is different in the case of ML and in the case of NNLS. We omit the derivation of the 
NNLS component update since it consists of a straightforward differentiation operation.  Since NNLS is convex, in this case the componentwise 
minimization algorithm is guaranteed to converge to the solution of the NNLS problem (\ref{eq_nnls}). 
Given the analogy of the two iterative componentwise minimization algorithms for ML and for NNLS, we summarize them 
in a unified manner in Algorithm \ref{tab:ML_coord}.

\subsubsection{ML and NNLS with Knowledge of the LSFCs}
\label{sec:box-constraint}

Since the ML and NNLS algorithms are non-Bayesian in nature, they work well without
any a-priori information on the LSFCs. 
If $\gv^\circ$ (true values of the LSFCs of all users, active and not) is known, 
the algorithms can be slightly improved by projecting each $k$-th coordinate update on the interval $[0,g_k^\circ]$
(see step
\ref{alg:update_step}) in Algorithm \ref{tab:ML_coord}.
In this case the thresholding step can be improved by choosing the thresholds
relative to the channel strength
$\widehat{\mathcal{A}}_{\gv^0} = \{i:\widehat{\gamma}_i > \theta g_k^0\}$.
\begin{algorithm}[t]
	\caption{Activity Detection via Coordinate-wise Optimization } %
	\label{tab:ML_coord} 
{\small
	\begin{algorithmic}[1]
		\State {\bf Input:} The sample covariance matrix $\widehat{\Sigmam}_\bfy=\frac{1}{M} \bfY \bfY^\herm$ of the $\dimPilots \times M$ matrix of samples $\bfY$.
		
		\State {\bf Input:} The LSFCs of $K_\text{tot}$ users $(g_1, \dots, g_{K_\text{tot}})$ if available.
		
		\vspace{1mm}

		\State {\bf Initialize:} $\Sigmam=\sigma^2 \bfI_{\dimPilots}$, $\gammam={\bf 0}$.
		\vspace{1mm}
		
		\For { $i=1,2, \dots$}
		\State {Select an index $k \in [K_\text{tot}]$ corresponding to the $k$-th component of  $\gammam=(\gamma_1, \dots, \gamma_{K_\text{tot}})^\transp$  randomly or according to a specific schedule.}
		\vspace{2mm}
		\State {\bf If ML:} Set $d_0^*= \max \Big \{\frac{ \bfa_k^\herm \Sigmam^{-1} \widehat{\Sigmam}_\bfy \Sigmam^{-1} \bfa_k -  \bfa_k^\herm \Sigmam^{-1}\bfa_k }{(\bfa_k^\herm \Sigmam^{-1}\bfa_k )^2}, -\gamma_{k} \Big \}$.
		
		\State {\bf If NNLS:} Set $d_0^*=\max \Big \{  \frac{\bfa_k^\herm (\widehat{\Sigmam}_\bfy - \Sigmam ) \bfa_k}{\|\bfa_k\|_2^4}, -\gamma_{k} \}$. 
		
		\vspace{1mm}

		\State Set $d^*=\min \{ d^*_0, g_k - \gamma_k\}$ if LSFC $g_k$ is available and
		$d^*=d_0^*$ otherwise.
        \label{alg:update_step}
		
		\State Update $\gamma_{k} \leftarrow \gamma_{k}+ d^*$.
		\State Update 
        $\Sigmam^{-1}
\leftarrow \Sigmam^{-1} -
\frac{d^* \Sigmam^{-1} \bfa_k \bfa_k^\herm \Sigmam^{-1}}
{1+ d^* \bfa_k^\herm \Sigmam^{-1} \bfa_k}$
		\EndFor
		\State {\bf Output:}  The resulting estimate $\gammam$. 
	\end{algorithmic}}
\end{algorithm}

\section{Empirical Comparison: ML, NNLS and MMV-AMP}  \label{simulation-sect}

In this section, we compare the performance of ML, NNLS and MMV-AMP via numerical simulations.
\subsection{Simulation Setting and Performance Criteria}
We assume that the output of each algorithm is an estimate $\gammam^*$
of the active LSFC pattern of the users. We use the relative $\ell_1$ norm of the difference
$\|\gammam^* - \gammam^\circ\|_1/\|\gammam^\circ\|_1$
as a measure of estimate quality. The $\ell_1$ norm is the natural choice here, since the coefficients $\gamma_i$
represent signal received, i.e., they are related to the square of the signal amplitudes. Therefore, a more traditional ``Square Error''
($\ell_2$ norm), related to the 4th power of the signal amplitude, does not really have any relevant physical meaning for the underlying communication 
system. We define $\widehat{\clA}_c(\nu):=\{i: {\gamma}^*_i >\nu\sigma^2\}$, with  $\nu>0$,
as the estimate of the set of active users.
We also define the  misdetection and false-alarm probabilities as
\begin{align}
    P_\text{md}(\nu) =1 - \frac{\bE[| \clK_a \cap \widehat{\clA}_c|]}{K_a},\ \  P_\text{fa}(\nu) =\frac{\bE[| \widehat{\clA}_c \backslash \clK_a|]}{K_\text{tot}-K_a}
\end{align}
where $K_a$ and $K_\text{tot}$ denote the number of active and the  number of potential users,
respectively. By varying $\nu \in \bR_+$, we get the
\textit{Receiver Operating Characteristic} (ROC) \cite{poor2013introduction} of the
algorithms. For simplicity of comparison, in the results presented here we have restricted to the point of the ROC where
$P_\text{md}(\nu) = P_\text{fa}(\nu)$.

We consider several models for the distribution of the LSFCs $g_k$.
The simplest case is when all LSFCs are constant, $g_k \equiv 1$,
this corresponds to a scenario with perfect power control. 
We also consider the case of variable signal strengths such that $10\log_{10}(g_k)$ is randomly distributed uniformly
in some range $[10\log_{10}(g_\text{min}),10\log_{10}(g_\text{max})]$ (uniform distribution in dB scale). 
This corresponds to the case of partial power control, where users partially compensate for their physical pathloss 
and reach some target SNR out of a set of possible values. 
In practice, these prefixed target SNR values corresponds to the various Modulation and Coding Schemes (MCS)
of a given communication protocol, 
which in turn correspond to different data transmission rates (see for example the MCS modes of
standards such as IEEE 802.11 \cite{8022016} or 3GPP-LTE \cite{Ses2009}).
In passing, we notice here the importance of estimating not only the user activity pattern but their LSFCs, in order to perform rate allocation.
Such a distribution, for specific values of $g_\text{min}$ and $g_\text{max}$ was also considered in \cite{liu2017massive}.

\subsection{MMV-AMP}
\label{sec:vamp}

This version of AMP, as introduced in \cite{Kim2011}, is a Bayesian iterative
recovery algorithm for the MMV problem, i.e., it aims to recover an unknown matrix
with i.i.d. rows from linear Gaussian measurements. 
As said in the introduction, the use of MMV-AMP has been proposed in \cite{liu2017massive, Che2018}
for the AD problem in a Bayesian setting, 
where the LSFCs are either known, or its distribution is known. 
Since unfortunately the formulation of MMV-AMP is often lacking details and certain terms (e.g., 
derivatives of matrix-valued functions with matrix arguments) are left indicated without explanations, 
for the sake of clarity and in order to provide a self-contained exposition we briefly review this algorithm here
in the notation of this paper. 

We can rewrite the received signal as
\beq
    \Ym = \Am\Xm + \Zm
\eeq
with $\Xm = \Gm \Bm \Hm$. Let $\Xm_{k,:}$ denote the $k$-th row of $\Xm$.
Letting $\lambda = \frac{K_a}{K_\text{tot}}$ be the fraction of active users, in the Bayesian setting underlying the MMV-AMP algorithm it is assumed that
the rows of $\Xm$ are mutually statistically independent and identically distributed according to
\beq
p_X(\xv) = (1-\lambda)\delta_0 +
\lambda\int_0^{+\infty} \frac{e^{-\frac{\|\xv\|_2^2}{\zeta}}}{\pi \zeta} \mathrm{d}p_G(\zeta), 
\eeq
where $p_G(\cdot)$ is the distribution of the LSFCs, i.e., for each $k$, it is assumed that 
$\Xm_{k,:}$ is either the identically zero vector (with probability $\lambda$) or a conditionally 
complex i.i.d. $M$-dimensional Gaussian vector with mean 0 and conditional variance $g_k$. Furthermore, 
the $g_k$'s are i.i.d. $\sim p_G(\cdot)$. 
The conditional distribution of $\Xm_{k,:}$ given $g_k$ is obviously given by
\beq
p_{X|g}(\xv | g_k) = (1-\lambda)\delta_0 +
\lambda\frac{e^{-\frac{\|\xv\|_2^2}{g_k}}}{\pi g_k}.
\eeq
The MMV-AMP 
iteration is defined as follows: 
\begin{align}
    \Xm^{t+1} &= \eta_t(\Am^\herm \Zm^t + \Xm^t)     \label{eq:vamp_1} \\
    \Zm^{t+1}   &= \Ym - \Am\Xm^{t+1} + \frac{K_\text{tot}}{\dimPilots} \Zm^{t}
    \langle \eta_{t}^\prime(\Am^\herm \Zm^{t} + \Xm^{t})\rangle       \label{eq:vamp_2} 
\end{align}
with $\Xm^0 = 0$ and $\Zm^0 = \Ym$. The function
$\eta_t: \CC^{K_\text{tot}\times M} \to \CC^{K_\text{tot}\times M}$ is defined row-wise as
\begin{equation} \eta_t(\Rm) = \left [ \begin{array}{c} \eta_{t,1}(\Rm_{1,:}) \\ \vdots \\ 
\eta_{t,{K_\text{tot}}}(\Rm_{K_{\rm tot},:}) \end{array} \right ],  \label{etamatrix}
\end{equation}
where each row function $\eta_{t,k}: \CC^M \to \CC^M$ is chosen as 
the posterior mean estimate of the random vector $\xv$, with a priori distribution as the rows of $\Xm$ as given above,  
in the {\em decoupled} Gaussian observation model
\beq
    \rv = \xv + \zv,
    \label{eq:vamp_decoupled}
\eeq
where $\zv$ is an i.i.d. complex Gaussian vector with components $\sim \mathcal{CN}(0,\Sigmam_t)$.
When $\gv$ is known, such posterior mean estimate is conditional on the knowledge of $g_k$, i.e., we define
\begin{equation} \label{PME-cond}
\eta_{t,k}(\rv) = \widetilde{\eta}_t(\rv, g_k) := \EE[ \xv | \rv, g_k].
\end{equation}
If $\gv$ is not known, the posterior mean estimate is unconditional, i.e., we define (with some abuse of notation)
\begin{equation} \label{PME-uncond}
\eta_{t,k}(\rv) = \widetilde{\eta}_t(\rv)  := \EE[ \xv | \rv].
\end{equation}
 Notice that in the latter case $\eta_{t,k}(\cdot)$ does not depend on $k$, i.e., the same mapping $\widetilde{\eta}_t(\cdot)$ 
 is applied to all the rows in (\ref{etamatrix}). 
The noise variance in the decoupled observation model, $\Sigmam_t$ is provided at each iteration $t$
by the following recursive equation termed  
{\em State Evolution} (SE),
\beq
\Sigmam_{t+1} = \sigma^2\Id_M + \frac{K_\text{tot}}{\dimPilots}\EE[\epsilonf_{t} \epsilonf_t^\herm]
\label{eq:vamp_SE}
\eeq
where
\begin{equation} 
    \epsilonf_t = 
    \left \{ \begin{array}{ll}
            (\widetilde{\eta}_t(\xv + \zv, g_k) - \xv)^\transp &   \;\;\; \mbox{if $\gv$ is known} \\
            (\widetilde{\eta}_t(\xv + \zv) - \xv)^\transp         &   \;\;\; \mbox{if $\gv$ is not known} 
        \end{array} 
    \right .
    \label{residual-def}
\end{equation}
The initial value of the SE is given by $\Sigmam_0 = \sigma^2\Id_M + \frac{K_\text{tot}}{\dimPilots}\EE[\xv\xv^\herm]$. 
The sequence $(\Sigmam_t)_{t=0,1,2,...}$ does not depend on a specific input $\Xm$ and can be precomputed.
The SE equation has the important property that it predicts the estimation error of the AMP output 
$\{\Xm^t\}_{t=0,1,...}$ asymptotically in the sense that in the limit of $K_\text{tot},L \to\infty$ with
$L/K_\text{tot} = \text{const.}$ it holds that \cite{Jav2013}
\beq
\lim_{K_\text{tot}\to\infty}\frac{\|\Xm^{t+1}-\Xm\|_F^2}{K_\text{tot}} = \trace(\EE[\epsilonf_t\epsilonf_t^\herm])
= \trace(\Sigmam_t -\sigma^2\Id_M)\frac{L}{K_\text{tot}}.
\eeq
Formally this was proven for the case when the entries of $\Am$ are Gaussian iid. In practice
this property holds also when the columns of $\Am$ are sampled uniformly from the sphere,
as in our case.
Note, that $\trace(\EE[\epsilonf_t\epsilonf_t^\herm])$ is the MSE of the estimator
$\widetilde{\eta}$ in the Gaussian vector channel \eqref{eq:vamp_decoupled} and therefore the choice
\eqref{PME-cond} (or \eqref{PME-uncond} resp.) is asymptotically optimal as it minimizes the MSE in each
iteration.

Since there is no spatial correlation between the receive antennas, 
$\Sigmam_0$ is diagonal and it can be shown (see \cite{Che2018}) 
that $\Sigmam_t$ is diagonal for all $t$. In the case of $\gv$ is known to the AD estimator,
a simple calculation yields
the function $\widetilde{\eta}_{t,k}(\rv)$ defined in (\ref{PME-cond}) in the form
\beq
\widetilde{\eta}_{t,k}(\rv) = \phi_{t,k}(\rv)g_k(g_k\Id_M + \Sigmam_t)^{-1}\rv,
\eeq
where the coefficient $\phi_{t,k}(\rv)\in[0,1]$ is the posterior mean estimate of
the $k$-th component $b_k$ of the activity pattern $\bv$,
when rewriting the decoupled observation model (\ref{eq:vamp_decoupled}) as
$\rv = \sqrt{g_k} b_k \hv + \zv$. In particular,
we have (details are omitted and can be found in \cite{Che2018})
\begin{align}
    \phi_{t,k}(\rv) & = \EE[b_k|\rv,g_k] \nonumber \\
    &= p(b_k=1|\rv,g_k) \nonumber \\
    &= \left\{1 + \frac{1-\lambda}{\lambda}  \prod_{i=1}^M \left [ \frac{g_k+\tau_{t,i}^2}{\tau^2_{t,i}} \exp\left ( - \frac{g_k |r_i|^2}{\tau_{t,i}^2(g_k + \tau^2_{t,i})}\right)\right ] \right\}^{-1}
    \label{eq:vamp_pme_b}
\end{align}
The term $\langle\eta^\prime(\cdot)\rangle$ in \eqref{eq:vamp_2} is defined as
\beq
\langle\eta_t^\prime(\Rm)\rangle 
= \frac{1}{K_\text{tot}}\sum_{k=1}^{K_\text{tot}} \eta_{t,k}^\prime(\Rm_{k,:}), 
\eeq
where $\eta_{t,k}^\prime(\cdot) \in \CC^{M\times M}$
is the  Jacobi matrix of the function $\eta_{t,k}(\cdot)$ evaluated at the $k$-th row $\Rm_{k,:}$ of the matrix argument $\Rm$.
For known LSFCs and uncorrelated antennas (yielding diagonal $\Sigmam_t = \text{diag}(\tau^2_{t,1},...,\tau^2_{t,M})$ for 
all $t$), the derivative is explicitly given by 
\begin{align}
\eta_{t,k}^\prime(\rv) &= \phi_{t,k}(\rv) \text{diag}(\Xim_{t,k}) + (\Xim_{t,k} \rv) (\widetilde{\Xim}_{t,k} \rv)^\herm 
(\phi_{t,k}(\rv) - \phi_{t,k}(\rv)^2)
\label{eq:d_eta}
\end{align}
where we define 
$\Xim_{t,k} = \diag\left ( \frac{g_k}{g_k + \tau^2_{t,i}} : i \in [M] \right )$ and 
$\widetilde{\Xim}_{t,k} = \diag\left ( \frac{g_k}{\tau_{t,i}^2(g_k + \tau^2_{t,i})} : i \in [M] \right )$.
Analogous expressions for the case where the LSFCs $\gv$ are unknown to the receiver can be found, but their expression cannot be 
generally given in a compact form and in general depends on the LSFC distribution $p_G(\cdot)$ (see \cite{Che2018} for more details).

\subsubsection{MMV-AMP Scaling}
\label{sec:vamp_scaling}

For the single measurement vector (SMV) case ($M = 1$) it was shown in \cite{Bay2011} that in the asymptotic limit
$\dimPilots,K_\text{tot},K_a \to \infty$
with fixed ratios $\dimPilots/K_\text{tot}$ and $K_a/K_\text{tot}$ the estimate
$\Am^\herm \zv^t + \xv^t$ in the AMP algorithm 
in the $t$-th iteration is indeed distributed like the true target signal in Gaussian noise with noise
variance $\Sigma^t$ given by the SE.
A generalized version of this statement that includes the MMV case was proven in \cite{Jav2013}.
It was shown in \cite{liu2017massive} that, based on the state evolution equation \eqref{eq:vamp_SE},
the error of activity detection vanishes in the limit
$M\to\infty$ for any number of active users. It is important to notice that, in this type of SE-based analysis, 
first the limit $K_a,\dimPilots \to \infty$ is taken at fixed $M$ and then the limit $M\to\infty$ is taken. 
This makes it impossible to derive a scaling relation between $M$ and $K_a$. 
Furthermore, this order of taking limits assumes that $K_a$ is much larger then $M$. 
Hence, this type of analysis does not generally describe the case when $M$ scales proportional to
$K_a$ or even a bit faster. Finally, it is implicit in this type of analysis that 
$L$, $K_a$ and $K_{\rm tot}$ are asymptotically in linear relation, i.e., 
$\frac{K_a}{\dimPilots} \to \alpha$ and $\frac{K_\text{tot}}{\dimPilots}\to \beta$ for some $\alpha, \beta \in (0, \infty)$.
Hence again, it is impossible to capture the scaling studied in our work, where $K_a$ is essentially quadratic in $L$, 
$K_{\rm tot}$ can be much larger than $K_a$, and $M$ scales to infinity slightly faster than $K_a$.  

The above observation is a possible explanation for the behavior described in Section \ref{sec:instability},
which is in fact quite different from what is predicted by the SE and in fact reveals an
annoying non-convergent behavior of MMV-AMP when $M$ is large with respect to $L$ and the dimensions are 
or ``practical interest'', i.e., not extremely large. 

\subsubsection{Approximations}
\label{sec:vamp_approximations}

Instead of pre-computing the sequence $(\Sigmam_t)_{t=0,1,...}$, 
in the SMV case, where $\Sigmam_t$ reduces to a single parameter $\tau_t^2$,
it is common to use the norm of the residual $\|\Zm_t\|_2^2/K_\text{tot}$ as an empirical estimate
of $\Sigmam_t$ \cite{Ran2011,Bar2017a}, since it leads to faster convergence \cite{Gre2018}
while disposing the need of pre-computing the state evolution recursion. 
We find empirically that, analogous to the SMV case,
estimating the $i$-th diagonal entry of
$\Sigmam_t = \text{diag}(\tau_{t,1}^2,...,\tau_{t,M}^2)$ as $\tau^2_{t,i} = \|\Zm^t_{:,i}\|_2^2/K_\text{tot}$ (i.e., 
the empirical variance of the $i$-th column of the matrix $\Zm^t$ in \eqref{eq:vamp_2}) 
leads to a good performance.

Another possible approximation arises from the observation that in the derivative
\eqref{eq:d_eta}, the diagonal terms are typically much larger then the off-diagonal terms,
which is to be expected, since in expectation the off-diagonal entries of
the term $(\Xi_{t,k}\rv)(\tilde{\Xi}_{t,k}\rv)^\herm$ vanish. 
So we find empirically that reducing the calculation of the derivative to just the 
diagonal entries, barely alters the performance in a large parameter regime,
while significantly reducing the complexity of the MMV-AMP iterations from
$\mathcal{O}(M^2)$ to $\mathcal{O}(M)$.

\subsubsection{Activity detection with MMV-AMP}

For known LSFCs an estimate of the activity pattern can be obtained directly by thresholding
the posterior mean estimate of $b_k$ \eqref{eq:vamp_pme_b}. For statistically known LSFCs
we have to calculate the integral of \eqref{eq:vamp_pme_b} over the distribution of the LSFCs.
For large $M$ this integral may become numerically unstable, in that case we can also
use the following method: Let $\Xm^{t_0}$ and $\Zm^{t_0}$ denote the output of the MMV-AMP algorithm
at the final iteration. Let $\Rm^{t_0} :=\Am^\herm \Zm^{t_0} +\Xm^{t_0}$. Under the assumption that
the asymptotic decoupling phenomenon described in Section \ref{sec:vamp_scaling} holds,
i.e. that the decoupled observation model
represents faithfully the statistics of the rows of $\Rm^{t_0}$, each row $\Rm^{t_0}_{k,:}$ is distributed
as $\sqrt{\gamma_k} \hv_k + \zv_k$ with $\zv_k \sim \mathcal{CN}(0,\Sigmam^{t_0})$ and $\hv_k$ has the statistics of the
Gaussian MIMO i.i.d. channel vector of user $k$. 
Furthermore we assume that $\Sigmam^{t_0}$ is diagonal, with entries $\tau^2_{t_0,i} : i = 1, \ldots, M$, which are
estimated as described in the previous section. 
Then the ML estimate of $\gamma_k$ from $\Rm^{t_0}$ is given by
\beq
\widehat{\gamma}_k = \max \left(0,\frac{\|\Rm_{k,:}^{t_0} \|_2^2}{M} - \frac{\sum_{i=1}^M \tau^2_{t_0,i}}{M}\right).
\eeq
Then, the activity pattern as well as the active LSFC pattern can be obtained by thresholding the $\widehat{\gamma}_k$.

\subsubsection{Instability of MMV-AMP}
\label{sec:instability}

In simulations, we have observed that the MMV-AMP algorithm as described in
section \ref{sec:vamp},
for certain parameter settings, exhibits an annoying non-convergent behavior that occurs at random with some non-negligible probability
(according to the realization of the random pilot matrix $\Am$, the random channel matrix $\Hm$, and the random observation noise).
We find that this behavior occurs most frequently for either small $K_a<<L$ and $M$ similar to or larger then $K_a$, or for 
$M>K_a>L$. Also the dynamic range of the LSFCs plays an important role.
While this behavior occurs less frequently or completely vanishes for a small dynamic range or constant LSFCs, 
it occurs more frequently for large dynamic ranges.
For example if we
let $g_k$ be distributed uniformly in dB scale between 0 and 20dB,
known at the receiver, for
$K_a = 20$ the algorithm is stable for $M=4$, in the sense that the effective
noise variance $\tau_t^2$ 
decreases consistently, but unstable for $M=10$, i.e. for many instances the actual measured
values of $\|\Xm^t-\Xm\|^2_F/(MK_\text{tot})$ diverge a lot from their SE prediction \eqref{eq:vamp_SE}. This behavior is illustrated
in \figref{fig:amp_divergence}, where $\|\Xm^t-\Xm\|^2_F/(MK_\text{tot})$ is plotted for $t=1,2,...$ for several samples
along with $\tau_t^2/M$, where $\Sigmam_t = \tau_t^2\Id_M$ is calculated according to the SE \eqref{eq:vamp_SE}.
For $K_a<L$ one may argue that this is an artificial behavior,
which can be circumvented by simply discarding
the information from some of the antennas,
but this is certainly not possible for $K_a>L$, where $M>K_a$ measurements
are necessary. We find that specifically in this regime $M>K_a>L$ the MMV-AMP performance differs significantly from its
state evolution prediction, which is consistent with what was argued in section \ref{sec:vamp_scaling}.
These outliers occur even if none of the approximations mentioned in section
\ref{sec:vamp_approximations} are applied. Although we find that approximating the derivative $\eta'(\cdot)$ as described in section
\ref{sec:vamp_approximations} helps to reduce the number of samples that do not converge to the state evolution prediction.
Another observation is that the use of normalized pilots ($\|\av_k\|_2^2 = L$)
improves the convergence to the SE prediction compared to Gaussian iid pilots.

\begin{figure}
   \centering
   \subfloat[$K_a = 20, M=4$]{\includegraphics[width=0.4\linewidth]{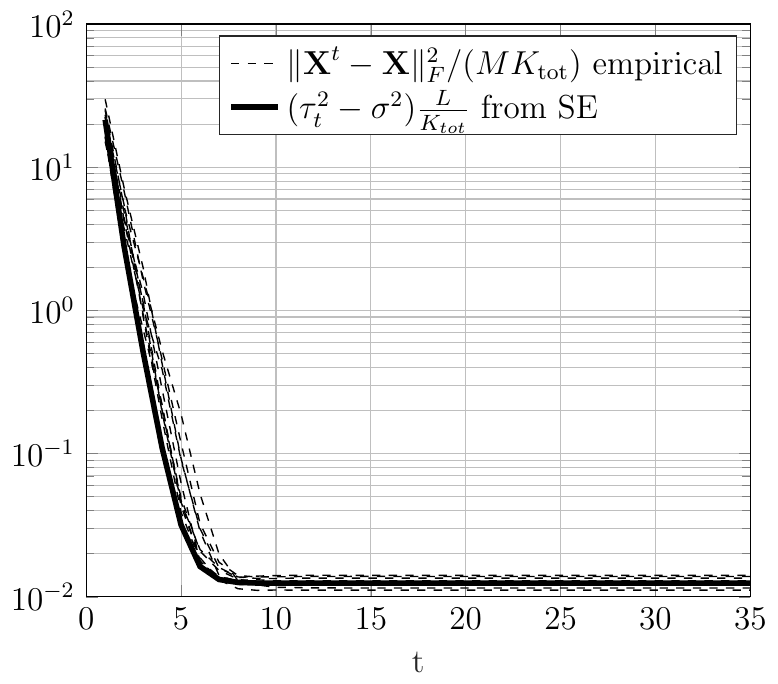}}\quad
   \subfloat[$K_a=20,M=10$]{\includegraphics[width=0.4\linewidth]{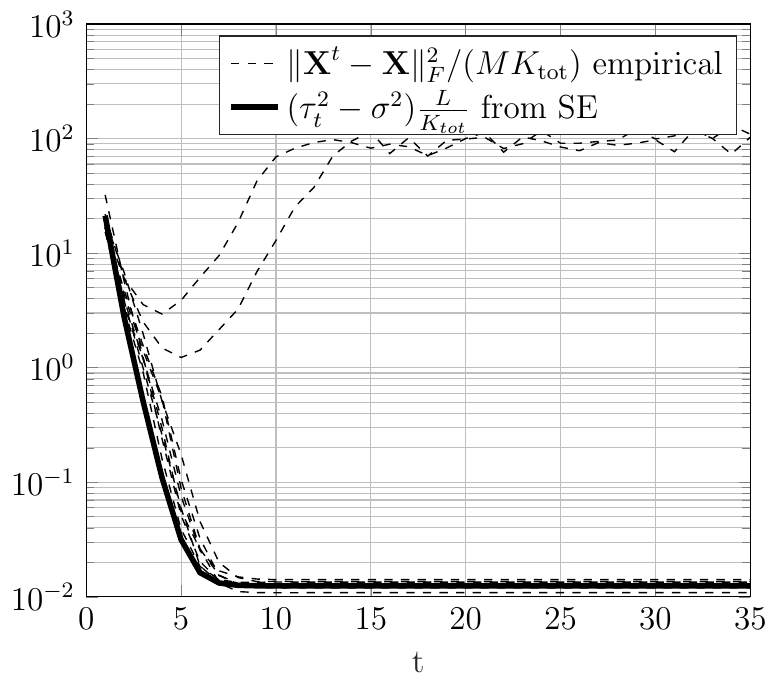}}\quad
   \subfloat[$K_a=120,M=150$]{\includegraphics[width=0.4\linewidth]{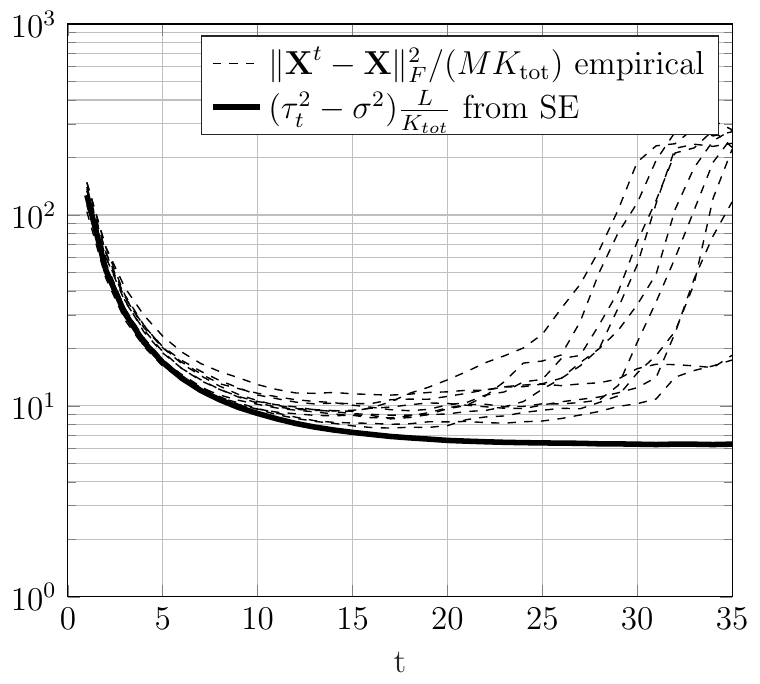}}
   \caption[Titel des Bildes]{Evolution of the normalized MSE in the AMP iterations
       \eqref{eq:vamp_1}-\eqref{eq:vamp_2} for 10 sample
       runs and its state evolution prediction from \eqref{eq:vamp_SE}.
       $L=100$, $K_\text{tot}=2000$ and the LSFCs are chosen such that $\text{snr}_k$ (see \eqref{snr_def})
       are uniformly distributed between 0 and 20dB
   and are assumed to be known at the receiver.}
   \label{fig:amp_divergence}
\end{figure}

\subsection{Complexity Comparison}

The complexity of the discussed covariance-based AD algorithms (ML and NNLS) scales with the
size of the covariance matrix and the total number of users, i.e. 
$\mathcal{O}(K_\text{tot}\dimPilots^2)$,
plus the complexity of once calculating the empirical covariance matrix which is linear in $M\dimPilots$.

The complexity of MMV-AMP in each iteration
scales like $\mathcal{O}(M^2 \dimPilots K_\text{tot})$ or, with a sub-sampled FFT matrix as pilot matrix,
like $\mathcal{O}(M^2 K_\text{tot}\log K_\text{tot})$. Using the simplified derivative as described in
paragraph \ref{sec:vamp_approximations} the complexity is reduced to
$\min(\mathcal{O}(MK_\text{tot}\log K_\text{tot}),\mathcal{O}(MK_\text{tot}\dimPilots))$.
In any case the covariance-based algorithms scale better with $M$,
while MMV-AMP scales better with $\dimPilots$. 

\subsection{Scaling}
\label{sec:scaling}

The performance of AD is visualized in \figref{fig:scale_with_M_cs} (`CS regime’, i.e. $K_a \leq \dimPilots$)
and \figref{fig:scale_with_M} ($K_a > \dimPilots$).
Here we assumed all the LSFCs to be identically equal to 1, 
MMV-AMP was run with the full knowledge of the LSFCs and the ML and NNLS algorithms were run with the
box-constraints described in Section \ref{sec:box-constraint}.
In \figref{fig:scale_with_M_cs} the NNLS algorithm is comparably worse than MMV-AMP and ML. 
This is to be expected, since $M$ is small compared to $\dimPilots$, which leads to a significant gap between
the true and the empirical covariance
matrix $\|\widehat{\Sigmam}_\yv-\Sigmam_\yv\|_F$. Interestingly, although the ML algorithm is also
covariance based, it still outperforms MMV-AMP. In \figref{fig:scale_with_M} we see that beyond the CS regime,
the performance of MMV-AMP significantly deteriorates, while the activity detection error probability of ML and NNLS still decays 
exponentially with $M$.
In \figref{fig:l1_over_M} we compare the LSFC estimation performance of the 
ML and NNLS algorithms. The simulations confirms Corollary \ref{cor:nnls:p1} and show
that the relative $\ell_1$ recovery error of NNLS indeed decays like $1/\sqrt{M}$. 
We see that the same decay behavior holds for the ML algorithm only with significantly better constants. 
Note, that the number of required antennas for the ML algorithm
scales fundamentally different depending on whether $K_a\leq L$ or
$K_a > L$. In the first case the probability of error decays a lot faster with increasing $M$, matching
qualitatively the
scaling derived in Theorem \ref{thm:ml:support},
which states that (up to constant or logarithmic factors) $M = \mathcal{O}((K_a/L)^2)$.  
\begin{figure}
    \centering
    \includegraphics[width=0.5\textwidth]{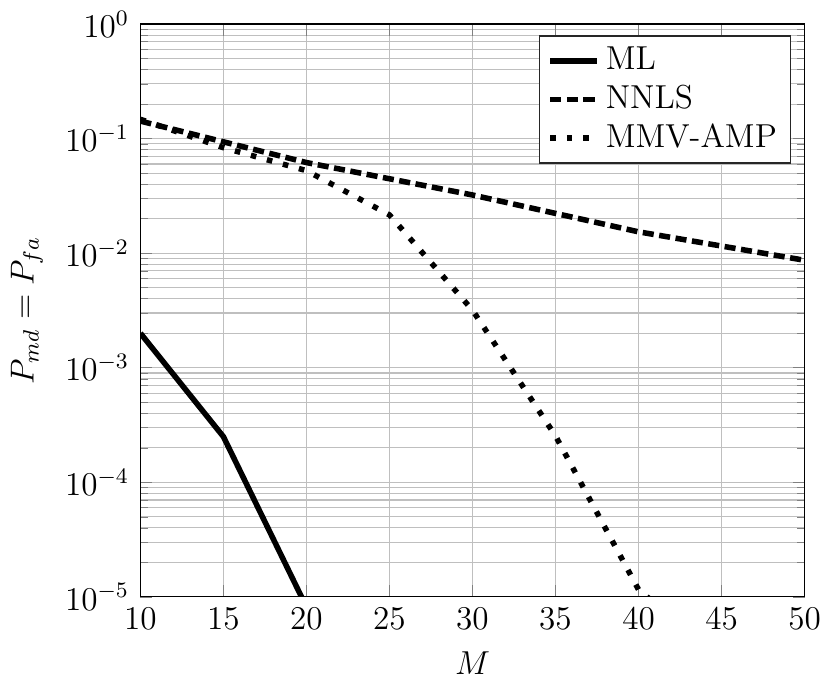}
    \caption{Scaling of the support detection error vs. $M$ at the border of the CS regime
    for $K_a = L = 100, K_\text{tot} = 2000$ with
    constant LSFCs at $\snr_k = 0$ dB.}
    \label{fig:scale_with_M_cs}
\end{figure}

\begin{figure}
    \centering
    \includegraphics[width=0.5\textwidth]{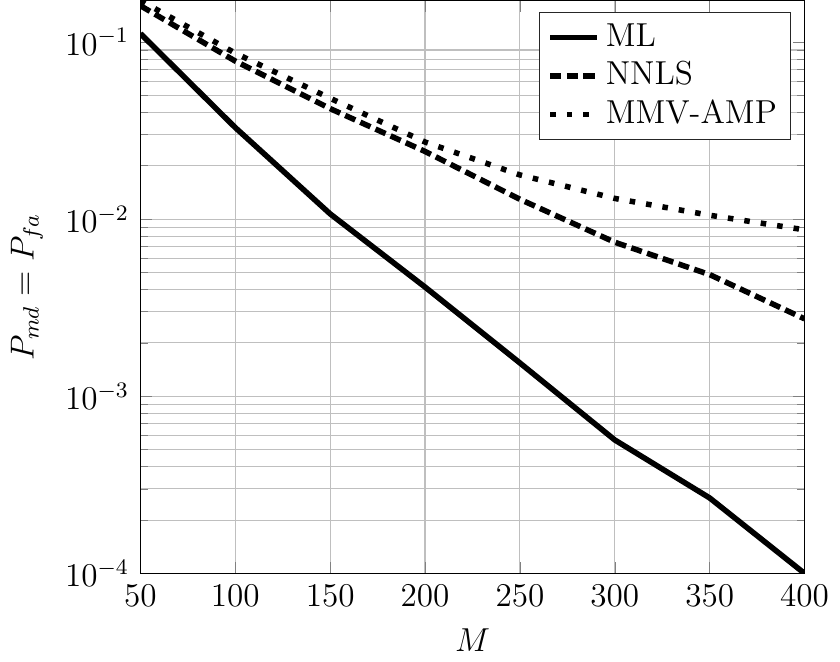}
    \caption{Scaling of the support detection error vs. $M$
    beyond the CS regime (i.e. $K_a > L$). Here $K_a=300, L=100,K_\text{tot} = 2000$ with constant LSFCs at $\snr_k = 0$ dB.}
    \label{fig:scale_with_M}
\end{figure}

\begin{figure}
    \centering
    \includegraphics[width=0.5\textwidth]{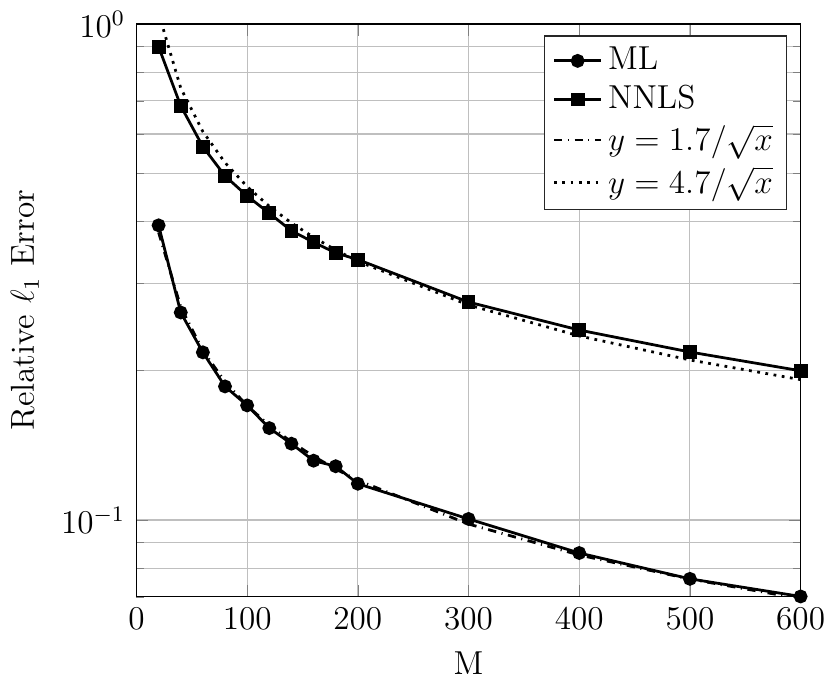}
    \caption{Relative $\ell_1$ error of the estimation of the LSFCs of the active users for
        $D_c = 100,K_a = 200,K_\text{tot} = 2000$.
        The LSFCs are chosen such that $\snr_k$  are uniform in the range 0-20dB. The dotted lines
    show that the curves are well represented by a $c/\sqrt{M}$ behavior, for some constant $c$, as
    predicted by Corollary \ref{cor:nnls}. }
    \label{fig:l1_over_M}
\end{figure}

Corollary \ref{cor:nnls:p1} predicts that, in the limit $M \to \infty$,
the recovery error of NNLS vanishes, as long as the number of active users fulfils
condition \eqref{eq:scaling_K_a}. We confirm this behavior empirically in
\figref{fig:nnls_phase_transition_1000}a, where we solve the NNLS
problem \eqref{eq_nnls} using the true covariance matrix
$\Sigmam^\circ = \Am\diag(\gammam^\circ)\Am^\herm + \sigma^2\Id_{\dimPilots}$ instead of the empirical 
covariance matrix $\widehat{\Sigmam}_\yv$. In this case,  
$\|\dv\|_2 = 0$ in \eqref{eq:short:thm:mse} and the recovery error should be identically zero
when the true vector $\gammav^\circ$ is $K_a$-sparse and the system parameters are such that Theorem
\ref{NNLS-theorem} holds.
This is confirmed by \figref{fig:nnls_phase_transition_1000}a,
showing a quadratic curve, below which the recovery 
error vanishes. 
\begin{figure}
   \centering
   \subfloat[ML]{\includegraphics[height=6cm]{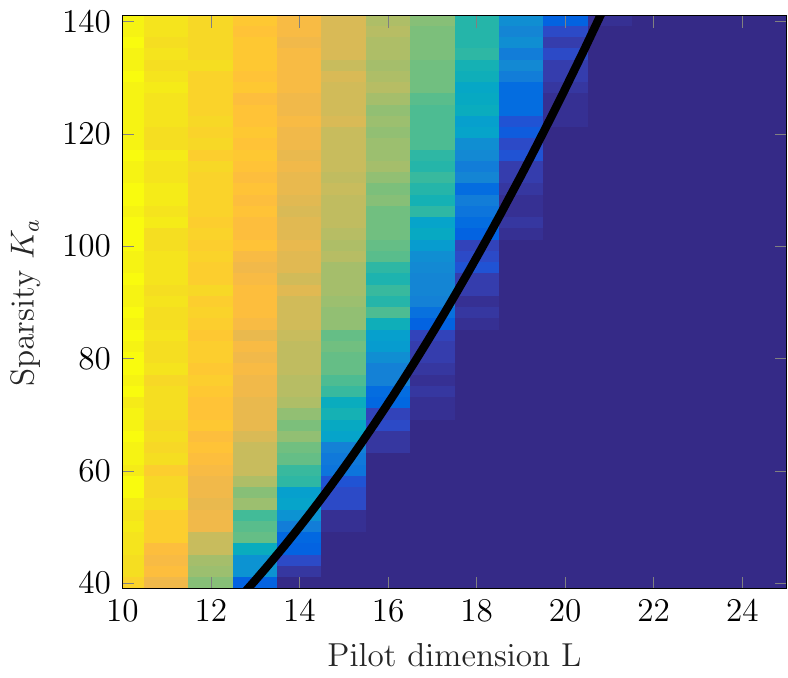}}\quad
   \subfloat[NNLS]{\includegraphics[height=6cm]{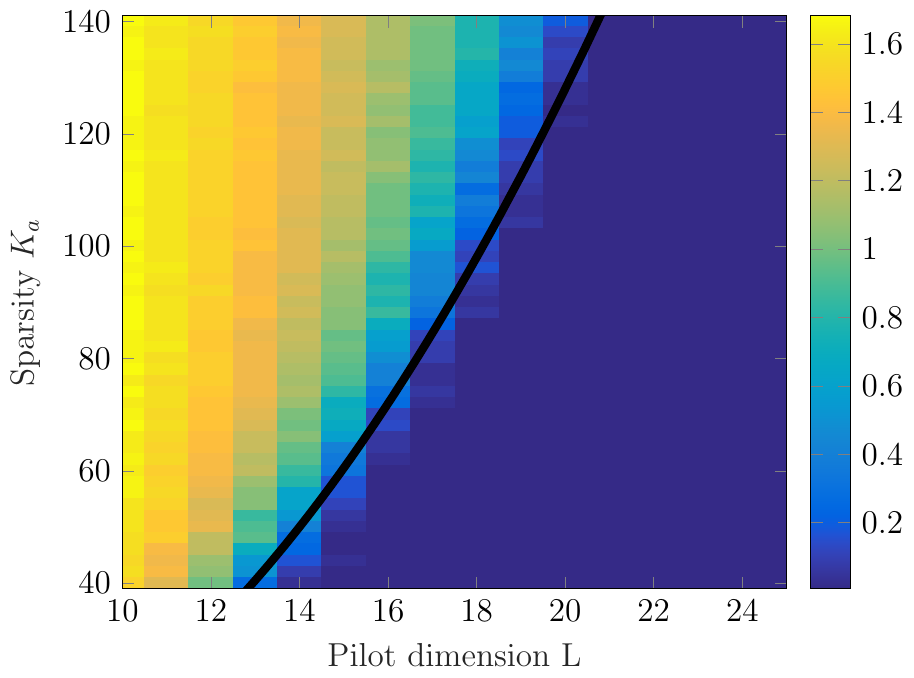}}
  \caption{Phase transition of the recovery error for NNLS and ML
      in the limit $M \to \infty$ for $K_\text{tot}=1000$.
      The function $x\rightarrow (x-4)^2/2$ is overlayed in black to emphasize the super-linear scaling.
      The color indicates the normalized $\ell_1$-error as
    it is subject of Corollary \ref{cor:nnls:p1} in the NNLS case.
    The LSFC are constant and the activity pattern is chosen uniformly at random from 
    all $K_a$-sparse vectors.
    The results are obtained by averaging over random pilot matrices and
    activity patterns. }
    \label{fig:nnls_phase_transition_1000}
\end{figure}
We also observe a very similar behavior for the ML algorithm,
(see \figref{fig:nnls_phase_transition_1000}b).
This suggests that the condition \eqref{eq_nnls} is indeed necessary independent of the algorithm.

\figref{fig:box-constraint} shows the gain in performance when the LSFCs are known at the receiver and
the box-constraint (step \ref{alg:update_step} in Algorithm \ref{tab:ML_coord}) is employed.

\begin{figure}
    \centering
    \includegraphics[width=0.5\linewidth]{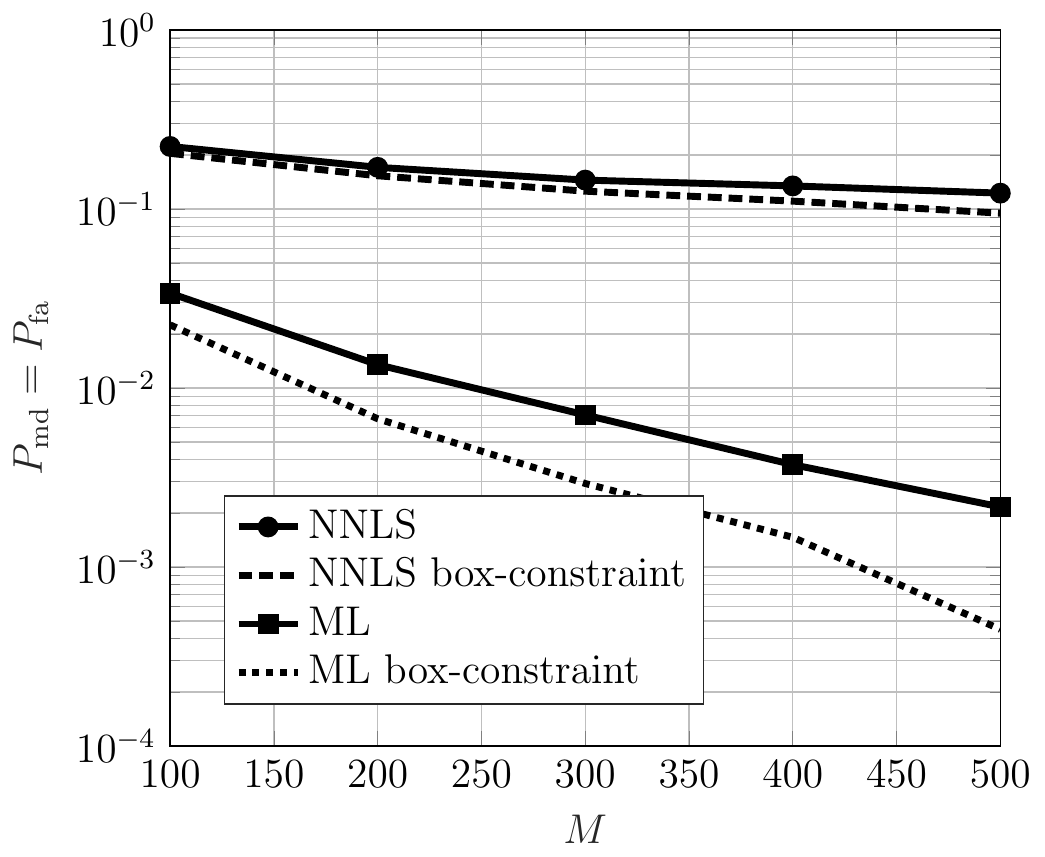}
    \caption{Effect of using the box-constraint (see step \ref{alg:update_step} in Algorithm \ref{tab:ML_coord})
        when the LSFCs $g_k$ are known at the receiver.
    Here $K_a = 150, L=100, K_\text{tot} = 2000$ and the LSFCs are distributed such that $\snr_k$ are
    uniform in the range $0-20$dB.
}
	\label{fig:box-constraint}
\end{figure}

\section{Application: Massive MIMO Unsourced Random Access}

As an application of the presented non-Bayesian algorithms and their analysis, in this section
we introduce an extension of the recently posed unsourced random access problem \cite{Pol2017}
to the case of a massive MIMO BS receiver and show that the ML scheme
(see Algorithm \ref{tab:ML_coord}) provides an efficient low-complexity approach.  
The presented scaling properties in Corollary \ref{cor:nnls_scaling} enable us to estimate 
the required per-user-power,  in terms of $E_b/N_0$, and the required number of receive antennas $M$ 
for reliable transmission.

The channel model is the same as described in Section \ref{sec:signal-model}, i.e., a block-fading channel with 
blocks of $L$ signal dimensions over which the user channel vectors are constant.
We assume $n = S L$, for some integer $S$, such that the transmission of a codeword spans
$S$ fading blocks.
Following the problem formulation in \cite{Pol2017}, each user is given the same codebook
$\Cc = \{ \cv(m) : m \in [2^{nR}]\}$, formed by 
$2^{nR}$ codewords $\cv(m) \in \CC^n$. A fixed but unknown number $K_a$ of users transmit their messages
over the coherence block.
\footnote{Here, as in 
\cite{Pol2017} and in \cite{Ama2020a}, we assume that users are synchronized.
}
The BS must then produce a list $\Lc$ of the transmitted messages $\{m_k : k\in \Kc_a\}$ (i.e., 
the messages of the active users). 
The system performance is expressed in terms of the {\em Per-User Probability of Misdetection}, defined as the average fraction of 
transmitted messages not contained in the list, i.e., 
\beq
p^\text{msg}_\text{md} = \frac{1}{K_a}\sum_{k \in \mathcal{K}_a} \PP(m_k \notin \mathcal{L}),
\label{eq:ura_pmd}
\eeq
and the {\em Probability of False-Alarm}, defined as the average fraction of decoded messages that 
were indeed not sent, i.e., 
\beq
p^\text{msg}_\text{fa} = \frac{|\mathcal{L}\setminus \{ m_k : k \in \Kc_a \}|}{|\mathcal{L}|}.
\label{eq:ura_pfa}
\eeq
The size of the list is also an outcome of the decoding algorithm, and therefore it is a random variable. 
As customary, the average error probabilities of false-alarm/misdetection are defined as the expected
values of $p^\text{msg}_\text{fa}/p^\text{msg}_\text{md}$ resp. over all involved random variables.
That is in this case the Rayleigh fading coefficients, the AWGN noise
and the choice of messages,
where the messages are assumed to be chosen uniformly and independent of each other.
Notice, that in this problem formulation the total number of users $K_{\rm tot}$ is completely irrelevant,
as long as it is
much larger than the number of active user $K_a$ (e.g., we may consider $K_{\rm tot} = \infty$). 
Letting the average energy per symbol of the codebook $\Cc$ be denoted by $E_s = \frac{1}{n 2^{nR}} \sum_{m=1}^{2^{nR}}  \|\cv(m)\|_2^2$,
the received signal can be re-normalized such that the AWGN per-component variance is $\sigma^2 = N_0/E_s$ and the received energy per code symbol is 1. 
In this way, the notation introduced for the AD model in \eqref{sig_eq1} is preserved. Furthermore, as customary in coded systems, 
we express energy efficiency in terms of the standard quantity $E_b/N_0 :=  \frac{E_s}{R N_0}$.

\subsection{Unsourced random access as AD problem}
\label{sec:ura_as_ad}
For now assume $S=1$, i.e. each user transmits his codeword in a single block of length $L$.
Further fix $J=LR$ and let $\Am \in \CC^{L \times 2^J} = [\av_1,...,\av_{2^J}]$,
be a matrix with columns normalized such that
$\| \av_i \|_2^2 = L$. Each column of $\Am$ represents one codeword.
Let $i_k$ denote the
$J$-bit messages produced by the active users $k \in \Kc_a$, represented as integers in $[1:2^J]$, 
user $k$ simply sends the column $\av_{i_k}$ 
of the coding matrix $\Am$. The received signal at the $M$-antennas BS takes on the form 
\begin{eqnarray} 
\Ym & = & \sum_{k \in \Kc_a} \sqrt{g_k} \av_{i_k} \hv^\transp_{k}  + \Zm \nonumber \\
& = & \Am \Phim \Gm^{1/2} \Hm + \Zm
\label{matrix-channel}
\end{eqnarray}
where, as for the AD model in (\ref{pilot_sig}),  
$\Gm = \diag(g_1, \ldots, g_{K_{\rm tot}})$ is the diagonal matrix of LSFCs,
$\Hm \in \CC^{K_{\rm tot} \times M}$ is the matrix containing, by rows, 
the user channel vectors $\hv_k$ formed by the small-scale fading antenna coefficients (Gaussian i.i.d. entries $\sim \Cc\Nc(0,1)$), 
$\Zm \in \CC^{L \times M}$ is the matrix of AWGN samples (i.i.d. entries $\sim \Cc\Nc(0,\sigma^2)$),
and $\Phim \in \{0,1\}^{2^J \times K_{\rm tot}}$ is a binary selection matrix where
for each $k \in \Kc_a$ the corresponding column $\Phim_{:,k}$ is all-zero but a single one in position $i_k$, 
and for all $k \in \Kc_{\rm tot} \setminus \Kc_a$ the corresponding column $\Phim_{:,k}$ contains all zeros.

Let's focus on the matrix $\Xm =  \Phim \Gm^{1/2} \Hm$ of dimension $2^J \times M$. 
The $r$-th row of such matrix is given by 
\beq
\Xm_{r,:} = \sum_{k\in\mathcal{K}_a}\sqrt{g_k} \phi_{r,k} \hv^\transp_k, 
\eeq
where $\phi_{r,k}$ is the $(r,k)$-th element of  $\Phim$, equal to one if $r = i_k$ and zero otherwise. 
It follows that $\Xm_{r,:}$ is Gaussian with i.i.d. entries $\sim \Cc\Nc\left (0, \sum_{k \in \Kc_a} g_k \phi_{r,k} \right )$. 
Since the messages are uniformly distributed over $[1:2^J]$ and statistically independent across the users,
the probability that $\Xm_{r,:} $ is identically zero is given by $(1 - 2^{-J})^{K_a}$. 
Hence, for $2^J$ significantly larger than $K_a$, the matrix $\Xm$ is row-sparse. 

In order to map the decoding into a problem completely analogous to the AD problem already discussed before, 
with some abuse of notation we define the modified LSFC-activity coefficients 
$\gamma_r := \sum_{k\in\mathcal{K}_a} g_k \phi_{r,k}$ and
$\Gammam = \text{diag}(\gamma_1,...,\gamma_{2^J})$. Then, (\ref{matrix-channel}) can be written as
\beq
\Ym = \Am \Gammam^{1/2} \widetilde{\Hm} + \Zm,
\label{eq:matrix_model2}
\eeq
where $\widetilde{\Hm} \in \CC^{2^J \times M}$ with i.i.d. elements $\sim \Cc\Nc(0,1)$.  
Notice that in (\ref{eq:matrix_model2}) the number of total users $K_{\rm tot}$ plays no role. 
In fact, none of the matrices involved in (\ref{eq:matrix_model2}) depends on $K_{\rm tot}$. 

The task of the inner decoder at the BS is to identify the non-zero elements of the modified active LSFC  pattern $\gammav$, the vector of 
diagonal coefficients of $\Gammam$. 
The active (non-zero) elements correspond to the indices of the transmitted messages. 
Notice that even if two or more users choose the same sub-message, the corresponding modified LSFC
$\gamma_r$ is positive since it corresponds to the sum of the signal powers.
In other words, since the detection scheme is completely 
non-coherent (it never explicitly estimates the complex channel matrix) and active signals add in power, there is no risk of 
signal cancellation or destructive interference. 

At this point, it is clear that the problem of identifying the set of transmitted messages
from observation (\ref{eq:matrix_model2})
is completely analogous to the AD problem from the observation in (\ref{pilot_sig}),
where the role of the total number of users $K_{\rm tot}$ in 
the AD problem is replaced by the number of messages $2^J$ in the inner decoding problem. 
Building on this analogy, we shall use the discussed ML algorithm to decode the inner code. 

It is interesting to notice that the modified LSFCs in $\gammav$ are random sums of the individual user channel 
gains $\{g_k\}$.  Hence, even if the $g_k$'s were exactly individually known, or their statistics was known, 
these random sums would have unknown values and unknown statistics (unless averaging over all possible
active subsets, which would involve an exponential complexity in $K_{\rm tot}$ which is clearly infeasible in our context).
Hence, Bayesian approaches such as MMV-AMP (see Section \ref{sec:vamp}) as advocated in 
\cite{liu2017massive,Che2018,Lar2012,Sen2017} do not find a straightforward application here. 
In contrast, the proposed non-Bayesian approaches (in particular, the ML algorithm in Algorithm \ref{tab:ML_coord}),
that treats  $\gammav$ as a deterministically unknown vector.  

Notice also that in a practical unsourced random access scenario such as a large-scale IoT application, 
the slot dimension $L$ may be of the order of 100 to 200 symbols, while
for a city-wide IoT data collector it is not unreasonable to have $M$ of the order of 500 to 1000 
antennas (especially when considering narrowband signals such as in LoRA-type applications
\cite{Cen2016,Ban2016}). 
This is precisely the regime where we have observed a critical behavior of MMV-AMP,
while our algorithm uniformly improves as $M$ increases, for any slot dimension $L$. 

\subsection{Discussion and analysis}
\label{sec:analysis}

In this section we discuss the performance of the ML decoder in a single slot ($S=1$).
For the sake of simplicity, in the discussion of this section we assume $g_k = 1$ for all $k$.
In this case, the SNR $E_s/N_0$ is also the SNR at the receiver, for each individual (active) user. 

Corollary \ref{cor:nnls:p1} shows that, if the coding matrix $\Am$ is chosen randomly,
the probability of an error
in the estimation of the support of $\gammav$ vanishes in the limit $M \to \infty$ for any SNR $\frac{E_s}{N_0} > 0$
as long as
$K_a = \mathcal{O}(L^2/\log^2(e2^J/L^2))$. 
Then Corollary \ref{cor:nnls:p1} gives the following bound for the reconstruction error of 
\begin{align}
    \frac{\lVert \gammav - \gammav^*\rVert_1}{\|\gammav\|_1 } \leq
    \kappa   \left(1+\left(K_a\frac{E_s}{N_0}\right)^{-1}\right)\sqrt{\frac{K_a}{M}}   
\label{eq:gamma_perf}
\end{align}
where $\kappa$ is some universal constant and $\gammav^*$ denotes the estimate of
$\gammav$ by the NNLS algorithm (see section \ref{sec:nnls}).
Our numerical results (section \ref{sec:scaling}) suggest that the reconstruction error of the ML algorithm
is at least as good as that of NNLS (in practice it is typically {\em much better}). 
This bound is indeed very conservative.
Nevertheless, this is enough to give achievable scaling laws for the probability of
error of the inner decoder. 
It follows from \eqref{eq:gamma_perf} that $\frac{\lVert \gammav - \gammav^*\rVert_1}{\|\gammav\|_1 }\to 0$
for $(M,K_a, \frac{E_s}{N_0}) \to (\infty, \infty, 0)$ as long as 
\beq
\frac{K_a( 1 + (K_aE_s/N_0)^{-1})^2}{M} = o(1),
\label{eq:scaling}
\eeq
which is satisfied if $M$ grows as
\beq
M = \max(K_a,(E_s/N_0)^{-1})^{\kappa}
\label{eq:M_lower}  
\eeq
for some $\kappa >1$.
Assuming that $J$ scales such that $2^J = \delta L^2$ for some fixed $\delta\geq 1$,
i.e. $J=\mathcal{O}(\log L)$, then
the condition in Corollary \ref{cor:nnls:p1} becomes 
$K_a = \mathcal{O}(L^2)$ and we can conclude that the recovery error vanishes for sum spectral efficiencies
up to
\beq
\frac{K_aJ}{L} = \mathcal{O}(L\log L).
\label{eq:spectral_eff_S=1}
\eeq
This shows that we can achieve a total spectral efficiency that grows without bound, by encoding
over larger and larger blocks of dimension $L$, as long as the number of messages per user
and the number of active users both grow
proportionally to $L^2$ and the number of BS antennas scales as in (\ref{eq:M_lower}).
The achievable sum spectral efficiency
grows as $L \log(L)$ and the error probability can be made as small as desired, for any given $E_b/N_0 > 0$. 
Of course, in this regime the rate per active user vanishes as $\log(L)/L$. 

We wish to stress again that this system is completely non-coherent,
i.e., there is no attempt to either explicitly (via pilot symbols) or
implicitly to estimate the channel matrix (small-scale fading coefficients).
\subsection{Reducing complexity via concatenated coding}
\label{sec:coding_concatenated}
In practice it is not feasible to transmit even small messages (e.g. $J\sim100$) within one coherence block ($S=1$),
because the number of columns of the coding matrix $\Am$ grows exponentially in $J$. 
Aside from
the computational complexity $L$ may also be limited physically by the coherence time of the channel.
In both cases it is necessary to transmit the message over multiple blocks.
Let each user transmit his message over a \emph{frame} of $S$ fading blocks and within each block use the code
described in section \ref{sec:ura_as_ad} as \emph{inner} code with the ML decoder as inner decoder.

We follow the concatenated coding scheme approach of \cite{Ama2020a},
suitably adapted to our case.  Let $B = nR$ denote the number of bits per user message. 
For some suitable integers $S \geq 1$ and $J > 0$, we divide the $B$-bit message into blocks of size 
$b_1, b_2, \ldots, b_S$ such that $\sum_s b_s = B$ and such that
$b_1 =  J$ and $b_s < J$ for all $s = 2, \ldots, S$. 
Each subblock $s = 2, 3, \ldots, S$ is augmented to size $J$ by appending $p_s = J - b_s$ parity bits,  
obtained using pseudo-random linear combinations of the information bits of the previous blocks 
$s' < s$. Therefore, there is a one-to-one association between the set of all sequences of coded blocks and
the paths of a tree of depth $S$. The pseudo-random parity-check equations generating the parity bits 
are identical for all users, i.e., each user makes use exactly of the same outer {\em tree code}.
For more details on the outer coding scheme, please refer to  \cite{Ama2020a}. 

Given $J$ and the slot length $L$, the inner code is used to transmit in sequence the 
$S$ (outer-encoded) blocks forming a frame. Let $\Am$ be the coding matrix as defined in section \ref{sec:ura_as_ad}.
Each column of $\Am$ now represents one inner codeword.
Letting $i_k(1), \ldots, i_k(S)$ denote the sequence of $S$ (outer-)encoded 
$J$-bit messages produced by the outer encoder of active user $k \in \Kc_a$. The user $k$ now simply sends in
sequence, over consecutive slots of length $L$, the columns $\av_{i_k(1)},\av_{i_k(2)},...,\av_{i_k(S)}$
of the coding matrix $\Am$. As described in section \ref{sec:ura_as_ad}, the inner decoding problem is equivalent to
the AD problem \eqref{eq:matrix_model2}.
For each subslot $s$, let $\widehat{\gammav}[s] = (\widehat{\gamma}_1[s], \ldots, \widehat{\gamma}_{2^J}[s])^\trasp$ 
denote the ML estimate of $\gammav$ in subslot $s$ obtained by the inner decoder. 
Then, the list of active messages at subslot $s$ is defined as
\begin{equation}
\label{support-detection} 
\Sc_s = \left \{ r \in [2^J] : \widehat{\gamma}_r[s] \geq \nu_s \right \}, 
\end{equation}
where $\nu_1, \ldots, \nu_S$ are suitable pre-defined thresholds.  
Let $\Sc_1, \Sc_2, \ldots, \Sc_S$ the sequence of lists of active subblock 
messages. Since the subblocks contain parity bits with parity profile $\{0,p_2, \ldots, p_S\}$, 
not all message sequences in $\Sc_1 \times \Sc_2 \times \cdots \times \Sc_S$ are possible. 
The role of the outer decoder is to identify all possible message sequences, i.e., those corresponding to
paths in the tree of the outer tree code \cite{Ama2020a}.  
The output list $\Lc$ is initialized as an empty list. Starting from $s=1$ and proceeding in order, the decoder converts the integer indices 
$\Sc_s$ back to their binary representation, separates data and parity bits, computes the parity 
checks for all the combinations with messages from the list $\Lc$ and extends only the paths 
in the tree which fulfill the parity checks.  A precise analysis of the error probability of such a decoder
and its complexity in terms of surviving paths in the list is given in \cite{Ama2020a}.
The performance of the concatenated system
is demonstrated via simulations in the following section.

\subsection{Asymptotic analysis - Outer code}
\label{sec:analysis2}

We define the support $\rhov[s]$ of the estimated $\widehat{\gammav}[s]$ as a binary vector
whose $r$-th element is equal to 1 if  $\widehat{\gamma}_r[s] \geq \nu_s$ and to zero otherwise. 
In the case of error-free support recovery, 
$\rhov[s]$  can be interpreted as the output of a vector ``OR'' multiple access channel (OR-MAC) where the inputs are the binary columns of the activity matrix 
$\Phim[s]$ and the output is given by 
\beq
\rhov[s] =  \bigvee_{k \in \Kc_a} \Phim_{:,k}[s], 
\eeq
where $\bigvee$ denotes the component-wise binary OR operation.  
The logical ``OR'' arises from the fact that if the same sub-message is selected by 
multiple users, it will show up as ``active'' at the output of the ``activity-detection'' inner decoder since the signal energy adds up
(as discussed before). 
Classical code constructions for the OR-MAC, like \cite{Kau1964,Dya1983},
have been focussed on zero-error decoding, 
which does not allow for positive per-user-rates as $K_a\to\infty$,
see e.g. \cite{Gyo2008} for a recent survey. 
Capacity bounds for the OR-MAC under the given input constraint have been derived in \cite{Cha1981}
and \cite{Gra1997},
where it was called the ``T-user M-frequency noiseless MAC without
intensity information"" or ``A-channel"". An asynchronous version of this channel was studied in
\cite{Han1996}. Note, that the capacity bounds in the literature are combinatorial and hard to evaluate
numerically for large numbers of $K_a$ and $2^J$.
In the following we will show that, in the typical case of $K_a \ll 2^J$,
a simple upper bound on the achievable rates based on the componentwise entropy is already tight
because it is achievable by the outer code of \cite{Ama2020a}. 
\subsubsection{Achievability}
\label{sec:analysis2:achieve}
The analysis in \cite{Ama2020a} shows that the error probability of the outer code goes to zero
in the so called logarithmic regime with constant outer rate, i.e. for $K_a,J\to \infty$
as $J = \alpha \log_2 K_a$ and $B = SR_\text{out}J$
\footnote{We deviate slightly from the notation in \cite{Ama2020a}, where the scaling parameter
$\alpha'$ is defined by
$B = \alpha'\log_2 K_a$ and the number of subslots is considered to be constant. It is apparent that
those definitions are connected by $\alpha' = SR_\text{out}\alpha$.}
if the number of parity bits $P$ is chosen as (\cite[Theorem 5 and 6]{Ama2020a})
\begin{enumerate}
    \item $P = (S+\delta - 1)\log_2 K_a$ for some constant $\delta>0$ if all the parity bits
        are allocated in the last slots.
    \item $P = c(S-1)\log_2 K_a$ for some constant $c>1$ if the parity bits are allocated evenly
        at the end of each subslot except for the first.
\end{enumerate}
In the first case the complexity scales like $\mathcal{O}(K_a^{R_\text{out}S}\log K_a)$,
since there is no pruning in the first $R_\text{out}S$ subslots,
while in the second case the complexity scales linearly with $S$
like $\mathcal{O}(SK_a\log K_a)$. The corresponding outer rates are
\beq
\begin{split}
    R_\text{out} &= B/(B+P)\\
                 &= 1 - P/(B+P)\\
                 &= 1 - P/(SJ)\\
                 &= 1 - \frac{S+\delta - 1}{S\alpha}\\
                 &= 1 - \frac{1}{\alpha}  + \frac{1}{S}\frac{\delta-1}{\alpha}
\end{split}
\eeq
for the case of all parity bits in the last sections and
\beq
\begin{split}
R_\text{out} &= 1 - \frac{c(S-1)}{S\alpha}\\
             &= 1 - \frac{c}{\alpha}  - \frac{c}{S\alpha}
\end{split}
\eeq
for the case of equally distributed parity bits. In the limit $S\to\infty$ the achievable rates are
therefore $R_\text{out} = 1 - 1/\alpha$ and $R_\text{out} = 1 - c/\alpha$ respectively.

\subsubsection{Converse}
The output entropy of the vector OR-MAC of dimension $2^J$ is bounded by the entropy of 
$2^J$ scalar OR-MACs. The marginal distribution of the entries of $\rhov[s]$  is Bernoulli with 
$\PP(\rho_r[s] = 0) = (1 - 2^{-J})^{K_a}$. Hence, we have
\begin{equation} \label{outputH}
H(\rho[s]) \leq 2^J \Hc_2 ( (1 - 2^{-J})^{K_a}). 
\end{equation}
We stay in the logarithmic scaling regime, introduced in the previous sections,
i.e. we fix $J = \alpha \log_2 K_a$ for some $\alpha >1$ and consider the limit $K_a,J\to\infty$.
In this regime
$K_a/2^J = K_a^{-(\alpha - 1)}\to 0$ and we have
$1 - (1-2^{-J})^{K_a} = K_a/2^J + \mathcal{O}((K_a/2^J)^2) \to 0$.
This gives that
\beq
2^J\Hc_2( (1 - 2^{-J})^{K_a}) \to K_a(J-\log_2 K_a) = (\alpha - 1)K_a\log_2 K_a.
\eeq
Since all users make use of the same code
we have that the number of information bits sent by the $K_a$ active users over a slot is  
$B_\text{sum} = K_a J R_\text{out}$.
Therefore, in order to hope for small probability of error a necessary condition is
\beq
    K_a J R_\text{out} \leq 2^J\mathcal{H}_2((1-1/2^J)^{K_a}).
    \label{eq:sumrate}
\eeq
So the outer rate is limited by 
\beq
    R_\text{out} \leq (\alpha-1)\frac{\log_2 K_a}{J} = 1 - \frac{1}{\alpha}.
\eeq
We have shown in the previous Section \ref{sec:analysis2:achieve} that this outer rate can be achieved in the limit of infinite subslots
$S \to \infty$ by the described outer tree
code at the cost of a decoding complexity of at least $\mathcal{O}(K_a^{R_\text{out}S})$
or up to a constant factor
$\Delta R_\text{out} = (c-1)/\alpha$ for some $c>1$ with a complexity of $\mathcal{O}(SK_a\log K_a)$.
This is a noteworthy results on its own, since it is a priori not clear, whether the bound 
\eqref{eq:sumrate} is achievable by an unsourced random access scheme, i.e. each user using
the same codebook.\\
The resulting achievable sum spectral efficiency can be calculated as in section
\ref{sec:analysis} with a subtle
but important difference, since the results on the outer code are valid only in the logarithmic regime
$J = \alpha\log_2 K_a$, i.e. $2^J = K_a^\alpha$ for $\alpha >1$.
According to Corollary \ref{cor:nnls} the error probability of
the inner code vanishes if the number of active users
scale no faster then $K_a = \mathcal{O}(L^2/\log^2(e2^J/L^2))$.
Using the scaling condition $J = \alpha \log_2 K_a$ and that $K_a \leq L^2$,
this implies that in the logarithmic regime the error probability of the inner code vanishes
if the number of active users scales as $K_a = \mathcal{O}(L^2/\log^2(L))$.
This gives a sum spectral efficiency
of 
\beq
\frac{K_aR_\text{out}J}{L} = \mathcal{O}\left(\frac{K_a\log K_a}{L}\right)
= \mathcal{O}\left(\frac{L}{\log L}\right).
\eeq
The order of this sum spectral efficiency is, by a factor $\log^2 L$, smaller then the one we calculated
in section \ref{sec:analysis}.
This is because the order of supported active users is smaller by exactly the same
$\log^2 L$ factor.
In section \ref{sec:analysis} we assumed that $J$ scales as $2^J = \delta L^2 = \mathcal{O}(K_a)$
for some $\delta>1$,
so that the ratio $K_a/2^J$ remains constant. It is not clear from the analysis in \cite{Ama2020a},
whether the probability of error of the outer tree code would vanish in the regime.
We can get a converse by evaluating the entropy bound \eqref{eq:sumrate}.
Let $2^J = \delta K_a$ with $\delta>1$, then 
$(1-2^{-J})^{K_a} = (1-\delta/K_a)^{K_a}\xrightarrow[K_a\to\infty]{} \exp(-\delta)$.
Therefore the binary entropy $\Hc_2((1-2^{-J})^{K_a})$ remains a constant in the limit
$J,K_a \to \infty$ and we get that 
\beq
K_aR_\text{out}J\leq \delta K_a \Hc_2(\exp(-\delta)).
\eeq
This shows that $R_\text{out} \to 0$
in the limit $J,K_a \to \infty$ is the best achievable asymptotic per-user outer rate,
but the outer sum rate $K_aR_\text{out}J$ is proportional to $K_a$.
The resulting sum spectral efficiencies scale as
\beq
\frac{K_aR_\text{out}J}{L} = \mathcal{O}\left(\frac{K_a}{L}\right) = \mathcal{O}(L).
\eeq
This means it could be possible to increase the achievable sum spectral efficiencies by a factor of $\log L$
by using an outer code that is able to achieve the entropy bound \eqref{eq:sumrate}
in the regime $2^J = \delta K_a$. It is not clear though whether the code of \cite{Ama2020a} or some other code can achieve
this.

\subsection{Simulations}
\label{sec:numerics}

The outer decoder requires a hard decision on the support of the estimated $\widehat{\gammav}[s]$. 
When $K_a$ is known, one approach consists of selecting the $K_a + \Delta$ largest entries in each section, where $\Delta \geq 0$
can be adjusted to balance between false alarm and misdetection in the outer channel.
However, the knowledge of $K_a$ is a very restrictive assumption in such type of systems. 
An alternative approach, which does not require this knowledge,
consists of fixing a sequence of thresholds $\{\nu_s : s \in [S]\}$ and let $\rhov[s]$ to be the binary vector of dimension $2^J$ 
with elements equal to 1 for all components of $\widehat{\gammav}[s]$ above threshold $\nu_s$. 
By choosing the thresholds, we can balance between missed detections and false alarms. 
Furthermore, we may consider the use of a  non-uniform decaying power allocation across the 
slots as described in \cite{Fen2019c}.

For the simulations in \figref{fig:sim} we choose $B = 96$ bits as payload size for each user, 
a frame of $S = 32$ slots of $L = 100$ dimensions per slot, yielding an overall block length $n = 3200$. Choosing
the binary subblock length $J = 12$, the inner coding matrix $\Am$ has dimension $100 \times 4096$ and
therefore is still quite manageable. We choose the columns of $\Am$ uniformly i.i.d. from
the sphere of radius $\sqrt{L}$.
For the outer code, we choose the following parity profile $p = [0,9,9, \ldots ,9,12,12,12]$, yielding an outer coding
rate $R_{out} = 0.25$ information bits per binary symbol. 
Notice also that if one wishes to send the same payload message using the piggybacking 
scheme of  \cite{Lar2012,Sen2017}, each user should make use of $2^{96}$ columns, which 
is totally impractical. 
All large scale fading coefficients are fixed to $g_k \equiv 1$.
In \figref{fig:sim} we fix $N_0 = 1$ and choose the transmit power (energy per symbol),
such that $E_b/N_0 = 0$dB and plot the sum of the
two types of error probabilities
$P_e = p^\text{msg}_\text{md} + p^\text{msg}_\text{fa}$, (see \eqref{eq:ura_pmd} and \eqref{eq:ura_pfa})
as a function of the number of active users for
different numbers of receive antennas $M$. 
\figref{fig:pe_ebn0} shows how $P_e$ falls as a function of $E_b/N_0$ for different values of $K_a$ and $M$.
\figref{fig:ebn0_over_K} shows the required values of $E_b/N_0$ as a function of $K_a$
to achieve a total error probability $P_e < 0.05$ for the
code parameters in Table \ref{tab:params}.
We use three different settings here, depending on the values
of the coherence block-length $L$. In all three the total block-length is fixed to $n=3200$
and $B\approx 100$, which gives a per-user spectral efficiency of $R\approx 0.031$ bits per channel use.
With $K_a = 300$ this corresponds to a total spectral efficiency
$\mu \approx 9$ bits per channel 
use, which is  significantly larger than today's LTE cellular systems (in terms of bit/s/Hz per sector) and definitely
much larger than IoT-driven schemes such as  LoRA \cite{Cen2016,Ban2016}. 
The simulations confirm qualitatively the behavior predicted in Sections \ref{sec:analysis}
and \ref{sec:analysis2}. The achievable
total spectral efficiencies seem to be mainly limited by the coherence block-length $L$,
and for a given total spectral efficiency the required energy-per-bit can be made arbitrary small by increasing $M$.
According to the Shannon-limit for the scalar Gaussian multiple access channel
(only one receive antenna, no fading) $E_b/N_0 > (2^{K_a R}-1)/(K_a R)$, and therefore one needs at least
$\approx 17.5$ dB to achieve a total spectral efficiency of $9$ bits per channel use. 
Here we find that gains of $20$ dB or more are possible even with non-coherent detection
by the use of multiple receive antennas. 
This shows also quantitatively that the non-coherent massive MIMO channel is very attractive for 
unsourced random access, since it preserves the same desirable characteristics of unsourced random access as in the 
non-fading Gaussian model of \cite{Pol2017} (users transmit without any pre-negotiation, and no use of pilot symbols
is needed), while the total spectral efficiency can be made as large as desired simply by increasing the number of receiver 
antennas. 

\begin{figure}
    \centering
    \includegraphics[width=0.5\linewidth]{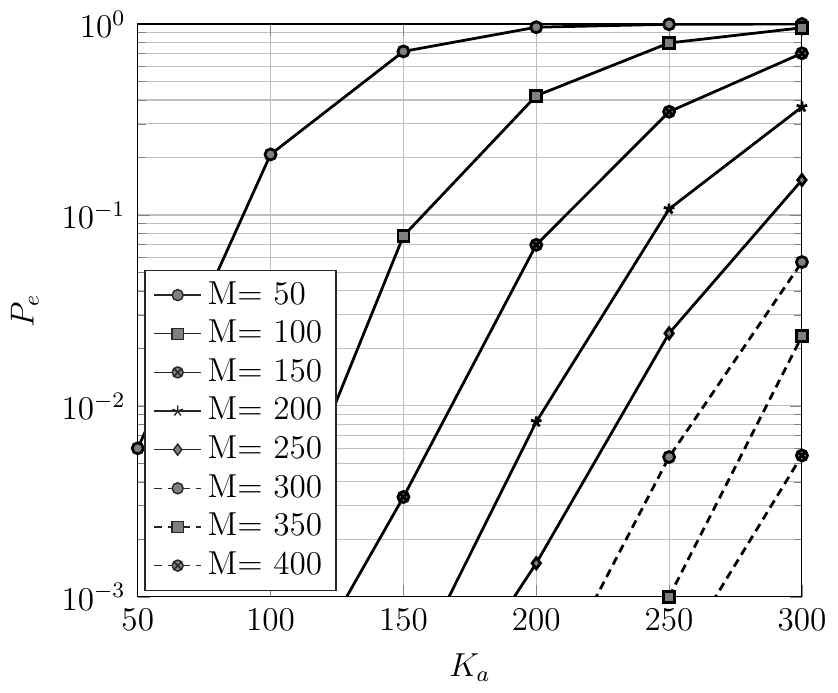}
    \captionof{figure}{ Error probability ($P_e = p^\text{msg}_\text{md} + p^\text{msg}_\text{fa}$) as a function of the number of
    active users for different numbers of receive antennas.
    $E_b/N_0=$ 0 dB, $L=100$, $n=3200$, $B = 96$ bits, $S = 32, J=12$. }
    \label{fig:sim}
\end{figure}
\begin{figure}
   \centering
   \subfloat[$K_a = 300$]{\includegraphics[height=6cm]{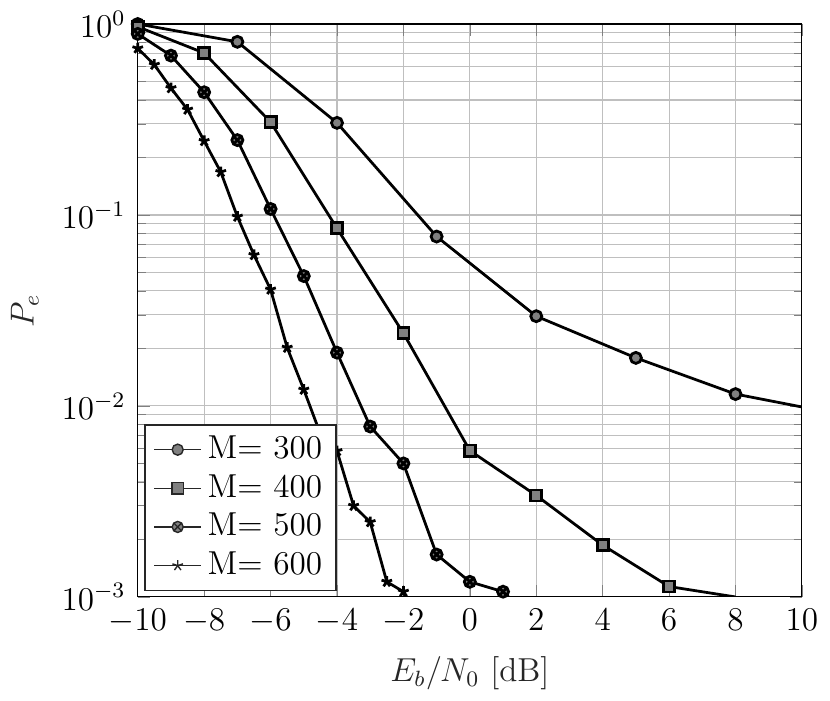}}\quad
   \subfloat[$M=300$]{\includegraphics[height=6cm]{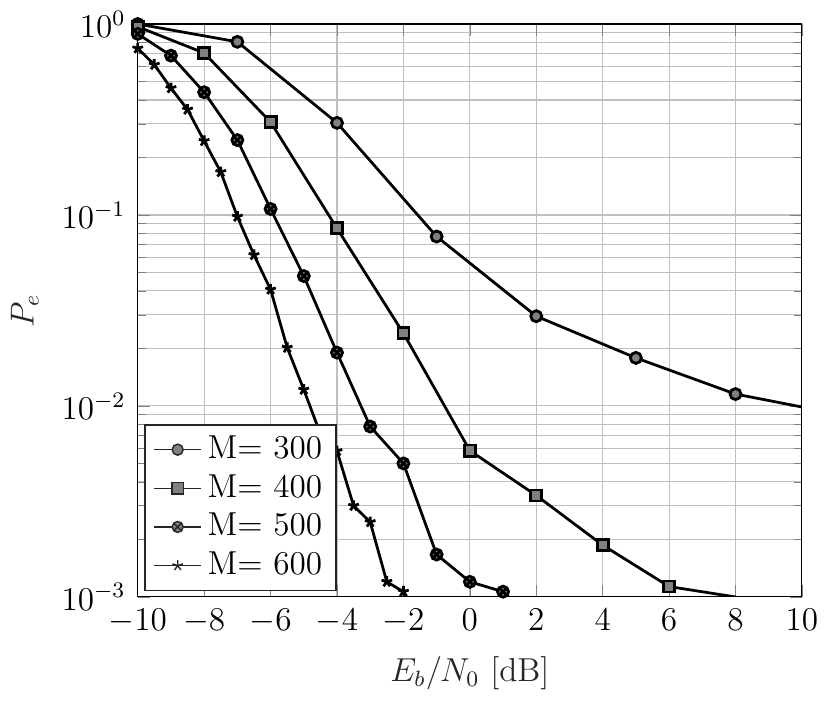}}
  \caption{Error probability ($P_e = p^\text{msg}_\text{md} + p^\text{msg}_\text{fa}$) as a function of $E_b/N_0$.
    $L=100$, $n=3200$, $b = 96$ bits, $S = 32, J=12$. }
    \label{fig:pe_ebn0}
\end{figure}
\begin{figure}
   \centering
   \includegraphics[width=0.8\linewidth]{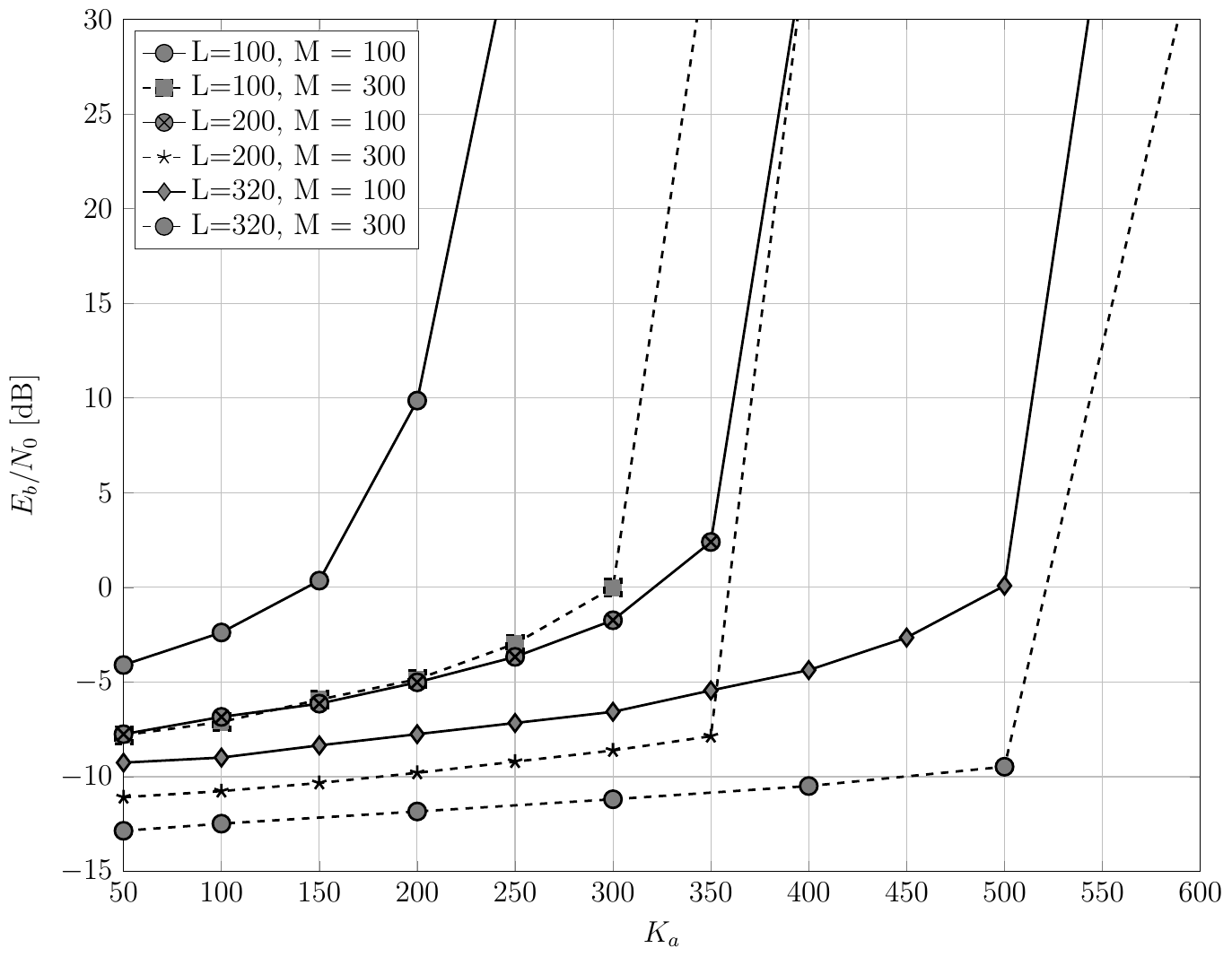}
  \caption{Required energy-per-bit to achieve $P_e < 0.05$.
  $L$ and $S$ are varied, while $n=3200$ and $B\approx 100$ are fixed. The precise parameters
  are given in Table \ref{tab:params}.}
  \label{fig:ebn0_over_K}
\end{figure}
\begin{table}
    \centering
        \begin{tabular}{|c|c|c|c|c|c|}
            \hline
            & S & J & $R_\text{out}$& Parity profile & B \\ \hline
            $L=100$ & 32 & 12 & 0.25 & [0,9,...,9,12,12,12] & 96 \\ \hline
            $L=200$ & 16 & 15 & 0.42 & [0,7,8,8,9,...,9,13,14] & 100 \\ \hline
            $L=320$ & 10 & 19 & 0.52 & [0,9,...,9,19] & 99 \\ \hline
        \end{tabular}
    \caption{Parameters for \figref{fig:ebn0_over_K}}
    \label{tab:params}
\end{table}

\section{Conclusion}

In this paper, we studied the problem of user activity detection in a massive MIMO setup,
where the BS has $M \gg 1$ antennas. We showed that with a coherence block containing $L$
signal dimensions one can reliably estimate the activity of $K_a = O(L^2/ \log^2 ( K_{\rm tot}/K_a))$
active users in a set of $K_{\rm tot}$ users,
which is a much larger than the previous bound $K_a = O(L)$
obtained via traditional compressed sensing techniques.
In particular, in our proposed scheme one needs to pay only a poly-logarithmic penalty $O(\log^2 (K_{\rm tot}/K_a))$
with respect to the number of potential users $K_{\rm tot}$,
which makes the scheme ideally suited for activity detection in IoT setups where the number of potential users
can be very large.
We discuss low-complexity algorithms for activity detection and provided numerical simulations to illustrate
our results and compared them with approximated message passing schemes recently proposed for the same scenario.
In particular, as a byproduct of our numerical investigation, we also showed a curious unstable behavior of
MMV-AMP in the regime where the number of receiver antennas is large,
which is precisely the case of interest with a massive MIMO receiver.
Finally, we proposed a scheme for unsourced random access where all users make use of the same codebook
and the receiver task is to come up with the list of transmitted messages.
We use our activity detection scheme(s) directly, where now the users’ signature sequences play the role of codewords,
and the number of total users plays the role of the number of total messages.
We showed that an arbitrarily fixed probability of error can be achieved
at any $E_b/N_0$ for sufficiently large number of antennas, and a total spectral efficiency that grows as $O(L \log L)$,
where $L$ is the code block length, can be achieved.
Such one-shot scheme is conceptually nice but not suited for typical practical applications with message payload
of the order of $B \approx 100$ bits, since it would require a codebook matrix with $2^B$ columns.
Hence, we have also considered the application of the concatenated approach pioneered in \cite{Ama2020a},
where the message is broken into a sequence of smaller blocks
and the activity detection scheme is applied as an inner encoding/decoding stage at each block,
while an outer tree code takes care of ``stitching together’’ the sequence of decoded submessages over the blocks.
Numerical simulations show the effectiveness of the proposed method.
It should be noticed that these schemes are completely non-coherent,
i.e., the receiver never tries to estimate the massive MIMO channel matrix of complex fading coefficients.
Therefore, the scheme pays no hidden penalty in terms of pilot symbol overhead,
often connected with the assumption of ideal coherent reception,
i.e., channel state information known to the receiver.

\appendices

\section{Proof of Theorem \ref{thm:ml:support}}
\label{appendix:ML}

The main line of arguments in this section is based on \cite{Kha2017}. In turns, the proof in \cite{Kha2017} is based 
on a RIP result which was claimed and successively retracted \cite{Kha2019}. 
The result was applied to a non-centered matrix and therefore could not have the claimed property. 
We fix this here, using our own new RIP result (Theorem \ref{thm:rip_main}) and, for the sake of clarity and self-contained presentation,
give a complete streamlined proof for the case of known LSFCs. At several points our proof technique
differs from \cite{Kha2017}, which results in the slightly better bound on $M$. 
Let us first introduce some notation.
\begin{definition}
    \label{def:renyi}
    For $t >1$ define the Renyi divergence of order $t$ between two probability
    densities $p$ and $q$ as
    \beq
        \mathcal{D}_t(p,q):=\frac{1}{t-1}\ln\int p(x)^tq(x)^{1-t}\mathrm{d}x
    \eeq
    \hfill $\lozenge$
\end{definition}

\begin{definition}
    A differentiable function $f$ is called strongly convex with parameter $m>0$ if the following
    inequality holds for all points $x,y$ in its domain:
    \beq
    f(y) \geq f(x) + \nabla f(x)^\top(y-x) + \frac{m}{2}\|x-y\|^2_2
    \eeq
    \hfill $\lozenge$
\end{definition}
Let $\bv^\circ$ denote the true activity pattern with known sparsity $K_a$, and $\bv^*$ be the output of the estimator \eqref{eq:ml_binary}. 
Using the union bound, we can write
\beq
\begin{split}
    \PP(\bv^*\neq\bv^\circ) 
    &= \PP\left(\max_{\bv\in\Theta_{K_a}\setminus\{\bv^\circ\}} p(\Ym|\bv) \geq p(\Ym|\bv^\circ)\right) \\
&= \PP\left(\bigcup_{\bv\in\Theta_{K_a}\setminus\{\bv^\circ\}}\{p(\Ym|\bv) \geq p(\Ym|\bv^\circ)\}\right)\\
    &\leq \sum_{\bv\in\Theta_{K_a}\setminus\{\bv^\circ\}}\PP(\Ym : p(\Ym|\bv) - p(\Ym|\bv^\circ)\geq 0)\\
    &\leq \sum_{\bv\in\Theta_{K_a}\setminus\{\bv^\circ\}}\PP(\Ym : p(\Ym|\bv) - p(\Ym|\bv^\circ)> -\alpha)
\end{split}
\label{eq:union_bound}
\eeq
for any $\alpha >0$.
With slight abuse of notation we define $\Sigmam(\bv):= \Am\Bm\Gm^\circ\Am^\herm + \sigma^2\Id_L$,
the covariance matrix for a given binary pattern $\bv$ for a fixed vector of LSFCs $\gv^\circ$,
with $\Bm = \diag(\bv)$ and $\Gm^\circ = \diag(\gv^\circ)$. 
Let $p_\bv := \mathcal{CN}(0,\Sigmam(\bv))$ denote the Gaussian distribution with covariance matrix
$\Sigmam(\bv)$, then 
$\log p(\Ym|\bv) = \sum_j \log p_\bv(\Ym_{:,j})$.
The following large deviation property of $\log p(\Ym|\bv)$ is established in \cite[Corollary 1]{Kha2017}:
\begin{theorem}
    \label{thm:ld_renyi}
    \beq
    \begin{split}
        &\PP\left(\log p(\Ym|\bv) - \log p(\Ym|\bv^\circ) > -\frac{M}{2}\mathcal{D}_{1/2}(p_\bv,p_{\bv^\circ})\right) \\
                                                    &\leq \exp\left(-\frac{M}{4}\mathcal{D}_{1/2}(p_\bv,p_{\bv^\circ})\right)
    \label{eq:ld_renyi}
    \end{split}
    \eeq
    where $\mathcal{D}_{1/2}(p_\bv,p_{\bv^\circ})$ is the Renyi divergence of order $1/2$ between
    $p_\bv$ and $p_\bv^\circ$ defined in Definition \ref{def:renyi}. \hfill $\square$
\end{theorem}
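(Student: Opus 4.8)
The plan is to read the statement as a Chernoff-type large deviation bound for a sum of i.i.d.\ random variables, and to exploit the fact that the tilt parameter $s=1/2$ is exactly what produces the order-$1/2$ Rényi divergence. Since under the true model the columns $\Ym_{:,j}$ are i.i.d.\ $\mathcal{CN}(0,\Sigmam(\bv^\circ))$, the log-likelihood difference splits into a sum of $M$ i.i.d.\ per-column log-likelihood ratios,
\beq
\log p(\Ym|\bv) - \log p(\Ym|\bv^\circ) = \sum_{j=1}^M \Lambda_j, \qquad \Lambda_j := \log\frac{p_\bv(\Ym_{:,j})}{p_{\bv^\circ}(\Ym_{:,j})}.
\eeq
First I would apply the exponential Markov inequality with tilt $s>0$: for any threshold $\tau$,
\beq
\PP\Big(\textstyle\sum_{j=1}^M \Lambda_j > \tau\Big) \leq e^{-s\tau}\,\big(\EE_{\bv^\circ}[e^{s\Lambda_1}]\big)^M .
\eeq

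The second step is to evaluate the single-column moment generating function. Directly from the definition of $\Lambda_1$,
\beq
\EE_{\bv^\circ}[e^{s\Lambda_1}] = \int p_\bv(x)^s\,p_{\bv^\circ}(x)^{1-s}\,dx = \exp\big((s-1)\,\mathcal{D}_s(p_\bv,p_{\bv^\circ})\big),
\eeq
which is finite because $\Sigmam(\bv)$ and $\Sigmam(\bv^\circ)$ are positive definite (thanks to $\sigma^2>0$ and the bounded LSFCs), so the tilted Gaussian integral converges. Choosing $s=1/2$ collapses this integral to the Bhattacharyya coefficient,
\beq
\EE_{\bv^\circ}[e^{\frac12\Lambda_1}] = \int \sqrt{p_\bv(x)\,p_{\bv^\circ}(x)}\,dx = \exp\big(-\tfrac12\mathcal{D}_{1/2}(p_\bv,p_{\bv^\circ})\big),
\eeq
where the last equality is just a rearrangement of $\mathcal{D}_{1/2}=-2\ln\int\sqrt{p_\bv p_{\bv^\circ}}$ from Definition~\ref{def:renyi}.

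Finally I would substitute the threshold $\tau = -\frac{M}{2}\mathcal{D}_{1/2}(p_\bv,p_{\bv^\circ})$ together with $s=1/2$ into the Chernoff bound:
\beq
\PP\Big(\textstyle\sum_{j=1}^M \Lambda_j > -\tfrac{M}{2}\mathcal{D}_{1/2}\Big) \leq e^{\frac{M}{4}\mathcal{D}_{1/2}}\cdot e^{-\frac{M}{2}\mathcal{D}_{1/2}} = e^{-\frac{M}{4}\mathcal{D}_{1/2}},
\eeq
which is precisely the claimed inequality. I do not expect a genuine obstacle here: the argument is the textbook i.i.d.\ tail bound, and the only substantive observation is that $s=1/2$ is the correct tilt. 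That half-tilt is not found by optimizing the Chernoff exponent but is instead dictated by the order-$1/2$ (symmetric) divergence appearing in the statement; it is the single value for which both the MGF exponent and the prescribed threshold scale with the \emph{same} divergence $\mathcal{D}_{1/2}$, so the $e^{-s\tau}$ and $(\EE e^{s\Lambda_1})^M$ factors combine into the clean, self-normalizing exponent $-\frac{M}{4}\mathcal{D}_{1/2}$. Everything else reduces to bookkeeping of a Gaussian integral, and the closed form of $\mathcal{D}_{1/2}$ in terms of $\Sigmam(\bv)$ and $\Sigmam(\bv^\circ)$ is not needed at this stage (it will only enter the subsequent steps that feed this bound into the union bound \eqref{eq:union_bound} and the RIP of Theorem~\ref{thm:rip_main}).
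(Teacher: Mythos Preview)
Your proof is correct. The paper does not actually prove this statement itself; it simply cites it as \cite[Corollary~1]{Kha2017}. Your argument---splitting the log-likelihood ratio into $M$ i.i.d.\ per-column terms, applying the exponential Markov inequality with tilt $s$, identifying the single-column MGF with $\exp((s-1)\mathcal{D}_s)$, and then specializing to $s=1/2$---is the standard Chernoff derivation one would expect, and the arithmetic checks out line by line.

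One cosmetic remark: Definition~\ref{def:renyi} in the paper is stated only for $t>1$, yet the theorem invokes $\mathcal{D}_{1/2}$. You silently (and correctly) use the natural continuation of the same formula to $t\in(0,1)$, where $\mathcal{D}_{1/2}(p,q)=-2\ln\int\sqrt{pq}$. That is the standard definition, and for $s=1/2$ the integral $\int\sqrt{p_\bv p_{\bv^\circ}}\leq 1$ is automatically finite by Cauchy--Schwarz, so your finiteness remark is even stronger than needed. This is a gap in the paper's definition, not in your argument.
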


The result of Theorem \ref{thm:ld_renyi} holds only if $\mathcal{D}_{1/2}(p_\bv,p_{\bv^\circ})>0$, so in the following
we will establish conditions under which this is true. First, note that since $p_\bv$ and $p_{\bv^\circ}$
are zero-mean Gaussian distributions with covariance matrices
$\Sigmam(\bv)$ and $\Sigmam(\bv^\circ)$ resp.,
their Renyi divergence of order $t$ can be expressed in closed form as:
\beq
    \mathcal{D}_{t}(p_\bv,p_{\bv^\circ}) 
    = \frac{1}{2(1-t)}\log \frac{|(1-t)\Sigmam(\bv)+t\Sigmam(\bv^\circ)|}{|\Sigmam(\bv)|^{1-t}|\Sigmam(\bv^\circ)|^t}
\eeq
Let $\psi(\bv):=-\log|\Sigmam(\bv)|$, then we can see that
$\mathcal{D}_{t}(p_\bv,p_{\bv^\circ})\geq t\frac{m^*}{4} \|\bv-\bv^\circ\|_2^2$,
with $m^*$ being the strong convexity constant of $\psi(\cdot)$,
is equivalent to
\beq
\psi((1-t)\bv + t\bv^\circ) \leq (1-t)\psi(\bv) + t\psi(\bv^\circ) - \frac{1}{2}m^*t(1-t)\|\bv-\bv^\circ\|_2^2.
\label{eq:s_convex}
\eeq
Here we used the fact that 
\beq
\begin{split}
-\psi((1-t)\bv + t\bv^\circ) &= \log|\Sigmam((1-t)\bv+t\bv^\circ)| \\
                            &= \log|\Am((1-t)\Bm + t\Bm^\circ)\Gm^\circ\Am^\herm + \sigma^2\Id_{\dimPilots}|\\
                            &= \log|(1-t)\Sigmam(\bv) + t\Sigmam(\bv^\circ)|
\end{split}
\eeq
Inequality \eqref{eq:s_convex} is precisely the condition that $\psi(\cdot)$
is strongly convex along the line connecting
$\bv$ and $\bv^\circ$. So if $\psi(\cdot)$ is strongly convex on the set
of $2K_a$-sparse vectors, then 
\beq
\mathcal{D}_{t}(p_\bv,p_{\bv^\circ})\geq t\frac{m^*}{4} \|\bv-\bv^\circ\|_2^2
\label{eq:renyi_bound}
\eeq
holds for any $K_a$-sparse vectors $\bv$ and $\bv^\circ$.
Let $\bv_1,\bv_2\in\Theta_{K_a}$ be two arbitrary $K_a$-sparse vectors.
Since $\log|\cdot|$ is differentiable on $\RR^+$, a Taylor expansion of $\psi(\bv_1)$ around $\bv_2$ gives:
\beq
\begin{split}
\psi(\bv_1) = \psi(\bv_2) 
&+ \langle\nabla \psi(\bv_1),\bv_2 - \bv_1\rangle \\
&+ \frac{1}{2}(\bv_2-\bv_1)^\top \nabla^2\psi(\bv_r)(\bv_2 - \bv_1) 
\end{split}
\eeq
for $\bv_r = (1-r)\bv_1 + r\bv_2$ with some $r\in[0,1]$.
Let $\Delta\bv := \bv_2 - \bv_1$, then the strong convexity of $\psi(\cdot)$ is
equivalent to
\beq
\sum_{i,j} \left.\frac{\partial \psi}{\partial b_i\partial b_j}\right|_{\bv = \bv_r}\Delta b_i\Delta b_j 
\geq m^*\|\bv_2 - \bv_1\|_2^2.
\label{eq:strong_convexity}
\eeq
The derivatives of $\psi$ are given by:
\beq
\left.\frac{\partial \psi}{\partial b_i}\right|_{\bv = \bv_r}
= -\tr(\Sigmam( \bv_r )^{-1}g_i^\circ\av_i\av_i^\herm)
\eeq
\beq
\left.\frac{\partial \psi}{\partial b_i\partial b_j}\right|_{\bv = \bv_r}
= \tr(\Sigmam(\bv_r)^{-1}g_i^\circ\av_i\av_i^\herm\Sigmam(\bv_r)^{-1}g_j^\circ\av_j\av_j^\herm)
\eeq
Next we will calculate $m^*$. It holds that
\beq
\begin{split}
    \sum_{i,j} \left.\frac{\partial \psi}{\partial b_i\partial b_j}\right|_{\bv = \bv_r}  
        \Delta b_i\Delta b_j  
        &= \sum_{i,j}\tr\left(\Sigmam(\bv_r)^{-1}\Delta b_i g_i^\circ\av_i\av_i^\herm
        \Sigmam(\bv_r)^{-1}\Delta b_j g_j^\circ\av_j\av_j^\herm\right)\\
        &= \tr\left(\Sigmam(\bv_r)^{-1}\left(\sum_i \Delta b_i g_i^\circ\av_i\av_i^\herm\right)
        \Sigmam(\bv_r)^{-1}\left(\sum_j \Delta b_j g_j^\circ\av_j\av_j^\herm\right)\right)\\
        &= \tr\left(\Sigmam(\bv_r)^{-1}(\Sigmam(\bv_2)-\Sigmam(\bv_1))
        \Sigmam(\bv_r)^{-1}(\Sigmam(\bv_2)-\Sigmam(\bv_1))\right)\\
        &\geq \sigma_\text{min}(\Sigmam(\bv_r)^{-1})
         \tr\left((\Sigmam(\bv_2)-\Sigmam(\bv_1))
        \Sigmam(\bv_r)^{-1}(\Sigmam(\bv_2)-\Sigmam(\bv_1))\right)\\
        &\geq \sigma^2_\text{min}(\Sigmam(\bv_r)^{-1})\|\Sigmam(\bv_2)-\Sigmam(\bv_1)\|_F^2\\
        &= \frac{\|\Sigmam(\bv_2)-\Sigmam(\bv_1)\|_F^2}{\sigma^2_\text{max}(\Sigmam(\bv_r))}.
\end{split}
\label{eq:2nd_deriv_bound}
\eeq
Here $\sigma_\text{min}(\Am)$ (resp., $\sigma_\text{max}(\Am)$) denotes the minimum (resp., maximum) singular value of $\Am$.
In the first and the second inequality  in (\ref{eq:2nd_deriv_bound}) 
we used the fact that $\tr(\Am\Bm)\geq \sigma_\text{min}(\Am)\tr(\Bm)$ for positive semi-definite matrices
$\Am,\Bm$, and in the second inequality in (\ref{eq:2nd_deriv_bound})  we used the fact that the covariance matrix is symmetric and
$\tr(\Am^\top \Am) = \|\Am\|_F^2$.

We can rewrite
$\|\Sigmam(\bv_2)-\Sigmam(\bv_1)\|_F^2 = \|\bA(\gv^\circ\odot(\bv_2 - \bv_1))\|_2^2$, where
$\bA\in\CC^{\dimPilots^2\times K_\text{tot}}$ is the matrix defined in (\ref{eq:bA_definition}), 
obtained by stacking  the $L^2$-dimensional vectors $\text{vec}(\av_k\av_k^\herm)$ by columns. We show in 
\eqref{eq:thm:short:Av:lowerbound} that 
$\|\bA\xv\|_2\geq\|\mathring{\bA}\xv\|_2$ holds $\forall \xv \in \RR^{K_\text{tot}}$,
with $\mathring{\bA}$ being the centered version
of $\bA$, which is defined in \eqref{eq:a_centered}.

We show in Theorem \ref{thm:rip_main},
that, with probability at least
$1-\exp(-c_\delta\dimPilots)$, $\mathring{\bA}/\sqrt{\dimPilots(\dimPilots-1)}$,
the centered and rescaled
version of $\bA$
has RIP of order $2K_a$ with constant $\delta_{2K_a} < \delta$ if condition \eqref{eq:rip_condition} is fulfilled.
In particular 
$\|\mathring{\bA}\xv\|_2^2\geq(1-\delta_{2 K_a})\dimPilots(\dimPilots-1)\|\xv\|_2^2$
holds for all $2K_a$-sparse vectors $\xv$. So the RIP of $\mathring{\bA}$ implies that
\beq
\begin{split}
\|\Sigmam(\bv_2)-\Sigmam(\bv_1)\|_F^2
&\geq (1-\delta_{2K_a})\dimPilots(\dimPilots-1)\|\gv^\circ\odot(\bv_2 - \bv_1)\|_2^2\\
&\geq (1-\delta_{2K_a})\dimPilots(\dimPilots-1)g_\text{min}^2\|\bv_2-\bv_1\|_2^2\\
&\geq \frac{1}{2}(1-\delta_{2K_a})\dimPilots^2g_\text{min}^2\|\bv_2-\bv_1\|_2^2
\end{split}
\label{eq:fro_norm_lower}
\eeq
An upper bound on $\sigma_\text{max}^2(\Sigmam(\bv_r)) = \|\Sigmam(\bv_r)\|_{op}^2$ 
can be found as follows. Note that for any binary $2K_a$-sparse vector $\bv$, it holds that
\beq
\begin{split}
    \sigma_\text{max}(\Sigmam(\bv))
    &= \|\Sigmam(\bv)\|_{op} \\
    &= \left\Vert\sum_{k=1}^{K_\text{tot}}g^\circ_kb_k\av_k\av_k^\herm + \sigma^2\Id\right\Vert_{op} \\
    &\leq g_\text{max}\left\Vert\sum_{k\in \supp(\bv)}\av_k\av_k^\herm\right\Vert_{op} + \sigma^2 \\
    &= g_\text{max}\left\Vert\sum_{k\in \supp(\bv)}(\av_k\av_k^\herm-\Id) + 2K_a\Id \right\Vert_{op} +\sigma^2\\
    &\leq g_\text{max}\left\Vert\sum_{k\in \supp(\bv)} (\av_k\av_k^\herm - \Id)\right\Vert_{op} + g_\text{max}2K_a + \sigma^2 
\end{split}
\label{eq:s_max_1}
\eeq
Now $\sum_{k\in \supp(\bv)} (\av_k\av_k^\herm - \Id)$
is a sum of $2K_a$ random matrices  
$\av_k\av_k^\herm$, with $\av_k$ drawn i.i.d. from the sphere of radius $\sqrt{\dimPilots}$, and therefore
sub-Gaussian.
A generic large deviation result for such matrices, e.g., the complex version of
\cite[Theorem 4.6.1]{vershynin_2018}, shows that
\beq
\left\Vert\sum_{k\in \supp(\bv)} (\av_k\av_k^\herm - \Id)\right\Vert_{op}
\leq \left(\sqrt{K_a} + C\left(\sqrt{\dimPilots} + t\right)\right)^2
\label{eq:op_norm_bound}
\eeq
holds with probability at least $1 - 2\exp(-t^2)$ for some universal constant $C>0$.
Let $t=\sqrt{\beta}\max(\sqrt{K_a},\sqrt{\dimPilots})$ for some $\beta>0$.
Then \eqref{eq:s_max_1} gives that
\beq
\sigma_\text{max}(\Sigmam(\gammam)) \leq (1+\beta C')g_\text{max}\max\{K_a,\dimPilots\} + \sigma^2
\label{eq:s_max_2}
\eeq
holds with probability at least $1-\exp(-\beta\max\{K_a,\dimPilots\})$
for some universal constants $C'>0$. 
So \eqref{eq:2nd_deriv_bound} can be further bounded using \eqref{eq:fro_norm_lower} and \eqref{eq:s_max_2} as
\beq
    \frac{\|\Sigmam(\bv_2)-\Sigmam(\bv_1)\|_F^2}{\sigma^2_\text{max}(\Sigmam(\bv_r))}
    \geq \frac{(1-\delta_{2K_a})g_\text{min}^2\|\bv_2-\bv_1\|_2^2}
    {2\left((1+C'\beta)g_\text{max}\max\left\{\frac{K_a}{\dimPilots},1\right\}+\frac{\sigma^2}{\dimPilots}\right)^2}
\eeq
Together with \eqref{eq:strong_convexity} and \eqref{eq:2nd_deriv_bound} this implies
that, if the pilot matrix satisfies the RIP of order $2K_a$ with constant
$\delta_{2K_a}<1$,
then $\psi(\cdot)$ is strongly convex along the line between any two $K_a$-sparse vectors
with constant
\beq
m^* \geq
\frac{(1-\delta_{2K_a})g_\text{min}^2}
{2\left((1+C'\beta)g_\text{max}\max\left\{\frac{K_a}{\dimPilots},1\right\}+\frac{\sigma^2}{\dimPilots}\right)^2}
\label{eq:m_bound}
\eeq
with probability exceeding $1 - \exp(-\beta \max\{K_a,\dimPilots\})$. Since the bound is independent
of the chosen vectors and the number of $2K_a$ sparse binary vectors
is bounded by $\binom{K_\text{tot}}{2K_a}\leq(eK_\text{tot}/K_a)^{2K_a} \leq (eK_\text{tot}/K_a)^{2\max\{K_a,\dimPilots\}}$,
\eqref{eq:m_bound} holds in the set of \emph{all} $2K_a$-sparse vectors with probability
exceeding 
\beq
1 - \exp\left(-2\max\{K_a,\dimPilots\}\left(\frac{\beta}{2} - \log\left(\frac{eK_\text{tot}}{2K_a}\right)\right)\right)
\eeq
This probability exceeds $1-\epsilon$ if
\beq
\beta \geq 2\log \left(\frac{eK_\text{tot}}{2K_a}\right) + \frac{\log(2/\epsilon)}{\max\{K_a,\dimPilots\}}
\eeq

We get that 
\beq
\begin{split}
    m^* \geq
    \frac{(1-\delta_{2K_a})g_\text{min}^2}
{2\left(C'\left(2\log\left(\frac{eK_\text{tot}}{2K_a}\right)+ \frac{\log(2/\epsilon)}{\max\{K_a,\dimPilots\}}\right)g_\text{max}\max\left\{\frac{K_a}{\dimPilots},1\right\}+\frac{\sigma^2}{\dimPilots}\right)^2}
\label{eq:m_bound2}
\end{split}
\eeq
holds with probability exceeding $1-\epsilon$.

Let $k_d=\|\bv_2-\bv_1\|_0\leq 2K_a$
denote the number of positions in which $\bv_2$ and $\bv_1$ differ, i.e. their
Hamming distance. Then the Renyi divergence \eqref{eq:renyi_bound}
can be lower bound as:
\beq
\mathcal{D}_{t}(p_\bv,p_{\bv^\circ})\geq t\frac{m^*}{4} k_d
\label{eq:renyi_lower}
\eeq
Putting everything together, we can complete the union bound. Note that 
there are $\binom{K_a}{k_d}\binom{K_\text{tot}-K_a}{k_d}\leq (3eK_\text{tot}K_a)^{k_d}$ ways to
choose a support which differs from the true support in $k_d$ positions.
Now, denote by $\mathcal{C}$ the event that the pilot matrix $\Am$ is such that the RIP condition
\eqref{eq:fro_norm_lower} holds, and the bound \eqref{eq:m_bound2}.
Using \eqref{eq:union_bound}, Theorem \ref{thm:ld_renyi} and \eqref{eq:renyi_lower} we get that
\beq
\begin{split}
    \PP(\bv^*\neq\bv^\circ,\mathcal{C}) 
    &\leq \sum_{\bv\in\Theta_{K_a}\setminus\{\bv^\circ\}}\exp\left(-\frac{M}{4}\mathcal{D}_{1/2}(p_\bv,p_{\bv^\circ})\right) \\
    &\leq \sum_{k_d=1}^{2K_a}(3eK_\text{tot}K_a)^{k_d}\exp\left(-M\frac{m^*}{4}k_d\right) \\
    &= \sum_{k_d=1}^{2K_a}\exp\left(-k_d\left(M\frac{m^*}{4}-\log(3eK_\text{tot}K_a)\right)\right) 
\end{split}
\eeq
So let 
\beq
M \geq \frac{4}{m^*}\log\left(3eK_\text{tot}K_a\frac{1+\epsilon}{\epsilon}\right)
\eeq
which is precisely condition \eqref{eq:ml_M_condition}, then
\beq
\begin{split}
    \PP(\bv^*\neq\bv^\circ,\mathcal{C}) 
    &\leq \sum_{k_d=1}^{2K_a}\left(\frac{\epsilon}{1+\epsilon}\right)^{k_d}\\
    &\leq \epsilon
\end{split}
\eeq
Finally
\beq
\begin{split}
\PP(\bv^*\neq\bv^\circ) 
&\leq  \PP(\bv^*\neq\bv^\circ,\mathcal{C}) + \PP(\bar{\mathcal{C}})\\
&\leq \epsilon + \epsilon + \exp(-C_\delta\dimPilots)
\end{split}
\eeq
This concludes the proof of Theorem \ref{thm:ml:support}. 

\section{Proof of the RIP, Theorem \ref{thm:rip_main}}
\label{appendix:ripproof}
Let us first define some basic properties.
\begin{definition}[Sub-Exponential Norm]
    \label{def:sub-exp}
    Let $X$ be a real scalar random variable. Define the sub-exponential norm
    of $X$ as
    \beq
    \|X\|_{\psi_1}:= \inf\left\{t>0:\EE \left [ \exp\left(\frac{|X|}{t}\right) \right ] \leq 2\right\}.
    \eeq
\end{definition}
A well known property of sub-exponential variables is that
\beq
\PP(|X|>t)\leq 2\exp(-ct/\|X\|_{\psi_1}) \quad \forall t>0
\eeq
for some universal constant $c>0$.
\begin{definition}[Sub-Exponential Random Vector]
    \label{def:sub-exp-vector}
    Let $\Xm$ be a random vector in $\RR^n$. $\Xm$ is said to be sub-exponential
    if all its marginals are scalar sub-exponential random variables, i.e. if
    \beq
        \sup_{\xv\in S^{n-1}}\|\langle \Xm,\xv\rangle\|_{\psi_1} < \infty
    \eeq
    then we define $\|\Xm\|_{\psi_1}:=\sup_{\xv\in S^{n-1}}\|\langle \Xm,\xv\rangle\|_{\psi_1}$,
    where $S^{n-1}$ is the unit sphere in $\RR^n$.
\end{definition}
Basic properties of sub-exponential random variables and vectors can be found e.g. in Ch. 2 and 3 of
\cite{vershynin_2018}.
\begin{definition}[Convex Concentration Property (2.2 in \cite{Ada2015})]
    \label{def:convex_concentration}
    Let $\Xm$ be a random vector in $\RR^n$. $\Xm$ has the convex concentration property
    with constant $K$ if for every 1-Lipschitz convex function $\phi: \RR^n\to \RR$,
    we have $\EE [|\phi(\Xm)|] <\infty$ and for every $t>0$,
    \beq
    \PP(|\phi(\Xm) - \EE [ \phi(\Xm)] |\geq t) \leq 2\exp(-t^2/K^2)
    \eeq
\end{definition}
For the RIP of $\mathring{\bA}/\sqrt{m}$ we first establish the following results for generic
matrices $\Rm\in\RR^{m\times N}$ with independent normalized columns.
\begin{theorem}   
    \label{thm:rip}
    Let 
    $\Rm\in\mathbb{R}^{m\times N}$ be a matrix 
    with independent columns $\Rm_{:,i}$
    normalized such that $\EE[\|\Rm_{:,i} \|_2^2] = m$, with
    $\psi_1$-norm at most $\psi$. Furthermore, assume that the distribution of the columns
    satsifies
    \beq
    \PP(|\|\Rm_{:,i}\|_2^2 - m| > tm)\leq \exp(-c\sqrt{m})
    \label{eq:tail_bound}
    \eeq for some constant $c>0$.
    Also assume that $N \geq m$ and $\sqrt{m} > c^\prime\log 4N$
    for some universal
    constant $c^\prime>0$. Then
    the RIP constant of $\Rm/\sqrt{m}$ satisfies $\delta_{2s}(\Rm/\sqrt{m})<\delta$ 
    with probability $\geq 1 - \exp(-C'\sqrt{c_{\delta,\xi}m})$ for
    \begin{equation}
        2s = c_{\delta,\xi}\frac{m}{\log^2(e N /c_{\delta,\xi} m)},
    \end{equation}
    where
    $c_{\delta,\xi} \leq \min\{1,\frac{\delta^2}{(3C\xi^2)^2}\}$
    for any $\xi>\psi+1$ and $C,C'>0$ are universal constants. \hfill $\square$
\end{theorem}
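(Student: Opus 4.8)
The plan is to follow the column-independent RIP framework of Adamczak--Litvak--Pajor--Tomczak-Jaegermann \cite{Adamczak2011}, with the heavy-tailed refinements of \cite{Guedon2014:heavy:columns}, splitting the argument into an \emph{expectation} bound and a \emph{concentration} bound. First I would express the restricted isometry constant as the supremum of a self-normalized empirical process. For a support $T\subseteq[N]$ let $\Rm_T$ be the $m\times|T|$ submatrix of the indexed columns; then
\[
\delta_{2s}(\Rm/\sqrt{m})=\sup_{|T|\le 2s}\Bigl\|\tfrac{1}{m}\Rm_T^\transp\Rm_T-\Id\Bigr\|_{op}=\sup_{v}\Bigl|\tfrac{1}{m}\|\Rm v\|_2^2-\|v\|_2^2\Bigr|,
\]
where the last supremum runs over $2s$-sparse unit vectors. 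Since the columns are independent, cross-uncorrelated, and normalized so that $\EE\|\Rm_{:,i}\|_2^2=m$, one checks $\EE\tfrac{1}{m}\|\Rm v\|_2^2=\|v\|_2^2$, so $\delta_{2s}$ measures precisely the fluctuation of this chaos about its mean. Decomposing into diagonal and off-diagonal parts, the diagonal contribution $\sum_i v_i^2(\|\Rm_{:,i}\|_2^2/m-1)$ is handled directly by the column-norm tail \eqref{eq:tail_bound}, while the off-diagonal quadratic form is the genuine object to be controlled uniformly over sparse $v$.

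Second I would bound $\EE\,\delta_{2s}$ using the moment estimate for independent columns of bounded sub-exponential norm. The generic-chaining/entropy argument of \cite{Adamczak2011}, applied with $\|\Rm_{:,i}\|_{\psi_1}\le\psi$, yields a bound of the schematic form $\EE\,\delta_{2s}\lesssim C\xi^2\sqrt{2s/m}\,\log(eN/(2s))$ once $\psi$ is absorbed into any $\xi>\psi+1$. Choosing $2s=c_{\delta,\xi}\,m/\log^2(eN/(c_{\delta,\xi}m))$ with $c_{\delta,\xi}\le\min\{1,\delta^2/(3C\xi^2)^2\}$ then forces $\EE\,\delta_{2s}\le\delta/2$. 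This is exactly where the characteristic \emph{squared} logarithm of the heavy-tailed column model enters: one logarithm from the union bound over the $\binom{N}{2s}$ supports and a second from integrating the sub-exponential (rather than sub-Gaussian) tails of the chaos. The hypotheses $N\ge m$ and $\sqrt{m}>c'\log 4N$ guarantee that the logarithmic factors are in the stated regime and that the entropy integral converges.

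Third I would upgrade the expectation bound to the claimed high-probability statement. The map $\Rm\mapsto\sup_{v}\|\Rm v\|_2/\sqrt{m}$ is a convex, Frobenius-Lipschitz function of the independent columns, so the convex-concentration property (Definition \ref{def:convex_concentration}, in the form of \cite{Ada2015}) together with the tail \eqref{eq:tail_bound} gives a deviation inequality at scale $\delta/2$. Because the per-column concentration in \eqref{eq:tail_bound} holds only at the rate $\exp(-c\sqrt{m})$, the resulting probability that $\delta_{2s}\ge\delta$ is of the order $\exp(-C'\sqrt{c_{\delta,\xi}m})$, which is the $\sqrt{m}$-rate asserted; the $\sqrt{m}$ (rather than $m$) in the exponent is inherited directly from the sub-exponential column model. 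Combining $\EE\,\delta_{2s}\le\delta/2$ with this deviation completes the proof.

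I expect the main obstacle to be the expectation bound of the second step, namely obtaining the correct $m/\log^2(eN/(2s))$ sparsity level. Since the columns are only sub-exponential and have dependent entries, the standard sub-Gaussian RIP machinery does not apply, and one must instead run the truncation-plus-chaining decomposition of \cite{Adamczak2011}, separating a well-behaved part controlled by Dudley's entropy integral from a heavy-tailed part controlled by the $\psi_1$ bound. It is this split that produces the extra logarithmic factor and dictates the precise dependence of $c_{\delta,\xi}$ on $\delta$ and $\xi$.
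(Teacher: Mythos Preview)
The paper's proof is much more direct than your plan: rather than reconstructing the expectation-plus-concentration machinery, it invokes \cite[Theorem 3.3]{Adamczak2011} as a black box (stated verbatim in the paper as Theorem \ref{thm:adam}). That result already delivers a bound of the form
\[
\delta_s(\Rm/\sqrt{m}) \le C\xi^2\sqrt{s/m}\,\log\bigl(eN/(s\sqrt{s/m})\bigr) + \theta'
\]
together with three explicit failure-probability terms. The remaining work is elementary bookkeeping: substitute $s = c\,m/\log^2(eN/(cm))$, check that the right-hand side is at most $3C\xi^2\sqrt{c}$ (this is where the $\log\log/\log\le 1/e$ trick enters), and control the three probability terms by union bounds and the assumed column-norm tail \eqref{eq:tail_bound}. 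The choice $K'=\sqrt{1+\theta'}$ merges two of the failure events, and the hypothesis $\sqrt{m}>c'\log 4N$ is used precisely to absorb the factor $N$ from the union bound into $\exp(-c\sqrt{m})$. No separate concentration argument is performed.

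Your third step contains a genuine gap. You propose to upgrade $\EE\,\delta_{2s}\le\delta/2$ to a tail bound via the convex concentration property of Definition \ref{def:convex_concentration}. But that property is \emph{not} among the hypotheses of Theorem \ref{thm:rip}; only the $\psi_1$ bound on the columns and the tail \eqref{eq:tail_bound} on their norms are assumed. In the paper, convex concentration (and the citation \cite{Ada2015}) appears in a logically separate place: it is used, via Hanson-Wright, to show that the \emph{specific} columns built from spherical pilots are sub-exponential, i.e.\ to verify the hypotheses of Theorem \ref{thm:rip} for that particular model. It plays no role inside the proof of Theorem \ref{thm:rip} itself. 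The high-probability statement in \cite{Adamczak2011} is obtained from the sub-exponential assumption by a different mechanism (truncation and a Talagrand-type deviation for the empirical process), and the paper simply imports that conclusion rather than redoing it. Your first two steps are a reasonable sketch of what happens \emph{inside} the Adamczak--Litvak--Pajor--Tomczak-Jaegermann argument, but you are making the proof harder than necessary and, in step 3, reaching for a tool the theorem's hypotheses do not supply.
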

\begin{proof}
    We make use of the following generic RIP result from
    \cite[Theorem~3.3]{Adamczak2011} for matrices with i.i.d. sub-exponential
    columns:
    \begin{theorem}
        \label{thm:adam}
        Let $m\geq 1$ and $s,N$ be integers such that $1\leq s\leq \min (N,m)$.
        Let $\Rm_{:,1},...,\Rm_{:,N} \in \RR^m$ 
        be independent sub-exponential random vectors 
        normalized such that $\EE[ \|\Rm_{:,i}\|^2 ]  = m$ and
        let $\psi = \max_{i\leq N}\|\Rm_{:,i}\|_{\psi_1}$. Let $\theta^\prime \in(0,1)$,
        $K,K^\prime \geq 1$ and set $\xi = \psi K + K^\prime$. Then for the
        matrix $\Rm$ with columns $\Rm_{:,i}$
        \beq
        \delta_s\left(\frac{\Rm}{\sqrt{m}}\right) \leq C \xi^2\sqrt{\frac{s}{m}}
        \log\left(\frac{eN}{s\sqrt{\frac{s}{m}}}\right)
        + \theta^\prime
        \eeq
        holds with probability larger than
        \beq
        \begin{split}
            1 &- \exp\left(-\hat{c}K\sqrt{s}\log\left(\frac{eN}{s\sqrt{\frac{s}{m}}}\right)\right) \\
              &- \PP\left(\max_{i\leq N}\|\Rm_{:,i}\|_2\geq K^\prime\sqrt{m}\right) 
              - \PP\left(\max_{i\leq N}\left|\frac{\|\Rm_{:,i}\|_2^2}{m} - 1\right|\geq \theta^\prime\right),
        \end{split}
        \label{eq:adam_error}
        \eeq
        where $C,\hat{c} > 0$ are universal constants.  \hfill $\square$
    \end{theorem}
    
    In order to prove Theorem \ref{thm:rip} we shall apply Theorem \ref{thm:adam}. 
    Let us abbreviate $\delta_s = \delta_{s}\left(\frac{\Rm}{\sqrt{m}}\right)$.
    We set $\theta^\prime = \delta_s/2$ and
    Therefore, we consider the bound
    \begin{equation}
        \delta_{s}\leq
        C \xi^2 \sqrt{\frac{s}{m}} \log\left(\frac{eN}{s \sqrt{s/m}}\right)
        =: D, 
        \label{eq:deltas}
    \end{equation}
    that holds with probability larger than 
    \eqref{eq:adam_error}.
    Let $s = cm/\log^2(e\frac{N}{c m})$ for any $0 <c\leq1$.
    Note that the conditions $c\leq 1$ and $N\geq m$ guarantee that
    $\log (e\frac{N}{cm})\geq 1$.
    Plugging into \eqref{eq:deltas}
    we see that the RIP-constant
    satisfies 
    \begin{align}
        \delta_{s}&\leq C\xi^2\sqrt{c}\frac{
            \log(e(\frac{N}{c m})^{3/2}\log^3(e\frac{N}{c m}))
        }{\log(e\frac{N}{c m})} \\
        &\leq C\xi^2\sqrt{c}
        \left(\frac{3}{2} +
        \frac{3\log\log e\frac{N}{c m}}{\log e\frac{N}{c m}}
        \right)
        \\
        &\leq C\xi^2\sqrt{c}\left(\frac{3}{2} + \frac{3}{e} \right)\\
        &\leq 3C\xi^2\sqrt{c}
    \end{align}
    where in the first line we made use of $m\leq N$ and in the last line we used
    $\log\log x/\log x \leq 1/e$.
    We proceed to bound the probability in \eqref{eq:adam_error}.
    Using the union bound, the last two terms in \eqref{eq:adam_error} can be 
    bound as:
    \begin{align}
        \PP\left(\max_{i\leq N}\|\Rm_{:,i}\|_2\geq K^\prime\sqrt{m}\right) 
        &\leq N\PP\left(\|\Rm_{:,i}\|^2_2\geq K^{\prime2}m\right)\\
        &\leq N\PP\left(|\|\Rm_{:,i}\|^2_2 - m| \geq (K^{\prime2}-1)m\right)
    \end{align}
    and
    \begin{align}
    \PP\left(\max_{i\leq N}\left|\frac{\|\Rm_{:,i}\|_2^2}{m} - 1\right|\geq \theta^\prime\right)
        &\leq N\PP\left(|\|\Rm_{:,i}\|^2_2 - m| \geq \theta^\prime m\right)
    \end{align}
    By choosing $K^\prime = \sqrt{1+\theta}$ we can treat both terms equivalently.
    By the tail bound assumption on the norms we have
    \beq
        2N\PP\left(|\|\Rm_{:,i}\|^2_2 - m| \geq \theta^\prime m\right)
        \leq \exp(\log 4N - c\sqrt{m})
        \leq \exp(-c^\prime \sqrt{m})
        \label{eq:prob_bound2}
    \eeq
    for some positive constants $c,c^\prime$. The last inequality follows from the
    assumption $\sqrt{m} > c^\prime\log 4N$. Also we have
    \begin{align}
        \exp\left(-\widehat{c}K\sqrt{s}\log \left(e\frac{N\sqrt{m}}{s^{3/2}}\right)\right) 
        &\leq \exp\left(-\widehat{c}K\sqrt{s}\log \left(e\frac{N}{m}\right)\right) \\
        &= \exp(-\widehat{c}K\sqrt{c}\sqrt{m})
        \label{eq:prob_bound1}
    \end{align}
    where in the first line we used $s \leq m$.
    By choosing $c$ small enough such that $\hat{c}K\sqrt{c}\sqrt{m}<c^\prime$ we get from
    \eqref{eq:prob_bound2} and \eqref{eq:prob_bound1} that
    \beq
    \mathbb{P}(\delta_{s}>D) \leq
        2\exp(-\widehat{c}K\sqrt{c}\sqrt{m})
    \eeq
    The statement of
    Theorem \ref{thm:rip} 
    follows by choosing $c$ small enough such that $\delta_s \leq\delta$.
\end{proof}

\newcommand{\AmatrixR}{\Amatrix^R}
\newcommand{\AcolR}{\Acol^R}
We want to apply Theorem \ref{thm:rip}, which holds for real values matrices $\Rm$, to the matrix 
\beq
\label{eq:a_real}
\AmatrixR
:=\sqrt{2}[\Re(\mathring{\bA});\Im(\mathring{\bA})]
\in\mathbb{R}^{2\dimPilots(\dimPilots - 1)\times\dimParam},
\eeq
i.e., the real matrix obtained by stacking real and imaginary part of $\mathring{\bA}$,
with $m=2\dimPilots(\dimPilots-1)$ and $N = K_\text{tot}$. 
Consider the $k$-th column $\AmatrixR_{:,k}$ of $\AmatrixR$. We have
\beq
\begin{split}
    \EE\left[ \|[\Re(\mathring{\bA}_{:,k});\Im(\mathring{\bA}_{:,k})]\|_2^2\right]/c_L
    &= \EE\left[\|\Re(\mathring{\bA}_{:,k})\|_2^2 + \|\Im(\mathring{\bA}_{:,k})\|^2_2\right]/c_L \\
    &= \EE\left[\|\mathring{\bA}_{:,k}\|_2^2\right]/c_L \\
    &=\EE\left[\sum_{i\neq j}|a_{k,i}a_{k,j}^\herm|^2\right]\\
    &= \EE\left[\left(\sum_{i=1}^{\dimPilots}|a_{k,i}|^2\right)^2\right] - \sum_{i=1}^{\dimPilots}\EE[|a_{k,i}|^4]\\
    &= \dimPilots^2 - \dimPilots\kappa_a\\
    &= \dimPilots(\dimPilots-\kappa_a), 
\end{split}
\label{eq:norm_calc}
\eeq
Since $c_L$ was chosen exactly as $(L-1)/(L-\kappa_a)$ we have $\EE[\|\AmatrixR_{:,k}\|_2^2] = m$.
Note that a spherical vector is especially sub-Gaussian, which implies that its fourth moment 
can be bound by a constant independent of the dimension \cite{vershynin_2018}.
To apply Theorem \ref{thm:rip} we need to show that
the columns of $\AmatrixR$ are sub-exponential with
        $\psi_1$ norm independent of the dimension.

\begin{sloppypar}
    Note that for any vector $\uv\in\RR^{2\dimPilots(\dimPilots-1)}$ the marginal
    $\langle \AmatrixR_{:,k},\uv \rangle$ can be expressed as a quadratic form in 
    ${\av_k^R:=\sqrt{2}[\Re(\av_k);\Im(\av_k)]}$ as the following calculation shows.
    Let $\Um,\widetilde{\Um}\in\RR^{\dimPilots\times\dimPilots}$ 
    be two matrices with zeros on the diagonal
    such that $\uv = [\text{vec}_\text{non-diag}(\Um);\text{vec}_\text{non-diag}(\widetilde{\Um})]$. Then it holds:
\end{sloppypar}
\beq
\begin{split}
    \left\langle\AmatrixR_{:,k},\uv\right\rangle
    &=\sqrt{2}
    \sum_{i\neq j}\left( \Re(a_{k,i}a^\herm_{k,j})U_{ij}
    + \Im(a_{k,i}a^\herm_{k,j})\widetilde{U}_{ij}\right)\\
    &= (\av_k^R)^\top \Qm_\uv\av^R_k
\end{split}
\label{eq:real_marginal}
\eeq
with
\beq
\Qm_\uv = 
\frac{1}{\sqrt{2}}\left(\begin{matrix}
        \Um & \widetilde{\Um}\\
        -\widetilde{\Um} & \Um 
\end{matrix}\right)
\eeq
and therefore $\|\Qm_\uv\|^2_F  = \|\uv\|^2_2$.\\ 
This form of $\Qm_\uv$ follows from the identities:
\begin{align}
    \Re(a_{k,i}a_{k,j}^\herm) &= \Re(a_{k,i})\Re(a_{k,j})+\Im(a_{k,i})\Im(a_{k,j})\\
    \Im(a_{k,i}a_{k,j}^\herm) &= -\Re(a_{k,i})\Im(a_{k,j})+\Im(a_{k,i})\Re(a_{k,j})
\end{align}
We can now use the following concentration result for quadratic forms
from \cite{Ada2015} which states
that a random vector which satisfies the convex concentration property
also satisfies the following inequality, known as Hanson-Wright inequality \cite{Rud2013}:
\begin{theorem}[Theorem 2.5 in \cite{Ada2015}]
    \label{thm:hanson-wright-convex}
    Let $\Xm$ be a mean zero random vector in $\RR^n$, which satisfies the convex concentration
    property with constant $B$, then for any $n\times n$ matrix $\Ym$ and every $t>0$,
    \begin{align}
        &\PP(|\Xm^\top \Ym\Xm - \EE [ \Xm^\top \Ym\Xm] |>t)\nonumber\\
        &\leq 2 \exp\left(-c\min\left(\frac{t^2}{2B^4\|\Ym\|_F^2},\frac{t}{B^2\|\Ym\|_{op}}\right)\right)
    \end{align}\hfill $\square$.
\end{theorem}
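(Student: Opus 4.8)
The plan is to bound the moment generating function (MGF) of the centred quadratic form $\Xm^\top\Ym\Xm-\EE\Xm^\top\Ym\Xm$ and then optimise a Chernoff bound to recover the advertised two-regime sub-Gaussian/sub-exponential tail. The one structural fact I would extract from Definition \ref{def:convex_concentration} is that every \emph{linear} functional of $\Xm$ is sub-Gaussian: for $a\in\RR^n$ the map $x\mapsto\langle a,x\rangle/\|a\|_2$ is $1$-Lipschitz and, being linear, convex, so the convex concentration property with constant $B$ gives $\PP(|\langle a,\Xm\rangle|>s)\le 2\exp(-cs^2/(B^2\|a\|_2^2))$ and hence, since $\EE\Xm=0$, the Laplace bound $\EE\exp(\langle a,\Xm\rangle)\le\exp(CB^2\|a\|_2^2)$. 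Before anything else I would replace $\Ym$ by its symmetric part $(\Ym+\Ym^\top)/2$, which leaves $\Xm^\top\Ym\Xm$ unchanged and increases neither $\|\Ym\|_F$ nor $\|\Ym\|_{op}$.

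The core step is a Gaussian linearisation of the quadratic form. For positive semidefinite $\Pm$ and an independent $g\sim\mathcal N(0,\Id_n)$ one has $\exp(\lambda\,\Xm^\top\Pm\Xm)=\EE_g\exp(\sqrt{2\lambda}\,\langle \Pm^{1/2}g,\Xm\rangle)$ for $\lambda>0$. Taking $\EE_\Xm$, interchanging expectations, and applying the sub-Gaussian Laplace bound \emph{conditionally on} $g$ with direction $a=\sqrt{2\lambda}\,\Pm^{1/2}g$ reduces everything to a pure Gaussian chaos,
\[
\EE_\Xm\exp(\lambda\,\Xm^\top\Pm\Xm)\le\EE_g\exp\!\big(2CB^2\lambda\,g^\top\Pm g\big)=\det(\Id_n-4CB^2\lambda\,\Pm)^{-1/2},
\]
valid while $4CB^2\lambda\|\Pm\|_{op}<1$. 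Expanding $-\tfrac12\log\det$ yields a quadratic-in-$\lambda$ term of order $B^4\lambda^2\|\Pm\|_F^2$ (the Frobenius scale one wants) together with the admissible range $\lambda\lesssim 1/(B^2\|\Pm\|_{op})$ (the operator-norm scale). The same geometry underlies the alternative decoupling picture: the off-diagonal form is $\langle\Xm,\Ym\Xm'\rangle$ for an independent copy $\Xm'$, which is conditionally sub-Gaussian with parameter $B\|\Ym\Xm'\|_2$, and $\|\Ym\Xm'\|_2$ concentrates around $(\EE\|\Ym\Xm'\|_2^2)^{1/2}=\|\Ym\|_F$.

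To treat a general symmetric $\Ym=\Ym_+-\Ym_-$ I would split the centred MGF by Cauchy--Schwarz into a product of MGFs of the two semidefinite pieces, bound each by the computation above, and invoke $\|\Ym_\pm\|_F\le\|\Ym\|_F$ and $\|\Ym_\pm\|_{op}\le\|\Ym\|_{op}$. A routine sub-gamma Chernoff optimisation over $\lambda\in(0,\,c/(B^2\|\Ym\|_{op}))$ then produces the claimed bound $2\exp(-c\min\{t^2/(B^4\|\Ym\|_F^2),\,t/(B^2\|\Ym\|_{op})\})$.

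The step I expect to be genuinely delicate is the centring. In the exactly Gaussian world the sub-Gaussian constant is $\tfrac12$ and $\mathrm{Cov}(\Xm)=\Id_n$, so the linear-in-$\lambda$ term $\sim B^2\tr(\Pm)$ produced by $\log\det$ cancels the true mean $\tr(\Pm\,\mathrm{Cov}\,\Xm)$ exactly; under mere convex concentration both are controlled only up to the constant $B$, leaving a residual drift proportional to $\tr(\Pm)$ rather than $\|\Ym\|_F$. Equivalently, the difficulty is to handle the diagonal contribution $\sum_i\Ym_{ii}(X_i^2-\EE X_i^2)$ sharply even though the coordinates of $\Xm$ need not be independent. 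Following \cite{Ada2015}, I would neutralise this either by re-symmetrising with an independent copy so that only genuinely mean-zero bilinear terms survive, or by a careful accounting of the linear term -- and this is exactly the point where the dependent-coordinate statement departs from the classical Hanson--Wright proof for independent entries.
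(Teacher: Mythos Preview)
The paper does not supply its own proof of this statement: it is quoted verbatim as Theorem~2.5 of \cite{Ada2015} and used as a black box in Appendix~\ref{appendix:ripproof}. So there is no ``paper's proof'' to compare against; your proposal is an attempt to reconstruct Adamczak's argument.

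On the substance of your sketch: the Gaussian linearisation route you describe is a legitimate way to approach Hanson--Wright--type bounds, and you correctly isolate the real obstruction, namely that the linear-in-$\lambda$ term coming out of $-\tfrac12\log\det$ is $2CB^2\lambda\,\tr(\Pm)$ while the quantity you need to subtract is $\lambda\,\tr(\Pm\,\mathrm{Cov}\,\Xm)$, and these do not cancel without further work when the coordinates are dependent. Your proposed fixes (independent-copy symmetrisation, or ``careful accounting'') are stated but not carried out, so as written the argument has a gap precisely at the step you flag. It is worth noting that Adamczak's actual proof in \cite{Ada2015} takes a somewhat different line: rather than linearising with an auxiliary Gaussian, it exploits convex concentration directly on the convex $\|\Ym\|_{op}$-Lipschitz map $x\mapsto\|\Ym x\|_2$, which yields sub-Gaussian concentration of $\|\Ym\Xm\|_2$ at scale $B\|\Ym\|_{op}$ together with $\EE\|\Ym\Xm\|_2^2\lesssim B^2\|\Ym\|_F^2$; a decoupling step then reduces the quadratic form to a bilinear one that is conditionally sub-Gaussian. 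That route produces the two scales $\|\Ym\|_F$ and $\|\Ym\|_{op}$ without ever confronting the $\tr(\Pm)$ drift you encountered, which is why it handles the dependent-coordinate case cleanly.
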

Note that a random variable with such a mixed tail behavior is 
especially sub-exponential. This can be seen by bounding its moments. 
Let $Z$ be a random variable with
\beq
\PP(|Z|>t)\leq2\exp\left(-c\min\left(\frac{t^2}{B^4\|\Ym\|_F^2},\frac{t}{B^2\|\Ym\|_{op}}\right)\right)
\eeq
Since $\|\Ym\|_{op} \leq \|\Ym\|_F$, we have
$\PP(|Z|>t) \leq 2\exp(-c\min(x(t)^2,x(t)))$ for $x(t) = \frac{t}{B^2\|\Ym\|_F}$.
It follows
\begin{equation}\begin{split}
    &\EE [ |Z|^p ]
    = \int_0^\infty \PP(|Z|^p>u)\mathrm{d}u
    = p\int_0^\infty \PP(|Z|>t) t^{p-1}\mathrm{d}t \\
    &\leq 2p(B^2\|Y\|_{op})^p\left(\int_0^1e^{-x^2}x^{p-1}\mathrm{d}x
    + \int_1^\infty e^{-x}x^{p-1} \mathrm{d}x\right) \\
    &\leq 2p(B^2\|Y\|_{op})^p\left(\Gamma(p/2) + \Gamma(p)\right) \\
    &\leq 4p(B^2\|Y\|_{op})^p\Gamma(p)\leq 4p(pB^2\|Y\|_{op})^p
  \end{split}\end{equation}
where $\Gamma(\cdot)$ is the Gamma function. So
\beq
(\EE[ |Z|^p])^\frac{1}{p} \leq (4p)^\frac{1}{p}pB^2\|\Ym\|_{op} \leq cpB^2\|\Ym\|_{op}
\label{eq:subexp:momentbound}
\eeq
where $c = 4e^{1/e}$. \eqref{eq:subexp:momentbound} is equivalent to $\|Z\|_{\psi_1} \leq cB^2\|\Ym\|_{op}$
by elementary properties of sub-exponential
random variables, e.g.  \cite[Proposition 2.7.1]{vershynin_2018}.

The convex concentration property was introduced in Definition \ref{def:convex_concentration}.
In our case the pilots
$\av_k\in\CC^{\dimPilots}$ are distributed uniformly on the complex $\dimPilots$-dimensional
sphere of radius $\dimPilots$, therefore the real versions
$\av_k^R\in\RR^{2\dimPilots}$ are distributed uniformly on the sphere of radius $2\dimPilots$.
A classical result states that a spherical random variable
$\Xm \sim \text{Unif}(\sqrt{n}S^{n-1})$
has the even stronger (non-convex) concentration property
(e.g. \cite[Theorem 5.1.4]{vershynin_2018}):
\begin{theorem}[Concentration on the Sphere]
    \label{thm:sphere}
    Let $\Xm\sim\textup{Unif}(\sqrt{n}S^{n-1})$ be uniformly distributed
    on the Euclidean sphere of radius $\sqrt{n}$. Then there is an universal
    constant $c>0$, such that for every K-Lipschitz
    function $f:\sqrt{n}S^{n-1}\to\RR$
    \beq
\PP(|f(\Xm) - \EE [f(\Xm)]|>t ) \leq 2\exp(-ct^2/K)
    \eeq
    \hfill $\square$
\end{theorem}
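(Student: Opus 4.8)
The plan is to reduce the statement to the classical dimension-free concentration of Lipschitz functions under the uniform measure on the sphere, whose engine is an isoperimetric inequality. First I would rescale away the Lipschitz constant: since $f$ is $K$-Lipschitz, $h := f/K$ is $1$-Lipschitz, and a bound of the form $\PP(|h-\EE h|>s)\le 2\exp(-cs^2)$ yields the claim with $s=t/K$, producing the exponent $-ct^2/K^2$ (so the assertion is really the $1$-Lipschitz case, up to the normalization by $K$; the displayed $-ct^2/K$ should read $K^2$). Next I would pass to the unit sphere by writing $\Xm=\sqrt{n}\,\thetav$ with $\thetav\sim\text{Unif}(S^{n-1})$ and setting $g(\thetav):=h(\sqrt{n}\,\thetav)=f(\Xm)/K$; then $g$ is $L$-Lipschitz on $S^{n-1}$ with $L=\sqrt{n}$. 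The reason for working on the sphere of radius $\sqrt{n}$ is exactly that the unit-sphere concentration rate $\exp(-c\,n\,s^2/L^2)$ carries an explicit factor $n$ that is cancelled by $L^2=n$, leaving a dimension-free $\exp(-cs^2)$.

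The key geometric input I would invoke is L\'evy's spherical isoperimetric inequality: among all measurable subsets of $S^{n-1}$ of a prescribed normalized measure, geodesic caps have the smallest $s$-neighborhoods. Specializing to a cap of measure $1/2$ gives the blow-up estimate that any $A\subseteq S^{n-1}$ with $\sigma(A)\ge 1/2$ satisfies $\sigma(A_s)\ge 1-2\exp(-c\,n\,s^2/2)$, where $A_s$ is the geodesic $s$-neighborhood and $\sigma$ is the normalized surface measure. I would apply this to the sublevel set $A=\{g\le M_g\}$ and the superlevel set $\{g\ge M_g\}$, with $M_g$ a median of $g$. Since $g$ is $L$-Lipschitz, every point within geodesic distance $s$ of $\{g\le M_g\}$ satisfies $g\le M_g+Ls$, so the blow-up estimate gives $\PP(g>M_g+Ls)\le 2\exp(-c\,n\,s^2/2)$, and symmetrically on the other side. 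Writing $u=Ls$ with $L=\sqrt{n}$ converts this into concentration about the median, $\PP(|g-M_g|>u)\le 4\exp(-c\,u^2/2)$, which is dimension-free.

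Finally I would convert the median to the mean: integrating the tail gives $|\EE g-M_g|\le \EE|g-M_g|=\int_0^\infty \PP(|g-M_g|>u)\,du\le C_0$ for an absolute constant $C_0$, and combining this shift with the median bound (shrinking $c$ to absorb $C_0$) yields $\PP(|g-\EE g|>u)\le 2\exp(-c\,u^2)$; undoing the two rescalings gives the stated inequality. The main obstacle is the isoperimetric inequality itself: it is the single nonelementary ingredient, the standard proofs resting on spherical symmetrization or two-point rearrangement. An alternative I would seriously consider, which also removes the median-to-mean step, is to establish a dimension-free logarithmic Sobolev inequality on $\sqrt{n}\,S^{n-1}$ via the Bakry--\'Emery curvature criterion: the Ricci curvature of the radius-$\sqrt{n}$ sphere equals $(n-1)/n\ge 1/2$, bounded below independently of $n$, which furnishes a log-Sobolev constant of order one, and Herbst's argument then bounds $\EE\exp(\lambda(g-\EE g))$ directly and yields the sub-Gaussian tail about the mean in one stroke. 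In that route the uniform curvature lower bound plays the role of the hard input in place of isoperimetry.
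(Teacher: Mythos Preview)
Your proposal is correct and follows one of the standard textbook routes to spherical concentration. However, the paper does not actually prove this theorem at all: it simply quotes it as a known result, citing \cite[Theorem~5.1.4]{vershynin_2018}. So there is nothing in the paper's own argument to compare against beyond the citation.

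A few remarks on your write-up. First, you correctly spot the typo: the exponent should be $-ct^2/K^2$, not $-ct^2/K$; the paper's statement is slightly miswritten in this respect, and your rescaling argument makes clear why $K^2$ is the right dependence. Second, your two proposed engines---L\'evy's isoperimetric inequality followed by a median-to-mean swap, or alternatively a uniform Ricci lower bound on $\sqrt{n}\,S^{n-1}$ feeding Bakry--\'Emery and Herbst---are both valid and are precisely the arguments one finds in standard references. The curvature route is arguably cleaner here because, as you note, it delivers concentration about the mean directly and the dimension-free constant falls out of the observation that $\mathrm{Ric}(\sqrt{n}\,S^{n-1}) = (n-1)/n \ge 1/2$. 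Either approach is acceptable; the paper evidently regards this as background and defers entirely to Vershynin.
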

So in particular the columns $\av_k^R$ have the convex concentration property
with some constant $c>0$, independent of the dimension and
it follows by \eqref{eq:real_marginal} and Theorem $\ref{thm:hanson-wright-convex}$ applied to
$\Xm = \av_k^R$ and $\Ym = \Qm_\uv$ 
that the marginals of $\langle \AmatrixR_{:,k},\uv\rangle$ uniformly
satisfy the tail bound of the Hanson-Wright inequality.
As shown in \eqref{eq:subexp:momentbound},
this implies that the columns of $\AmatrixR$ are sub-exponential
with 
\beq
\|\AmatrixR_{:,k}\|_{\psi_1}
= \max_{\uv\in S^{2\dimPilots(\dimPilots-1) - 1}}\|\langle\AmatrixR_{:,k},\uv\rangle\|_{\psi_1} \leq C
\eeq
for some universal constant $C>0$.
It remains to show the tail bound property \eqref{eq:tail_bound}.
From the calculation in \eqref{eq:norm_calc} we see that the column norms of $\AmatrixR$
differ only in the term $f(\av_k) := \sum_{i=1}^L|a_{k,i}|^4$, which is a $2\sqrt{L}$-Lipschitz
function of the random vector $\av_k$ which satisfies the concentration property 
in Theorem \ref{thm:sphere}. Substituting $t=tm$ and $K=2\sqrt{L}$ in Theorem \ref{thm:sphere} gives the desired
tail bound.
With this we can apply Theorem
\ref{thm:rip} and together with Theorems \ref{thm:hanson-wright-convex} and \ref{thm:sphere} it follows
that $\AmatrixR$, as defined in \eqref{eq:a_real}, has RIP
of order $2s$ with RIP constant $\delta_{2s}(\AmatrixR/\sqrt{m})< \delta$
as long as
\begin{equation}
    2s \leq C_\delta\frac{m}{\log^2(eK_\text{tot}/m)}.
\end{equation}
Then, for the complex valued $\mathring{\bA}$ it holds that
\beq
\left\Vert\frac{\mathring{\bA}\xv}{\sqrt{L(L-1)}}\right\Vert_2
= \left\Vert\frac{\sqrt{2}[\Re(\mathring{\bA});\Im(\mathring{\bA})]\xv}{\sqrt{2L(L-1)}}\right\Vert_2
= \left\Vert\frac{\bA^R\xv}{\sqrt{m}}\right\Vert_2
\eeq
for any $\xv\in\RR^{K_\text{tot}}$ and therefore the RIP of $\AmatrixR/\sqrt{m}$ implies the RIP of
$\mathring{\bA}/\sqrt{L(L-1)}$ with the same constants, which concludes the proof of Theorem \ref{thm:rip_main}.
Note that the constraint $\sqrt{m}>c^\prime \log 4N$ is naturally fulfilled when $m$ is
large enough since an exponential scaling of $N$ would makes the achievable sparsity
$s$ go to zero.

\section{Proof of the Recovery Guarantee for NNLS, Theorem \ref{NNLS-theorem}}
\label{appendix:nnlsproof}

Throughout this section let $\gammam^*$ denote the NNLS estimate
\beq
\gammam^* = \argmin_{\gammam \in \RR_+^\dimParam}\|\bA\gammam - \wv \|_2^2
\label{eq:nnls_estimator}
\eeq
as introduced in Section \ref{sec:nnls}, where
$\bA$ is the $\dimPilots^2 \times K_\text{tot}$ matrix whose $k$-th column is given
by $\vec(\bfa_k\bfa_k^\herm)$ and
\begin{equation}
    \wv=\text{vec}(\SigmaEmp - \sigma^2\Id_{\dimPilots}),
    \label{eq:s_def}
\end{equation}
where $\SigmaEmp$ is assumed to be the empirical covariance matrix \eqref{eq:samp_cov} of $M$ iid
samples from a Gaussian distribution $\mathcal{CN}(0,\Sigmam_\yv)$ with covariance matrix
\beq
\Sigmam_\yv = \sum_{i=1}^{K_\text{tot}}\gamma_k^\circ\av_k\av_k^\herm + \sigma^2\Id_{\dimPilots}
\eeq
where  $\bxtrue=(\xtrue_1,\dots,\xtrue_\dimParam)\in \bR_+^{K_\text{tot}}$
is the true (unknown) activity pattern. So $\wv$ can be expressed as
\beq
    \wv = \bA\gammam^\circ + \dv
\eeq
for $\dv:=\text{vec}(\Sigmam_\yv - \SigmaEmp)$.
Let us introduce some notation.
\begin{definition}[Robust NSP (4.21 in \cite{Foucart2013})]
    $\bA\in\CC^{\dimPilots^2\times K_\text{tot}}$ is said to satisfy the
    robust $\ell_q$ nullspace property (NSP) of
    order $s$ with parameters $0<\rho<1$ and $\tau >0$ if
    \begin{equation}
        \lVert \vc{v}_S \rVert_{\ellp{q}} \leq \frac{\rho}{s^{1-1/q}} \lVert
        \vc{v}_{\bar{S}} \rVert_{\ellp{1}} + \tau \left\lVert \Amatrix\vc{v} \right\rVert_{\ellp{2}}
        \quad \forall \vc{v}\in\mathbb{R}^\dimParam
        \label{eq:short:def:nsp}
    \end{equation}
    holds for all subsets $S\subset[\dimParam]$ with $|S|\leq s$.
    The set $\bar{S}$ denotes here the complement of $S$ in $[\dimParam]$.
\end{definition}
Furthermore let the $\ell_1$-error of
the best $s$-sparse approximation to $\bxtrue$ be denoted as:
\begin{equation}
  \sigma_s(\bxtrue)_{\ellp{1}}=\min_{\lVert \xsome\rVert_0\leq s}\lVert \bxtrue-\xsome\rVert_{\ellp{1}}
\end{equation}
If $\bxtrue$ is assumed to actually be $s$-sparse, then we obviously have
$\sigma_s(\bxtrue)_1 = 0$.
The statement of Theorem \ref{NNLS-theorem} will be an immediate consequence
of the following theorem:
\begin{theorem}
    \label{thm:nsp:bound1}
    If $\bA\in\CC^{\dimPilots^2\times K_\text{tot}}$
    has the robust $\ell_2$ NSP of order $s$ with constants
    $\tau>0$ and $\rho \in (0,1)$ and there exists a $\tv \in \CC^{K_\text{tot}}$,
    such that $\mathbf{1} = \bA^\herm\tv$, where $\mathbf{1} := (1,...,1)^\top$,
    then for $p\in[1,2]$ the NNLS estimate $\gammam^*$ in \eqref{eq:nnls_estimator} satisfies
    \beq
        \|\gammam^* - \gammam^\circ\|_p
    \leq \frac{2C \sigma_s(\gammam^\circ)_{\ellp{1}}}{s^{1-1/p}}
          + \frac{2D}{s^{\frac{1}{2} -\frac{1}{p}}}
          \left(\tau+\frac{\|\tv\|_2}{s^{\frac{1}{2}}}\right)\|\vc{d}\|_{\ellp{2}}
    \eeq
    with
    $C \coloneqq \frac{( 1 + \rho )^2}{1 - \rho}$,
    $D = \frac{(3 + \rho)}{1 - \rho}$ and
    $
    \dv = \text{vec}(\Sigmam_\yv - \SigmaEmp)
    $
    \hfill $\square$
\end{theorem}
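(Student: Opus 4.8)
The plan is to obtain the bound from the generic robust null-space-property (NSP) recovery estimate of \cite{Foucart2013}, feeding it two inputs that are specific to NNLS. Abbreviating $\vv := \gammam^* - \gammam^\circ$, the standard consequence of the robust $\ell_2$ NSP of order $s$ with constants $\rho\in(0,1)$ and $\tau>0$ is that, for every pair $\xv,\zv\in\RR^{K_\text{tot}}$ and every $p\in[1,2]$,
\beq
\|\zv-\xv\|_p \leq \frac{C}{s^{1-1/p}}\left(\|\zv\|_1 - \|\xv\|_1 + 2\sigma_s(\xv)_1\right) + D\tau\, s^{1/p-1/2}\|\bA(\zv-\xv)\|_2,
\label{eq:plan:generic}
\eeq
with $C=(1+\rho)^2/(1-\rho)$ and $D=(3+\rho)/(1-\rho)$ exactly as claimed. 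Choosing $\xv=\gammam^\circ$ and $\zv=\gammam^*$, the remaining task is to bound the two quantities $\|\bA\vv\|_2$ and $\|\gammam^*\|_1-\|\gammam^\circ\|_1$ by multiples of $\|\dv\|_2$.

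First I would bound the measurement residual using optimality of the NNLS minimizer. Since $\gammam^\circ\in\RR_+^{K_\text{tot}}$ is feasible for \eqref{eq:nnls_estimator} and $\wv=\bA\gammam^\circ+\dv$, minimality of $\gammam^*$ gives $\|\bA\gammam^*-\wv\|_2 \leq \|\bA\gammam^\circ-\wv\|_2 = \|\dv\|_2$; the triangle inequality then yields $\|\bA\vv\|_2 \leq \|\bA\gammam^*-\wv\|_2 + \|\bA\gammam^\circ-\wv\|_2 \leq 2\|\dv\|_2$.

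The decisive step is to control the $\ell_1$-gap $\|\gammam^*\|_1-\|\gammam^\circ\|_1$, which for $\ell_1$-minimization would be nonpositive by construction, but here must be supplied by the $\mathcal{M}^+$-certificate $\tv$. Because both vectors are nonnegative, $\|\gammam\|_1=\mathbf{1}^\top\gammam$, so $\|\gammam^*\|_1-\|\gammam^\circ\|_1=\mathbf{1}^\top\vv$. Writing $\mathbf{1}=\bA^\herm\tv$, equivalently $\mathbf{1}^\top=\tv^\herm\bA$, and applying Cauchy--Schwarz gives $\mathbf{1}^\top\vv=\Re(\tv^\herm\bA\vv)\leq\|\tv\|_2\|\bA\vv\|_2\leq 2\|\tv\|_2\|\dv\|_2$. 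Substituting both estimates into \eqref{eq:plan:generic}, the three resulting contributions are $2C\sigma_s(\gammam^\circ)_1/s^{1-1/p}$, the term $2C\|\tv\|_2\|\dv\|_2/s^{1-1/p}$ coming from the $\ell_1$-gap, and $2D\tau\, s^{1/p-1/2}\|\dv\|_2$ from the residual; rewriting $s^{-(1-1/p)}=s^{1/p-1/2}s^{-1/2}$ and using $C\leq D$ for $\rho\in(0,1)$ (since $(1+\rho)^2\leq 3+\rho$ there) collapses these exactly into the asserted right-hand side. The only genuinely nonroutine ingredient is the certificate argument above: it replaces the free inequality $\|\zv\|_1\leq\|\xv\|_1$ enjoyed by $\ell_1$-minimization and is precisely the mechanism by which the $\mathcal{M}^+$-criterion lets NNLS inherit sparse-recovery guarantees, while the rest is exponent bookkeeping.
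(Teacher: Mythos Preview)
Your proposal is correct and follows essentially the same route as the paper's proof: invoke the generic NSP consequence \cite[Theorem~4.25]{Foucart2013}, bound the $\ell_1$-gap via the $\mathcal{M}^+$-certificate $\mathbf{1}=\bA^\herm\tv$ and Cauchy--Schwarz, bound the residual $\|\bA(\gammam^*-\gammam^\circ)\|_2\leq 2\|\dv\|_2$ by optimality of the NNLS solution plus the triangle inequality, and finish with $C\leq D$. The only cosmetic difference is the order in which you apply the two bounds; the paper first controls the $\ell_1$-gap by $\|\tv\|_2\|\bA(\xv-\zv)\|_2$ and only afterwards bounds $\|\bA(\xv-\zv)\|_2$, whereas you do the latter first and chain the former through it, but the resulting constants are identical.
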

\begin{proof}This proof is adapted from \cite{kueng2016robust} to our setting.
  First, we will need some implications which follow from the
  NSP \cite[Theorem 4.25]{Foucart2013}.
  Assume that $\bA$ satisfies the robust NSP as stated in the theorem.
  Then, for any $p \in [1,2]$ and for
  all $\vc{x},\vc{z} \in \mathbb{R}^{K_\text{tot}}$,
  \begin{align}
    \|\vc{x} - \vc{z}\|_{\ellp{p}} 
    &\leq \frac{C}{s^{1-1/p}} (\|\vc{x}\|_{\ellp{1}} - \|\vc{z}\|_{\ellp{1}} + 2 \sigma_s(\vc{x})_{\ellp{1}}) \nonumber\\
    &+ D\tau s^{1/p - 1/2} \|\Amatrix (\vc{x}-\vc{z})\|_{\ellp{2}}
    \label{eq:rahut_4.25}
  \end{align}
  holds, with $C,D$ as defined in the statement of the theorem.
  If $\vc{x},\vc{z}\geq 0$
  are non-negative and there exists $\vc{t}$ such that
  $\vc{1}=\Amatrix^\herm\vc{t}$ we  use:
  \begin{equation}
    \begin{split}
      \|\vc{x}\|_{\ellp{1}} - \|\vc{z}\|_{\ellp{1}}
      &=\langle\vc{1},\vc{x}-\vc{z}\rangle
      =\langle\Amatrix^\herm\vc{t},\vc{x}-\vc{z}\rangle\\
      &=\langle\vc{t},\Amatrix(\vc{x}-\vc{z})\rangle
      \leq\|\vc{t}\|_{\ellp{2}}\|\Amatrix(\vc{x}-\vc{z})\|_{\ellp{2}}
    \end{split}
  \end{equation}
  where we have used Cauchy-Schwarz inequality (note that $\langle\vc{t},\Amatrix(\vc{x}-\vc{z})\rangle$ is real).
  So inequality \eqref{eq:rahut_4.25} implies:
  \begin{align}
    \|&\vc{x} - \vc{z}\|_{\ellp{p}} \\
      &\leq \frac{2C \sigma_s(\vc{x})_{\ellp{1}}}{s^{1-1/p}}
        + \left(D\tau+\frac{C\cdot\|\vc{t}\|_{\ellp{2}}}{s^{1/2}}\right)
        \frac{\|\Amatrix (\vc{x}-\vc{z})\|_{\ellp{2}}}{s^{\frac{1}{2} -\frac{1}{p}}}
  \end{align}
  Now, lets take   $\vc{y}=\Amatrix\vc{x}+\vc{d}$. Since
  $\|\Amatrix (\vc{x}-\vc{z})\|_{\ellp{2}}
  \leq\|\Amatrix\vc{z}-\vc{y}\|_{\ellp{2}}+\|\vc{d}\|_{\ellp{2}}$
  we get for
  all non-negative $\vc{z}$ and $\vc{x}$:
  \begin{align}
    \|&\vc{x} - \vc{z}\|_{\ellp{p}} \\
      &= \frac{2C \sigma_s(\vc{x})_{\ellp{1}}}{s^{1-1/p}}
        + \left(D\tau+\frac{C\cdot\|\vc{t}\|_{\ellp{2}}}{s^{1/2}}\right)
        \frac{\|\Amatrix\vc{z}-\vc{y}\|_{\ellp{2}}+\|\vc{d}\|_{\ellp{2}}}{s^{\frac{1}{2} -\frac{1}{p}}}
  \end{align}
  Now take $\zv=\gammam^*$ and $\xv = \gammam^\circ$, then $\yv=\wv$ (see \eqref{eq:s_def}).
  Since $\gammam^\circ \in \RR^{K_\text{tot}}_+$ is itself is a feasible point of the
  minimization we have $\min_{\gammam\in\RR^{K_\text{tot}}_+}\|\bA\gammam-\bv\|_2 \leq \|\dv\|_2$,
  yielding:
  \begin{align}
      \|&\gammam^* - \gammam^\circ\|_{\ellp{p}}
        \leq \frac{2C \sigma_s(\gammam^\circ)_{\ellp{1}}}{s^{1-1/p}}
        + 2\left(D\tau+\frac{C\cdot\|\vc{t}\|_{\ellp{2}}}{s^{1/2}}\right)
        \frac{\|\vc{d}\|_{\ellp{2}}}{s^{\frac{1}{2} -\frac{1}{p}}}
  \end{align}    
  It is easily checked that $C\leq D$ for $\rho \in (0,1)$, which gives the result.
\end{proof}

In our case we choose
$\vc{t}=t\cdot\text{vec}(\Id_{\dimPilots})\in\mathbb{R}^{\dimPilots^2}$
with some $t>0$. Let $\bA_k$ be the $k$-th column of $\bA$.
It holds that
\beq
\bA_k^\herm\text{vec}(\Id_{\dimPilots}) = \tr(\av_k\av_k^\herm) = \|\av_k\|_2^2.
\eeq
Using the normalization of the pilots
$\lVert \avec_k\rVert_{\ellp{2}}^2=\dimPilots$, we get that:
\begin{equation*}
    \Amatrix^\herm\vc{t}=t\dimPilots\cdot\mathbf{1}
\end{equation*}
so $t=1/\dimPilots$, and therefore
$\lVert \vc{t}\rVert^2_{\ellp{2}}=1/\dimPilots$ gives the desired condition
$\bA^\herm\tv = \mathbf{1}$.
Before we can make use of Theorem \ref{thm:nsp:bound1} it remains to show that $\bA$ has
the robust $\ell_2$-NSP with high probability. 
To this end, we will
restrict to those measurements which are related to the isotropic part
of $\bA$.
Let $\mathring{\bA}$ be the centered version of $\bA$ defined in \eqref{eq:a_centered}.
Now it is
easy to check (revert the vectorization) that this special structure gives us the inequality:
\begin{equation}
  \lVert\Amatrix\vc{v}\rVert^2_{\ellp{2}}
  =\lVert\mathring{\bA}\vc{v}\rVert^2_{\ellp{2}}+\|\bA^\diag\vv\|_2^2
  \geq\lVert\mathring{\bA}\vc{v}\rVert^2_{\ellp{2}}
  \label{eq:thm:short:Av:lowerbound}
\end{equation}
where $\bA^\diag\in\CC^{L\times K_\text{tot}}$ is defined as the non-isotropic part of $\bA$ with its $k$-th column
defined by $\bA^\diag_{:,k}=\text{vec}(\diag(\av_k\av_k^\herm))$.
This shows that if $\mathring{\bA}$ has the $\ell_2$-NSP of order $s$ with constants
$\tau$ and $\rho$, then so does $\bA$, since
\beq
\begin{split}
\lVert \vc{v}_S \rVert_{\ellp{2}}
&= \frac{\rho}{\sqrt{s}} \lVert
      \vc{v}_{\bar{S}} \rVert_{\ellp{1}} + \tau\left\lVert \mathring{\bA}\vc{v} \right\rVert_{\ellp{2}}\\
      &\leq
      \frac{\rho}{\sqrt{s}} \lVert
      \vc{v}_{\bar{S}} \rVert_{\ellp{1}} + \tau\left\lVert \bA\vc{v} \right\rVert_{\ellp{2}}
\end{split}
\label{eq:nsp_centered}
\eeq
holds for all subsets $S \subset [K_\text{tot}]$ with $|S|\leq s$.
It is well-known that the robust $\ell_2$-NSP of order $s$ is implied by the RIP of order $2s$
with sufficiently small constants \cite{Foucart2013}.
The following theorem specifies how RIP is related to the $\ell_2$-NSP
\begin{theorem}
    \label{thm:rip_nsp}
If $\mathring{\bA}$ has RIP of order $2s$ with a constant bound as
$\delta_{2s}(\mathring{\bA}/\sqrt{L(L-1)})\leq \delta<4/\sqrt{41}\approx0.62$ then
$\mathring{\bA}/\sqrt{L(L-1)}$ has the robust $\ell_2$-NSP of
order $s$ with parameters $\rho$ and $\tau'$
with $\rho\leq\delta/(\sqrt{1-\delta^2}-\delta/4)$ and
$\tau'\leq\sqrt{1+\delta}/(\sqrt{1-\delta^2}-\delta/4)$.\\
Furthermore 
$\mathring{\bA}$ has the robust $\ell_2$-NSP of
order $s$ with parameters $\rho$ and $\tau=\tau'/\sqrt{L(L-1)}$.
    \hfill $\square$
\end{theorem}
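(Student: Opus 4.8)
This is precisely the standard implication ``RIP of order $2s$ $\Rightarrow$ $\ell_2$-robust null space property (NSP) of order $s$'' with sharp constants, i.e. \cite[Theorem~6.13]{Foucart2013}, applied to the normalized centered matrix $\Mm := \mathring{\bA}/\sqrt{\dimPilots(\dimPilots-1)}$ (not to be confused with the antenna number $M$), which by Theorem~\ref{thm:rip_main} satisfies $\delta_{2s}(\Mm)\le\delta<4/\sqrt{41}$. My plan is therefore to establish the robust $\ell_2$-NSP for $\Mm$ with the stated $\rho,\tau'$, and then read off the claim for $\mathring{\bA}$ by a trivial rescaling: since $\Mm\vv = \mathring{\bA}\vv/\sqrt{\dimPilots(\dimPilots-1)}$, any inequality $\|\vv_S\|_2 \le \rho\, s^{-1/2}\|\vv_{\bar S}\|_1 + \tau'\|\Mm\vv\|_2$ is identical to $\|\vv_S\|_2 \le \rho\, s^{-1/2}\|\vv_{\bar S}\|_1 + (\tau'/\sqrt{\dimPilots(\dimPilots-1)})\,\|\mathring{\bA}\vv\|_2$, which is exactly the ``furthermore'' part with $\tau = \tau'/\sqrt{\dimPilots(\dimPilots-1)}$.

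For the core implication I would fix an arbitrary $\vv\in\RR^{K_\text{tot}}$ and an index set $S$ with $|S|\le s$, and exploit the two familiar consequences of the RIP: $\|\Mm\uv\|_2^2 \ge (1-\delta)\|\uv\|_2^2$ for every $\uv$ that is at most $2s$-sparse, and $|\langle \Mm\uv,\Mm\uv'\rangle| \le \delta\|\uv\|_2\|\uv'\|_2$ whenever $\uv,\uv'$ have disjoint supports of total size at most $2s$. Writing $T_0 := S$, I would sort the entries of $\vv$ on $\bar S$ by decreasing magnitude and cut $\bar S$ into consecutive blocks $T_1,T_2,\dots$ of size $s$; the elementary shifting inequality $\|\vv_{T_{j}}\|_2 \le s^{-1/2}\|\vv_{T_{j-1}}\|_1$ then gives $\sum_{j\ge 2}\|\vv_{T_j}\|_2 \le s^{-1/2}\|\vv_{\bar S}\|_1$, the quantity that will carry the $\rho$-coefficient. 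Setting $\wv := \vv_{T_0\cup T_1}$ (at most $2s$-sparse) and expanding $\langle \Mm\wv,\Mm\vv\rangle = \|\Mm\wv\|_2^2 + \sum_{j\ge 2}\langle\Mm\wv,\Mm\vv_{T_j}\rangle$, I would split $\wv=\vv_{T_0}+\vv_{T_1}$ so that each cross term is a legitimate $2s$-pair and is bounded by $\delta(\|\vv_{T_0}\|_2+\|\vv_{T_1}\|_2)\|\vv_{T_j}\|_2$; combined with $\|\Mm\wv\|_2^2\ge(1-\delta)\|\wv\|_2^2$ and $\langle\Mm\wv,\Mm\vv\rangle\le\sqrt{1+\delta}\,\|\wv\|_2\|\Mm\vv\|_2$, and with $\|\vv_S\|_2\le\|\wv\|_2$, this yields an inequality of the robust-NSP form $\|\vv_S\|_2 \le C_1\, s^{-1/2}\|\vv_{\bar S}\|_1 + C_2\,\|\Mm\vv\|_2$ with constants depending only on $\delta$.

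The subtle point — and the main obstacle — is obtaining the \emph{sharp} constants $\rho = \delta/(\sqrt{1-\delta^2}-\delta/4)$ and $\tau' = \sqrt{1+\delta}/(\sqrt{1-\delta^2}-\delta/4)$, and hence the threshold $\delta<4/\sqrt{41}$, which is exactly the condition making $\rho<1$. Bounding $\|\vv_{T_0}\|_2+\|\vv_{T_1}\|_2$ crudely by $\sqrt2\,\|\wv\|_2$ and using $\|\Mm\wv\|_2^2\ge(1-\delta)\|\wv\|_2^2$ only produces $\rho=\sqrt2\,\delta/(1-\delta)$, i.e. the much weaker threshold $\delta<1/(1+\sqrt2)\approx0.41$. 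To reach $4/\sqrt{41}\approx0.62$ one must treat the first tail block $T_1$ more carefully, decomposing $\vv_{T_1}$ as a convex combination of sparse vectors (the ``sparse representation of a polytope''/averaging lemma underlying \cite[Lemma~6.14]{Foucart2013}); this is precisely what improves the factor to $\delta/4$ and replaces $1-\delta$ by $\sqrt{1-\delta^2}$ in the denominator. Since the resulting constants coincide verbatim with \cite[Theorem~6.13]{Foucart2013}, I would carry out this refined block estimate following that reference to identify $C_1=\rho$ and $C_2=\tau'$, and then conclude, the second assertion of the theorem being the rescaling already recorded in the first paragraph.
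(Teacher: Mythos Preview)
Your proposal is correct and matches the paper's approach: the paper likewise just cites \cite[Theorem~6.13]{Foucart2013} for the RIP\,$\Rightarrow$\,robust $\ell_2$-NSP implication with those exact constants, and then records the trivial rescaling $\tau'\|\Mm\vv\|_2 = (\tau'/\sqrt{L(L-1)})\|\mathring{\bA}\vv\|_2$ for the second assertion. Your additional sketch of why the sharp constants require the sparse-representation refinement (\cite[Lemma~6.14]{Foucart2013}) is accurate but goes beyond what the paper spells out.
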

\begin{proof}
    The first part is shown in \cite[Theorem 6.13]{Foucart2013}. The last statements
    follows immediately from
    \begin{equation}
        \begin{split}
            \tau \dimPilots\left\lVert \frac{1}{\dimPilots}\mathring{\bA}\vc{v} \right\rVert_{\ellp{2}}
            = \tau\left\lVert \mathring{\bA}\vc{v} \right\rVert_{\ellp{2}}
        \end{split}
    \end{equation} 
\end{proof}

Theorem \ref{thm:rip_main} establishes the RIP of $\mathring{\bA}/\sqrt{\dimPilots(\dimPilots-1)}$
under the assumptions of \ref{NNLS-theorem}.
If we fix $\delta <  4/\sqrt{41}$, Theorem \ref{thm:rip_nsp}
implies the robust $\ell_2$-NSP of order $s$ for $\mathring{\bA}$ with explicit bounds on $\tau$ and $\rho$.
For example, $\delta=0.5$ gives
$\rho < 0.68$ and $\tau< 3/\dimPilots$.
As shown in \eqref{eq:nsp_centered} the robust $\ell_2$-NSP of $\mathring{\bA}$ implies the $\ell_2$-NSP of the uncentered version
$\bA$ of the same order with the same constants.
Finally, the application of Theorem \ref{thm:nsp:bound1} concludes the proof of Theorem \ref{NNLS-theorem}.
\section{Analysis of Error of the Sample Covariance Matrix}
\label{cov_err_app}
Let $\Sigmam_\yv\in\RR^{L\times L}$ be fixed and let $\{\bfy(t): t \in [M]\}$
be $M$ i.i.d. samples from $\mathcal{CN}(0,\Sigmam_\yv)$.
We first consider the simple case where $\Sigmam_\yv$ is diagonal, given by
$\Sigmam_\bfy=\diag(\betam)$ and let $\Deltam=\widehat{\Sigmam}_\bfy -  \Sigmam_\bfy$
be the deviation of the sample covariance matrix from its mean. Then $\|\dv\|_2 = \|\Deltam\|_F$
and the $(i,j)$-th component of $\Deltam$ is given by
\begin{align}
    \Delta_{ij}
    &=\frac{1}{M} \sum_{t \in [M]} y_i(t)y_j^*(t) - \beta_i \delta_{ij}\\
    &= \frac{\sqrt{\beta_i\beta_j}}{M}\sum_{t \in [M]}
        \left(\frac{y_i(t)}{\sqrt{\beta_i}}\frac{y_j^*(t)}{\sqrt{\beta_j}} - \delta_{ij}\right)
\end{align}
where $\delta_{ij}=\mathbb{1}_{\{i=j\}}$ denotes the discrete delta function.
Let $Y_{ij}(t):=\frac{y_i(t)}{\sqrt{\beta_i}}\frac{y_j^*(t)}{\sqrt{\beta_j}} - \delta_{ij}$.
Then 
\beq
|\Delta_{ij}|^2 = \frac{\beta_i\beta_j}{M^2}\left\vert\sum_{t=1}^M Y_{ij}(t)\right\vert^2
\label{eq:abs_delta_ij}
\eeq
Since all $Y_{ij}(t)$ are zero mean and are independent for fixed $i,j$. Therefore the
variance of their sum $\EE\left[\left\vert\sum_{t=1}^M Y_{ij}(t)\right\vert^2\right]$ is the sum
of their variances. In the following we show that $\EE[|Y_{ij}|^2] = 1$ for all $i,j$.
For $i \not = j$, we have that  
\beq
\begin{split}
    \bE[|Y_{ij}|^2]
    &=\frac{\bE[\left\vert y_i(t) y_j(t)^*\right\vert^2]}{\beta_i\beta_j}\\
    &\stackrel{(a)}{=}\frac{\bE[|y_i(t)|^2]}{\beta_i} \frac{\bE[|y_j(t)|^2]}{\beta_j}\\
    &=1,
\end{split}
    \label{cov_err_dumm_1}
\eeq
where in $(a)$ we used the independence of the different components of $\bfy(t)$.
Also, for $i=j$, we have that
\beq
\begin{split}
    \bE[|Y_{ij}|^2]
    &=\bE\left[\left\vert \frac{|y_i(t)|^2}{\beta_i}-1  \right\vert^2\right]\\
    &=\frac{\bE[|y_i(t)|^4]}{\beta_i^2} -2 \frac{\bE[|y_i(t)|^2]}{\beta_i}+ 1\\
&\stackrel{(a)}{=}2 -2+1\\
&=1,
\end{split}
    \label{cov_err_dumm_2}
\eeq
where in $(a)$ we used the identity $\bE[|y_i(t)|^4]=2\bE[|y_i(t)|^2]^2$ for complex Gaussian random variables.
Overall, from \eqref{cov_err_dumm_1} and \eqref{cov_err_dumm_2}, we can write $\bE[|\Delta_{ij}|^2]=\frac{\beta_i \beta_j}{M}$. 
Thus, we have that
\begin{align}
\bE[\|\Deltam\|_{\sfF}^2]=\sum_{ij} \bE[|\Delta_{ij}|^2] &= \frac{\sum_{i,j} \beta_i \beta_j}{M}\nonumber\\
&=  \frac{(\sum \beta_i)^2}{M} = \frac{\tr(\Sigmam_\bfy)^2}{M}.\label{exp_del}
\end{align}

To see how fast $\|\Deltam\|_{\sfF}$ concentrates around its mean, note that for fixed $i,j$
the $Y_{ij}(t)$ are
independent sub-exponential random variables with sub-exponential norm $\leq 1$
(see e.g. \cite[Lemma 2.7.7]{vershynin_2018}). Therefore, by the elemental Bernstein inequality we can
estimate that for any $\alpha >0$
\beq
\begin{split}
    \PP\left(\left\vert\sum_{t=1}^MY_{ij}(t)\right\vert^2>\alpha\right) 
    &= \PP\left(\left\vert\sum_{t=1}^MY_{ij}(t)\right\vert>\sqrt{\alpha}\right) \\
    &\leq 2\exp(-c\min\{\alpha/M,\sqrt{\alpha}\})
\end{split}
\eeq
for some universal constant $c>0$. By a union bound we can see that
\beq
\begin{split}
    &\PP\left(\min_{i,j}\left\vert\sum_{t=1}^MY_{ij}(t)\right\vert^2>\alpha\right)\\
    &\leq \binom{\dimPilots}{2}\PP\left(\left\vert\sum_{t=1}^MY_{ij}(t)\right\vert^2>\alpha\right) \\
    &\leq 2\exp\left(2\log (e\dimPilots) -c\min\{\alpha/M,\sqrt{\alpha}\}\right)
    \label{eq:frobenius_union}
\end{split}
\eeq
By choosing $\alpha$ properly we can get the following statement:
\begin{theorem}
    \label{thm:fro_bound}
    Let $\epsilon >0$
    \beq
    \|\Deltam\|_\sfF \leq \frac{\tr(\Sigmam_\yv)}{\sqrt{M}}\sqrt{\frac{\log\left(\frac{2(e\dimPilots)^2}{\epsilon}\right)}{c}}
    \eeq
    holds with probability exceeding $1-\epsilon$, if $cM > \log (2(e\dimPilots)^2/\epsilon)$,
    where $c>0$ is the constant in \eqref{eq:frobenius_union}.
    \hfill $\square$
\end{theorem}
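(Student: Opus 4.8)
The plan is to turn the second–moment identity \eqref{exp_del} into a high-probability statement by a single thresholding argument built on the uniform deviation bound \eqref{eq:frobenius_union}. Define the good event $\mathcal{G}_\alpha := \{\max_{i,j}|\sum_{t=1}^M Y_{ij}(t)|^2 \le \alpha\}$, i.e. the event controlled by the union bound leading to \eqref{eq:frobenius_union}. The reason to condition on $\mathcal{G}_\alpha$ is that it dominates \emph{every} summand of the Frobenius norm simultaneously: by \eqref{eq:abs_delta_ij}, on $\mathcal{G}_\alpha$ one has $|\Delta_{ij}|^2 = \frac{\beta_i\beta_j}{M^2}|\sum_t Y_{ij}(t)|^2 \le \frac{\beta_i\beta_j}{M^2}\alpha$, and summing over $i,j$ gives $\|\Deltam\|_\sfF^2 \le \frac{\alpha}{M^2}\sum_{i,j}\beta_i\beta_j = \frac{\alpha}{M^2}\tr(\Sigmam_\yv)^2$. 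Taking a square root, on $\mathcal{G}_\alpha$ we obtain $\|\Deltam\|_\sfF \le \frac{\sqrt{\alpha}}{M}\tr(\Sigmam_\yv)$.

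First I would fix the threshold $\alpha = \frac{M}{c}\log(2(e\dimPilots)^2/\epsilon)$, which is forced by matching $\frac{\sqrt{\alpha}}{M}\tr(\Sigmam_\yv)$ to the target bound in the statement. The only genuinely nontrivial step is then to determine which branch of $\min\{\alpha/M,\sqrt{\alpha}\}$ in \eqref{eq:frobenius_union} is active. With this $\alpha$ we have $\alpha/M = \frac{1}{c}\log(2(e\dimPilots)^2/\epsilon)$, and the inequality $\alpha/M \le \sqrt{\alpha}$ is equivalent to $\alpha \le M^2$, i.e. to $\frac{1}{c}\log(2(e\dimPilots)^2/\epsilon) \le M$. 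This is precisely the standing hypothesis $cM > \log(2(e\dimPilots)^2/\epsilon)$, so under that hypothesis the minimum equals the linear branch $\alpha/M$. This is where — and the only place where — the dimensional assumption on $M$ enters.

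With the active branch identified, substituting $c\,\alpha/M = \log(2(e\dimPilots)^2/\epsilon)$ into \eqref{eq:frobenius_union} yields $\PP(\overline{\mathcal{G}_\alpha}) \le 2\exp(2\log(e\dimPilots) - \log(2(e\dimPilots)^2/\epsilon))$. Using $2\log(e\dimPilots) = \log((e\dimPilots)^2)$, the exponent collapses to $\log(\epsilon/2)$, so $\PP(\overline{\mathcal{G}_\alpha}) \le 2\cdot\frac{\epsilon}{2} = \epsilon$. Hence $\mathcal{G}_\alpha$ holds with probability at least $1-\epsilon$, and inserting the chosen $\alpha$ into the bound from the first paragraph gives $\|\Deltam\|_\sfF \le \frac{\tr(\Sigmam_\yv)}{\sqrt{M}}\sqrt{\frac{\log(2(e\dimPilots)^2/\epsilon)}{c}}$, which is the claim.

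To remove the diagonality assumption on $\Sigmam_\yv$, I would appeal to unitary invariance: writing $\Sigmam_\yv = U\diag(\betam)U^\herm$ by eigendecomposition, the samples satisfy $\bfy(t) = U\tilde{\bfy}(t)$ with $\tilde{\bfy}(t)\sim\cg(0,\diag(\betam))$, so $\Deltam = U\tilde{\Deltam}U^\herm$ and therefore $\|\Deltam\|_\sfF = \|\tilde{\Deltam}\|_\sfF$ while $\tr(\Sigmam_\yv) = \sum_i\beta_i$ is unchanged. The general case thus reduces verbatim to the diagonal computation above. I do not anticipate any real obstacle here; essentially all of the work is the bookkeeping of the $\min$-branch, which the hypothesis $cM > \log(2(e\dimPilots)^2/\epsilon)$ is tailored to resolve.
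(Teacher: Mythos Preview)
Your proof is correct and follows essentially the same approach as the paper: choose $\alpha = \tfrac{M}{c}\log(2(e\dimPilots)^2/\epsilon)$, use the hypothesis on $M$ to force the linear branch of the $\min$ in \eqref{eq:frobenius_union}, and then convert the uniform bound on $|\sum_t Y_{ij}(t)|^2$ into a Frobenius bound via \eqref{eq:abs_delta_ij} and summation over $i,j$. Your write-up is in fact slightly cleaner than the paper's (you correctly write $\max_{i,j}$ where the paper has an evident typo $\min_{i,j}$, and you make the branch-selection algebra explicit).
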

\begin{proof}
    In \eqref{eq:frobenius_union} choose $\alpha = M\delta$ with $\delta = \log(2(e\dimPilots)^2/\epsilon)/c$.
    Then $\min\{\alpha/M,\sqrt{\alpha}\} = \min\{\delta,\sqrt{\delta M}\}$. Under the condition on $M$
    stated in the Theorem, $\min\{\delta,\sqrt{\delta M}\} = \delta$. So 
    \beq
    \begin{split}
        &\PP\left(\min_{i,j}\left\vert\sum_{t=1}^MY_{ij}(t)\right\vert^2>\delta M\right)\\
    &\leq 2\exp\left(2\log (e\dimPilots) -2\log(e\dimPilots) + \log(\epsilon/2)\}\right)\\
        &= \epsilon.
        \label{eq:union_delta}
    \end{split}
    \eeq
    and the statement of the Theorem follows from
    \beq
    \begin{split}
        &\PP\left(\|\Deltam\|_\sfF > \frac{\tr(\Sigmam_\yv)}{\sqrt{M/\delta}}\right)\\
        &=\PP\left(\|\Deltam\|_\sfF^2 > \frac{\tr(\Sigmam_\yv)^2}{M/\delta}\right)\\
        &=\PP\left(\sum_{ij}\frac{\beta_i\beta_j}{M^2}\left\vert\sum_{t=1}^MY_{ij}(t)\right\vert^2 > \frac{\tr(\Sigmam_\yv)^2}{M/\delta}\right)\\
        &\leq\PP\left(\min_{ij}\left\vert\sum_{t=1}^MY_{ij}(t)\right\vert^2>\delta M\right)\\
        &\leq \epsilon
    \end{split}
    \eeq
    where in the second equality we used \eqref{eq:abs_delta_ij} and in the last inequality we used
    \eqref{eq:union_delta}.
\end{proof}

Now, assume that the covariance matrix $\Sigmam_\bfy$ is not in a diagonal form and let
$\Sigmam_\bfy=\bfU \diag(\betam) \bfU^\herm$ be the singular value decomposition of
$\Sigmam_{\bfy}$.
By multiplying all the vectors $\bfy(t)$ by the orthogonal matrix $\bfU^\herm$ to
whiten them and noting the fact that multiplying by $\bfU^\herm$ does not change the
Frobenius norm of a matrix, we can see that the bound in Theorem \ref{thm:fro_bound},
which depends on $\Sigmam_\yv$ only through its trace, holds true in general
also for non-diagonal covariance matrices.
Finally, since in Theorem \ref{NNLS-theorem}
$\Sigmam_\yv = \sum_{k=1}^{K_\text{tot}}\gamma_k\av_k\av_k^\herm + \sigma^2\Id_L$
and the pilot sequences satisfy $\|\av_k\|_2^2 = L$,
it holds that 
\beq
\tr\left(\Sigmam_\yv\right)= \sum_{k=1}^{K_\text{tot}}\gamma_k\tr(\av_k\av_k^\herm) + \sigma^2\tr(\Id_L)
=  L\left(\|\gammam\|_1 + \sigma^2\right),
\eeq
which gives 
\eqref{eq:d_expectation} and \eqref{eq:d_bound}.
\begin{remark}
It is worthwhile to mention that although  \eqref{exp_del} was derived under the Gaussianity of the observations $\{\bfy(t): t\in [M]\}$, the result can be easily modified for general distribution of the components of $\bfy(t)$. More specifically, let us define  
\begin{align}\label{eta_bound}
\max_{i} \frac{\bE[|y_i(t)|^4]}{\bE[|y_i(t)|^2]^2}=:\varsigma < \infty.
\end{align}
Then, using \eqref{cov_err_dumm_1} and applying \eqref{eta_bound} to \eqref{cov_err_dumm_2}, we can obtain the following upper bound
\begin{align}
\bE[\|\Deltam\|_{\sfF}^2] &\leq \max\{\varsigma-1,1\} \times \frac{\sum_{i,j} \beta_i \beta_j }{M}\\
& \leq \max\{\varsigma-1,1\} \times \frac{\tr(\Sigmam_\bfy)^2}{M},
\end{align}
which is equivalent to \eqref{exp_del} up to the constant multiplicative factor $\max\{\varsigma-1,1\}$. \hfill $\lozenge$
\end{remark}

\section*{Acknowledgement} 
The authors would like to thank R. Kueng for inspiring discussions and helpful comments. P.J. is supported by DFG grant JU 2795/3.
AF is supported by the Alexander-von-Humbold foundation.

\balance 

{\small
\bibliographystyle{IEEEtran}
\bibliography{references,ref_alex}

\begin{thebibliography}{10}
\providecommand{\url}[1]{#1}
\csname url@samestyle\endcsname
\providecommand{\newblock}{\relax}
\providecommand{\bibinfo}[2]{#2}
\providecommand{\BIBentrySTDinterwordspacing}{\spaceskip=0pt\relax}
\providecommand{\BIBentryALTinterwordstretchfactor}{4}
\providecommand{\BIBentryALTinterwordspacing}{\spaceskip=\fontdimen2\font plus
\BIBentryALTinterwordstretchfactor\fontdimen3\font minus
  \fontdimen4\font\relax}
\providecommand{\BIBforeignlanguage}[2]{{%
\expandafter\ifx\csname l@#1\endcsname\relax
\typeout{** WARNING: IEEEtran.bst: No hyphenation pattern has been}%
\typeout{** loaded for the language `#1'. Using the pattern for}%
\typeout{** the default language instead.}%
\else
\language=\csname l@#1\endcsname
\fi
#2}}
\providecommand{\BIBdecl}{\relax}
\BIBdecl

\bibitem{AD:isit2018}
S.~Haghighatshoar, P.~Jung, and G.~Caire, ``Improved scaling law for activity
  detection in massive mimo systems,'' in \emph{2018 IEEE International
  Symposium on Information Theory (ISIT)}, 2018.

\bibitem{Fen2019e}
A.~Fengler, S.~Haghighatshoar, P.~Jung, and G.~Caire, ``Grant-{{Free Massive
  Random Access With}} a {{Massive MIMO Receiver}},'' \emph{Asilomar Conf.
  Signals Syst. Comput.}, Nov. 2019.

\bibitem{Tal2012a}
T.~Taleb and A.~Kunz, ``\BIBforeignlanguage{en}{Machine type communications in
  {{3GPP}} networks: Potential, challenges, and solutions},''
  \emph{\BIBforeignlanguage{en}{IEEE Commun. Mag.}}, vol.~50, no.~3, pp.
  178--184, Mar. 2012.

\bibitem{Has2013}
M.~Hasan, E.~Hossain, and D.~Niyato, ``\BIBforeignlanguage{en}{Random access
  for machine-to-machine communication in {{LTE}}-advanced networks: Issues and
  approaches},'' \emph{\BIBforeignlanguage{en}{IEEE Commun. Mag.}}, vol.~51,
  no.~6, pp. 86--93, Jun. 2013.

\bibitem{Ses2009}
S.~Sesia, I.~Toufik, and M.~Baker, \emph{{{LTE}}, {{The UMTS Long Term
  Evolution}}: {{From Theory}} to {{Practice}}}.\hskip 1em plus 0.5em minus
  0.4em\relax {Wiley Publishing}, 2009.

\bibitem{Agi2016a}
M.~Agiwal, A.~Roy, and N.~Saxena, ``Next {{Generation 5G Wireless Networks}}:
  {{A Comprehensive Survey}},'' \emph{IEEE Commun. Surv. Tutor.}, vol.~18,
  no.~3, pp. 1617--1655, 2016.

\bibitem{Pol2017}
Y.~Polyanskiy, ``A perspective on massive random-access,'' in \emph{2017 {{IEEE
  International Symposium}} on {{Information Theory}} ({{ISIT}})}, Jun. 2017,
  pp. 2523--2527.

\bibitem{Bockelmann:ETT2013}
C.~Bockelmann, ``{Compressive sensing based multi-user detection for
  machine-to-machine communication},'' \emph{Trans. on Emerging
  Telecommunications Technologies}, vol.~24, no.~4, pp. 389--400, 2013.

\bibitem{boljanovic2017user}
V.~Boljanovic, D.~Vukobratovic, P.~Popovski, and C.~Stefanovic, ``User activity
  detection in massive random access: Compressed sensing vs. coded slotted
  aloha,'' \emph{arXiv preprint arXiv:1706.09918}, 2017.

\bibitem{Sen2017}
K.~Senel and E.~G. Larsson, ``Device {{Activity}} and {{Embedded Information
  Bit Detection Using AMP}} in {{Massive MIMO}},'' in \emph{2017 {{IEEE
  Globecom Workshops}} ({{GC Wkshps}})}, Dec. 2017.

\bibitem{Liu2018c}
L.~Liu, E.~G. Larsson, W.~Yu, P.~Popovski, C.~Stefanovic, and E.~de~Carvalho,
  ``Sparse {{Signal Processing}} for {{Grant}}-{{Free Massive Connectivity}}:
  {{A Future Paradigm}} for {{Random Access Protocols}} in the {{Internet}} of
  {{Things}},'' \emph{IEEE Signal Process. Mag.}, vol.~35, no.~5, pp. 88--99,
  Sep. 2018.

\bibitem{liu2017massive}
L.~Liu and W.~Yu, ``{Massive Connectivity With Massive MIMO - Part I: Device
  Activity Detection and Channel Estimation},'' \emph{IEEE Transactions on
  Signal Processing}, vol.~66, no.~11, pp. 2933--2946, June 2018.

\bibitem{Liu2018b}
------, ``Massive {Connectivity} with {Massive} {MIMO}-{Part} {II}:
  {Achievable} {Rate} {Characterization},'' \emph{IEEE Transactions on Signal
  Processing}, vol.~66, no.~11, pp. 2947--2959, Jun. 2018.

\bibitem{tse2005fundamentals}
D.~Tse and P.~Viswanath, \emph{Fundamentals of wireless communication}.\hskip
  1em plus 0.5em minus 0.4em\relax Cambridge University Press, 2005.

\bibitem{Che2018}
Z.~Chen, F.~Sohrabi, and W.~Yu, ``Sparse {{Activity Detection}} for {{Massive
  Connectivity}},'' \emph{IEEE Trans. Signal Process.}, vol.~66, no.~7, pp.
  1890--1904, Apr. 2018.

\bibitem{Che2006}
J.~Chen and X.~Huo, ``Theoretical results on sparse representations of
  multiple-measurement vectors,'' \emph{IEEE Trans Signal Process}, pp.
  4634--4643, 2006.

\bibitem{Cot2005a}
S.~Cotter, B.~Rao, {Kjersti Engan}, and K.~{Kreutz-Delgado},
  ``\BIBforeignlanguage{en}{Sparse solutions to linear inverse problems with
  multiple measurement vectors},'' \emph{\BIBforeignlanguage{en}{IEEE Trans.
  Signal Process.}}, vol.~53, no.~7, pp. 2477--2488, Jul. 2005.

\bibitem{Kim2012}
J.~M. Kim, O.~K. Lee, S.~Member, and J.~C. Ye, ``Compressive {{MUSIC}}:
  Revisiting the link between compressive sensing and array signal
  processing,'' \emph{IEEE Trans Inf. Theory}, pp. 278--301, 2012.

\bibitem{shepard2012argos}
C.~Shepard, H.~Yu, N.~Anand, E.~Li, T.~Marzetta, R.~Yang, and L.~Zhong,
  ``Argos: Practical many-antenna base stations,'' in \emph{Proceedings of the
  18th Annual International Conference on Mobile Computing and
  Networking}.\hskip 1em plus 0.5em minus 0.4em\relax ACM, 2012, pp. 53--64.

\bibitem{Mal2017}
S.~Malkowsky, J.~Vieira, L.~Liu, P.~Harris, K.~Nieman, N.~Kundargi, I.~C. Wong,
  F.~Tufvesson, V.~{\"O}wall, and O.~Edfors, ``The {{World}}'s {{First
  Real}}-{{Time Testbed}} for {{Massive MIMO}}: {{Design}}, {{Implementation}},
  and {{Validation}},'' \emph{IEEE Access}, vol.~5, pp. 9073--9088, 2017.

\bibitem{Cho2016}
\BIBentryALTinterwordspacing
N.~Choubey and A.~Panah, ``\BIBforeignlanguage{en-US}{Introducing {Facebook}'s
  new terrestrial connectivity systems — {Terragraph} and {Project}
  {ARIES}},'' Apr. 2016. [Online]. Available:
  \url{https://engineering.fb.com/connectivity/introducing-facebook-s-new-terrestrial-connectivity-systems-terragraph-and-project-aries/}
\BIBentrySTDinterwordspacing

\bibitem{docomo}
C.~Wang, O.~Y.~Bursalioglu, H.~Papadopoulos, and G.~Caire, ``On-the-fly
  large-scale channel-gain estimation for massive antenna-array base
  stations,'' in \emph{2016 IEEE International Conference on Communications
  (ICC)}, 2018.

\bibitem{Tip2001}
M.~E. Tipping, ``Sparse bayesian learning and the relevance vector machine,''
  \emph{J. Mach. Learn. Res.}, vol.~1, pp. 211--244, Sep. 2001.

\bibitem{Fau2002}
A.~C. Faul and M.~E. Tipping, ``Analysis of {{Sparse Bayesian Learning}},'' in
  \emph{Advances in {{Neural Information Processing Systems}} 14}, T.~G.
  Dietterich, S.~Becker, and Z.~Ghahramani, Eds.\hskip 1em plus 0.5em minus
  0.4em\relax {MIT Press}, 2002, pp. 383--389.

\bibitem{Tip2003}
M.~E. Tipping, A.~Faul, J.~J.~T. Avenue, and J.~J.~T. Avenue, ``Fast {{Marginal
  Likelihood Maximisation}} for {{Sparse Bayesian Models}},'' in
  \emph{Proceedings of the {{Ninth International Workshop}} on {{Artificial
  Intelligence}} and {{Statistics}}}, 2003, pp. 3--6.

\bibitem{Pal2004}
J.~Palmer, B.~D. Rao, and D.~P. Wipf, ``Perspectives on {{Sparse Bayesian
  Learning}},'' in \emph{Advances in {{Neural Information Processing Systems}}
  16}, S.~Thrun, L.~K. Saul, and B.~Sch{\"o}lkopf, Eds.\hskip 1em plus 0.5em
  minus 0.4em\relax {MIT Press}, 2004, pp. 249--256.

\bibitem{Dem1977}
A.~P. Dempster, N.~M. Laird, and D.~B. Rubin, ``\BIBforeignlanguage{en}{Maximum
  {{Likelihood}} from {{Incomplete Data Via}} the {{{\emph{EM}}}}
  {{Algorithm}}},'' \emph{\BIBforeignlanguage{en}{Journal of the Royal
  Statistical Society: Series B (Methodological)}}, vol.~39, no.~1, pp. 1--22,
  Sep. 1977.

\bibitem{Moo1996}
T.~Moon, ``The expectation-maximization algorithm,'' \emph{IEEE Signal Process.
  Mag.}, vol.~13, no.~6, pp. 47--60, Nov. 1996.

\bibitem{Wip2004}
D.~Wipf and B.~Rao, ``\BIBforeignlanguage{en}{Sparse {{Bayesian Learning}} for
  {{Basis Selection}}},'' \emph{\BIBforeignlanguage{en}{IEEE Trans. Signal
  Process.}}, vol.~52, no.~8, pp. 2153--2164, Aug. 2004.

\bibitem{Ji2008}
S.~Ji, Y.~Xue, and L.~Carin, ``\BIBforeignlanguage{en}{Bayesian {{Compressive
  Sensing}}},'' \emph{\BIBforeignlanguage{en}{IEEE Trans. Signal Process.}},
  vol.~56, no.~6, pp. 2346--2356, Jun. 2008.

\bibitem{Wip2007}
D.~P. Wipf and B.~D. Rao, ``\BIBforeignlanguage{en}{An {{Empirical Bayesian
  Strategy}} for {{Solving}} the {{Simultaneous Sparse Approximation
  Problem}}},'' \emph{\BIBforeignlanguage{en}{IEEE Trans. Signal Process.}},
  vol.~55, no.~7, pp. 3704--3716, Jul. 2007.

\bibitem{Tan2010a}
G.~Tang and A.~Nehorai, ``Performance {{Analysis}} for {{Sparse Support
  Recovery}},'' \emph{IEEE Trans. Inf. Theory}, vol.~56, no.~3, pp. 1383--1399,
  Mar. 2010.

\bibitem{Pal2014}
P.~Pal and P.~P. Vaidyanathan, ``\BIBforeignlanguage{en}{Parameter
  identifiability in {{Sparse Bayesian Learning}}},'' in
  \emph{\BIBforeignlanguage{en}{2014 {{IEEE International Conference}} on
  {{Acoustics}}, {{Speech}} and {{Signal Processing}} ({{ICASSP}})}}.\hskip 1em
  plus 0.5em minus 0.4em\relax {Florence, Italy}: {IEEE}, May 2014, pp.
  1851--1855.

\bibitem{Pal2015}
------, ``Pushing the {{Limits}} of {{Sparse Support Recovery Using Correlation
  Information}},'' \emph{IEEE Trans. Signal Process.}, vol.~63, no.~3, pp.
  711--726, Feb. 2015.

\bibitem{Koo2018}
A.~Koochakzadeh, H.~Qiao, and P.~Pal, ``On {{Fundamental Limits}} of {{Joint
  Sparse Support Recovery Using Certain Correlation Priors}},'' \emph{IEEE
  Trans. Signal Process.}, vol.~66, no.~17, pp. 4612--4625, Sep. 2018.

\bibitem{Bal2014}
O.~Balkan, K.~{Kreutz-Delgado}, and S.~Makeig, ``Localization of {{More Sources
  Than Sensors}} via {{Jointly}}-{{Sparse Bayesian Learning}},'' \emph{IEEE
  Signal Process. Lett.}, vol.~21, no.~2, pp. 131--134, Feb. 2014.

\bibitem{Foucart2013}
S.~Foucart and H.~Rauhut, ``{A mathematical introduction to compressive
  sensing},'' \emph{Appl. Numer. Harmon. Anal. Birkh{\"{a}}user}, 2013.

\bibitem{Kha2017}
S.~Khanna and C.~R. Murthy, ``On the {Support} {Recovery} of {Jointly} {Sparse}
  {Gaussian} {Sources} using {Sparse} {Bayesian} {Learning},''
  \emph{arXiv:1703.04930}, Mar. 2017.

\bibitem{Kha2019}
------, ``Corrections to ``{{On}} the {{Restricted Isometry}} of the
  {{Columnwise Khatri}}\textendash{{Rao Product}}'','' \emph{IEEE Trans. Signal
  Process.}, vol.~67, no.~9, pp. 2387--2388, May 2019.

\bibitem{Che2020a}
Z.~Chen, F.~Sohrabi, Y.-F. Liu, and W.~Yu, ``Phase {{Transition Analysis}} for
  {{Covariance Based Massive Random Access}} with {{Massive MIMO}},''
  \emph{ArXiv200304175 Cs Math}, Mar. 2020.

\bibitem{Abe2013}
H.~Abeida, Q.~Zhang, J.~Li, and N.~Merabtine,
  ``\BIBforeignlanguage{en}{Iterative {{Sparse Asymptotic Minimum Variance
  Based Approaches}} for {{Array Processing}}},''
  \emph{\BIBforeignlanguage{en}{IEEE Trans. Signal Process.}}, vol.~61, no.~4,
  pp. 933--944, Feb. 2013.

\bibitem{Gle2014}
G.-O. Glentis, K.~Zhao, A.~Jakobsson, H.~Abeida, and J.~Li,
  ``\BIBforeignlanguage{en}{{{SAR}} imaging via efficient implementations of
  sparse {{ML}} approaches},'' \emph{\BIBforeignlanguage{en}{Signal
  Processing}}, vol.~95, pp. 15--26, Feb. 2014.

\bibitem{Yan2015}
X.~Yang, G.~Li, and Z.~Zheng, ``\BIBforeignlanguage{en}{{{DOA Estimation}} of
  {{Noncircular Signal Based}} on {{Sparse Representation}}},''
  \emph{\BIBforeignlanguage{en}{Wireless Pers Commun}}, vol.~82, no.~4, pp.
  2363--2375, Jun. 2015.

\bibitem{Ram2018}
L.~Ramesh and C.~R. Murthy, ``Sparse {{Support Recovery Via Covariance
  Estimation}},'' in \emph{2018 {{IEEE International Conference}} on
  {{Acoustics}}, {{Speech}} and {{Signal Processing}} ({{ICASSP}})}, Apr. 2018,
  pp. 6633--6637.

\bibitem{Ama2020a}
V.~K. Amalladinne, J.-F. Chamberland, and K.~R. Narayanan, ``A {{Coded
  Compressed Sensing Scheme}} for {{Unsourced Multiple Access}},'' \emph{IEEE
  Trans. Inf. Theory}, 2020.

\bibitem{Mar2016}
T.~L. Marzetta and H.~Yang, \emph{\BIBforeignlanguage{en}{Fundamentals of
  {{Massive MIMO}}}}.\hskip 1em plus 0.5em minus 0.4em\relax {Cambridge
  University Press}, Nov. 2016.

\bibitem{Zha2013}
J.~Zhang, C.~Wen, S.~Jin, X.~Gao, and K.~Wong, ``On {{Capacity}} of
  {{Large}}-{{Scale MIMO Multiple Access Channels}} with {{Distributed Sets}}
  of {{Correlated Antennas}},'' \emph{IEEE J. Sel. Areas Commun.}, vol.~31,
  no.~2, pp. 133--148, Feb. 2013.

\bibitem{Gao2015}
X.~Gao, O.~Edfors, F.~Rusek, and F.~Tufvesson,
  ``\BIBforeignlanguage{en}{Massive {{MIMO Performance Evaluation Based}} on
  {{Measured Propagation Data}}},'' \emph{\BIBforeignlanguage{en}{IEEE Trans.
  Wireless Commun.}}, vol.~14, no.~7, pp. 3899--3911, Jul. 2015.

\bibitem{sherman1950adjustment}
J.~Sherman and W.~J. Morrison, ``Adjustment of an inverse matrix corresponding
  to a change in one element of a given matrix,'' \emph{The Annals of
  Mathematical Statistics}, vol.~21, no.~1, pp. 124--127, 1950.

\bibitem{Adamczak2011}
R.~Adamczak, A.~E. Litvak, A.~Pajor, and N.~Tomczak-Jaegermann, ``{Restricted
  Isometry Property of Matrices with Independent Columns and Neighborly
  Polytopes by Random Sampling},'' \emph{Constructive Approximation}, vol.~34,
  no.~1, pp. 61--88, 2011.

\bibitem{Guedon2014:heavy:columns}
O.~Gu{\'{e}}don, A.~E. Litvak, A.~Pajor, and N.~Tomczak-Jaegermann,
  ``{Restricted isometry property for random matrices with heavy-tailed
  columns},'' \emph{Comptes Rendus Mathematique}, vol. 352, no.~5, pp.
  431--434, 2014.

\bibitem{slawski2013non}
M.~Slawski and M.~Hein, ``Non-negative least squares for high-dimensional
  linear models: Consistency and sparse recovery without regularization,''
  \emph{Electronic Journal of Statistics}, vol.~7, pp. 3004--3056, 2013.

\bibitem{kueng2016robust}
R.~Kueng and P.~Jung, ``{Robust Nonnegative Sparse Recovery and the Nullspace
  Property of 0/1 Measurements},'' \emph{IEEE Trans. Inf. Theory}, vol.~64, pp.
  689--703, 2017.

\bibitem{song2017scalable}
X.~Song, S.~Haghighatshoar, and G.~Caire, ``A {Scalable} and {Statistically}
  {Robust} {Beam} {Alignment} {Technique} for {Millimeter}-{Wave} {Systems},''
  \emph{IEEE Transactions on Wireless Communications}, vol.~17, no.~7, pp.
  4792--4805, Jul. 2018.

\bibitem{dirksen:ripgap}
S.~Dirksen, G.~Lecué, and H.~Rauhut, ``On the gap between {RIP}-properties and
  sparse recovery conditions,'' \emph{arXiv:1504.05073}, Apr. 2015.

\bibitem{poor2013introduction}
H.~V. Poor, \emph{An introduction to signal detection and estimation}.\hskip
  1em plus 0.5em minus 0.4em\relax Springer Science \& Business Media, 2013.

\bibitem{8022016}
802.11, ``{{IEEE Standard}} for {{Information}}
  technology\textemdash{{Telecommunications}} and information exchange between
  systems {{Local}} and metropolitan area networks\textemdash{{Specific}}
  requirements - {{Part}} 11: {{Wireless LAN Medium Access Control}} ({{MAC}})
  and {{Physical Layer}} ({{PHY}}) {{Specifications}},'' \emph{IEEE Std
  80211-2016 Revis. IEEE Std 80211-2012}, pp. 1--3534, Dec. 2016.

\bibitem{Kim2011}
J.~Kim, W.~Chang, B.~Jung, D.~Baron, and J.~C. Ye, ``Belief propagation for
  joint sparse recovery,'' \emph{arXiv:1102.3289}, Feb. 2011.

\bibitem{Jav2013}
A.~Javanmard and A.~Montanari, ``\BIBforeignlanguage{en}{State evolution for
  general approximate message passing algorithms, with applications to spatial
  coupling},'' \emph{\BIBforeignlanguage{en}{Inf Inference}}, vol.~2, no.~2,
  pp. 115--144, Dec. 2013.

\bibitem{Bay2011}
M.~Bayati and A.~Montanari, ``The dynamics of message passing on dense graphs,
  with applications to compressed sensing,'' \emph{IEEE Trans. Inf. Theory},
  vol.~57, no.~2, pp. 764--785, 2011.

\bibitem{Ran2011}
S.~Rangan, ``Generalized approximate message passing for estimation with random
  linear mixing,'' \emph{IEEE Int. Symp. Inf. Theory Proc.}, pp. 1--18, 2011.

\bibitem{Bar2017a}
J.~Barbier and F.~Krzakala, ``Approximate message-passing decoder and
  capacity-achieving sparse superposition codes,'' \emph{IEEE Trans. Inf.
  Theory}, vol.~63, no.~8, pp. 1--32, 2017.

\bibitem{Gre2018}
A.~Greig and R.~Venkataramanan, ``Techniques for improving the finite length
  performance of sparse superposition codes,'' \emph{IEEE Trans. Commun.},
  vol.~66, no.~3, pp. 905 -- 917, 2018.

\bibitem{Lar2012}
E.~G. Larsson and R.~Moosavi, ``Piggybacking an {{Additional Lonely Bit}} on
  {{Linearly Coded Payload Data}},'' \emph{IEEE Wirel. Commun. Lett.}, vol.~1,
  no.~4, pp. 292--295, Aug. 2012.

\bibitem{Cen2016}
M.~Centenaro, L.~Vangelista, A.~Zanella, and M.~Zorzi, ``Long-range
  communications in unlicensed bands: The rising stars in the {{IoT}} and smart
  city scenarios,'' \emph{IEEE Wirel. Commun.}, vol.~23, no.~5, pp. 60--67,
  Oct. 2016.

\bibitem{Ban2016}
D.~Bankov, E.~Khorov, and A.~Lyakhov, ``On the {{Limits}} of {{LoRaWAN Channel
  Access}},'' in \emph{2016 {{International Conference}} on {{Engineering}} and
  {{Telecommunication}} ({{EnT}})}, Nov. 2016, pp. 10--14.

\bibitem{Kau1964}
W.~Kautz and R.~Singleton, ``\BIBforeignlanguage{en}{Nonrandom binary
  superimposed codes},'' \emph{\BIBforeignlanguage{en}{IEEE Trans. Inf.
  Theory}}, vol.~10, no.~4, pp. 363--377, Oct. 1964.

\bibitem{Dya1983}
A.~Dyachkov and V.~Rykov, ``Survey of {{Superimposed Code Theory}}.''
  \emph{Problems of Control and Information Theory}, vol.~12, pp. 229--242,
  Jan. 1983.

\bibitem{Gyo2008}
S.~Gy{\H o}ri, ``\BIBforeignlanguage{en}{Coding for a multiple access {{OR}}
  channel: {{A}} survey},'' \emph{\BIBforeignlanguage{en}{Discrete Applied
  Mathematics}}, vol. 156, no.~9, pp. 1407--1430, May 2008.

\bibitem{Cha1981}
S.-C. Chang and J.~Wolf, ``On the {{T}}-user {{M}}-frequency noiseless
  multiple-access channel with and without intensity information,'' \emph{IEEE
  Trans. Inf. Theory}, vol.~27, no.~1, pp. 41--48, Jan. 1981.

\bibitem{Gra1997}
A.~J. Grant and C.~Schlegel, ``Collision-type multiple-user communications,''
  \emph{IEEE Trans. Inf. Theory}, vol.~43, no.~5, pp. 1725--1736, Sep. 1997.

\bibitem{Han1996}
A.~Han~Vinck and K.~Keuning, ``On the capacity of the asynchronous {{T}}-user
  {{M}}-frequency noiseless multiple-access channel without intensity
  information,'' \emph{IEEE Trans. Inf. Theory}, vol.~42, no.~6, pp.
  2235--2238, Nov. 1996.

\bibitem{Fen2019c}
A.~Fengler, P.~Jung, and G.~Caire, ``\BIBforeignlanguage{en}{{{SPARCs}} and
  {{AMP}} for {{Unsourced Random Access}}},'' in
  \emph{\BIBforeignlanguage{en}{{{IEEE International Symposium}} on
  {{Information Theory}} ({{ISIT}})}}, Jul. 2019, pp. 2843--2847.

\bibitem{vershynin_2018}
R.~Vershynin, \emph{High-{{Dimensional Probability}}: {{An Introduction}} with
  {{Applications}} in {{Data Science}}}, ser. Cambridge {{Series}} in
  {{Statistical}} and {{Probabilistic Mathematics}}.\hskip 1em plus 0.5em minus
  0.4em\relax {Cambridge University Press}, 2018.

\bibitem{Ada2015}
R.~Adamczak, ``A note on the {Hanson}-{Wright} inequality for random vectors
  with dependencies,'' \emph{Electron. Commun. Probab.}, vol.~20, no.~72, p. 13
  pp., 2015.

\bibitem{Rud2013}
M.~Rudelson and R.~Vershynin, ``\BIBforeignlanguage{en}{Hanson-{{Wright}}
  inequality and sub-gaussian concentration},''
  \emph{\BIBforeignlanguage{en}{Electron. Commun. Probab.}}, vol.~18, no.~82,
  2013.

\end{thebibliography}
}

\end{document}